\pgfplotsset{compat=1.9}
\newtheorem{theorem}{Theorem}
\newtheorem{lemma}[theorem]{Lemma}
\newtheorem{corollary}[theorem]{Corollary}
\newtheorem{proposition}[theorem]{Proposition}
\newtheorem{conjecture}[theorem]{Conjecture}
\newtheorem{definition}[theorem]{Definition}
\theoremstyle{definition}
\newtheorem{example}[theorem]{Example}
\newtheorem{remark}[theorem]{Remark}
\renewcommand{\epsilon}{\varepsilon}
\newcommand{\Diag}{\textrm{Diag}}
\newcommand{\N}{\mathbb{N}}
\newcommand{\tr}{\mathrm{tr}}
\newcommand{\Cbraket}[2]{\braket{#1 \, | \, #2}}
\newcommand{\myuline}[1]{%
  \uline{\smash{#1}}%
}
\DeclareMathOperator{\CPTP}{\mathrm{CPTP}}
\DeclareMathOperator{\trank}{\operatorname{t-rank}}
\DeclareMathOperator{\rank}{\operatorname{rank}}
\DeclareMathOperator{\brank}{\myuline{\operatorname{rank}}}
\DeclareMathOperator{\symmrank}{\operatorname{symm-rank}}
\DeclareMathOperator{\bsymmrank}{\myuline{\operatorname{symm-rank}}}
\DeclareMathOperator{\osr}{\operatorname{osr}}
\DeclareMathOperator{\tiosr}{\operatorname{ti-osr}}
\DeclareMathOperator{\bosr}{\myuline{\operatorname{osr}}}
\DeclareMathOperator{\btiosr}{\myuline{\operatorname{ti-osr}}}
\DeclareMathOperator{\psdrank}{\operatorname{psd-rank}}
\DeclareMathOperator{\bpsdrank}{\myuline{\operatorname{psd-rank}}}
\DeclareMathOperator{\symmpsdrank}{\operatorname{symm-psd-rank}}
\DeclareMathOperator{\bsymmpsdrank}{\myuline{\operatorname{symm-psd-rank}}}
\DeclareMathOperator{\psdosr}{\operatorname{psd-osr}}
\DeclareMathOperator{\tipsdosr}{\operatorname{ti-psd-osr}}
\DeclareMathOperator{\bpsdosr}{\myuline{\operatorname{psd-osr}}}
\DeclareMathOperator{\btipsdosr}{\myuline{\operatorname{ti-psd-osr}}}
\DeclareMathOperator{\nnrank}{\operatorname{nn-rank}}
\DeclareMathOperator{\bnnrank}{\myuline{\operatorname{nn-rank}}}
\DeclareMathOperator{\symmnnrank}{\operatorname{symm-nn-rank}}
\DeclareMathOperator{\nnosr}{\operatorname{nn-osr}}
\DeclareMathOperator{\tinnosr}{\operatorname{ti-nn-osr}}
\DeclareMathOperator{\bnnosr}{\myuline{\operatorname{nn-osr}}}
\DeclareMathOperator{\btinnosr}{\myuline{\operatorname{ti-nn-osr}}}
\DeclareMathOperator{\purirank}{\operatorname{puri-rank}}
\DeclareMathOperator{\bpurirank}{\myuline{\operatorname{puri-rank}}}
\DeclareMathOperator{\symmpurirank}{\operatorname{symm-puri-rank}}
\DeclareMathOperator{\puriosr}{\operatorname{puri-osr}}
\DeclareMathOperator{\tipuriosr}{\operatorname{ti-puri-osr}}
\DeclareMathOperator{\seprank}{\operatorname{sep-rank}}
\DeclareMathOperator{\bseprank}{\myuline{\operatorname{sep-rank}}}
\DeclareMathOperator{\symmseprank}{\operatorname{symm-sep-rank}}
\DeclareMathOperator{\seposr}{\operatorname{sep-osr}}
\DeclareMathOperator{\tiseposr}{\operatorname{ti-sep-osr}}
\DeclareMathOperator{\CCorr}{\mathsf{CCorr}}
\DeclareMathOperator{\CQCorr}{\mathsf{CQCorr}}
\DeclareMathOperator{\QQCorr}{\mathsf{QQCorr}}
\newcommand{\facetF}{\mathcal{\widetilde{F}}}
\newcommand{\facetG}{\mathcal{\widetilde{G}}}
\newcommand{\facetH}{\mathcal{\widetilde{H}}}
\newcommand{\sfacetF}{\mathcal{\wt[0.35ex]{F}}}
\newcommand{\sfacetG}{\mathcal{\wt[0.35ex]{G}}}
\newcommand{\sfacetH}{\mathcal{\wt[0.35ex]{H}}}
\newcommand{\id}{\mathrm{id}}
\newcommand{\one}{\mathbbm{1}}
\newcommand{\Lin}{\mathrm{Lin}}
\def\thesubsection{\arabic{section}.\arabic{subsection}}
\def\p@subsection{}
\def\p@subsubsection{}
\let\originalleft\left
\let\originalright\right
\renewcommand{\left}{\mathopen{}\mathclose\bgroup\originalleft}
\renewcommand{\right}{\aftergroup\egroup\originalright}
\newcommand\xqed[1]{%
  \leavevmode\unskip\penalty9999 \hbox{}\nobreak\hfill
  \quad\hbox{#1}}
\newcommand\demo{\xqed{$\triangle$}}
\definecolor{color4}{HTML}{2B3467}
\definecolor{color3}{HTML}{EB455F}
\definecolor{color1}{HTML}{BAD7E9}
\definecolor{color2}{HTML}{FCFFE7}
\newcommand*\wt[2][0.1ex]{%
        \begingroup
        \mathchoice{\wt@helper{#1}{#2}{\displaystyle}{\textfont}}
                   {\wt@helper{#1}{#2}{\textstyle}{\textfont}}
                   {\wt@helper{#1}{#2}{\scriptstyle}{\scriptfont}}
                   {\wt@helper{#1}{#2}{\scriptscriptstyle}{\scriptscriptfont}}%
        \endgroup
        #2%
}
\newcommand*\wt@helper[4]{%
        \def\currentfont{\the#41}%
        \def\currentskewchar{\char\the\skewchar\currentfont}%
        \setbox\tw@\hbox{\currentfont#2\currentskewchar}%
        \dimen@ii\wd\tw@
        \setbox\tw@\hbox{\currentfont#2{}\currentskewchar}%
        \advance\dimen@ii-\wd\tw@
        \rlap{\raisebox{-#1}{$\m@th#3\kern\dimen@ii\widetilde{\phantom{#2}}$}}%
}
\begin{document}

	\title{Border Ranks of Positive and Invariant Tensor Decompositions: Applications to Correlations}

	\author{Andreas Klingler}
	\email{Andreas.Klingler@univie.ac.at}
	\affiliation{Institute for Theoretical Physics, Technikerstr.\ 21a,  A-6020 Innsbruck, Austria}
	\affiliation{Faculty of Mathematics, Oskar-Morgenstern-Platz 1,  A-1090 Wien, Austria}
	
	\author{Tim Netzer}
	\affiliation{Department of Mathematics, Technikerstr.\ 13,  A-6020 Innsbruck, Austria}
	
	\author{Gemma De les Coves}
	\affiliation{Institute for Theoretical Physics, Technikerstr.\ 21a,  A-6020 Innsbruck, Austria}

\maketitle

\begin{abstract}
The matrix rank and its positive versions are robust for small approximations, 
i.e.\ they do not decrease under small perturbations. 
In contrast, the multipartite tensor rank can collapse for arbitrarily small errors, 
i.e.\ there may be a gap between rank and border rank, 
leading to instabilities in the optimization over sets with fixed tensor rank. 
Can multipartite positive ranks also collapse for small perturbations? 
In this work, we prove that multipartite positive and invariant tensor decompositions exhibit gaps between rank and border rank, including tensor rank purifications and cyclic separable decompositions. 
We also prove a correspondence between positive decompositions and membership in certain sets of multipartite probability distributions, and leverage the gaps between rank and border rank to prove that these correlation sets are not closed. 
It follows that testing membership of probability distributions arising from resources like translational invariant Matrix Product States is impossible in finite time.
Overall, this work sheds light on the instability of ranks 
and the unique behavior of bipartite systems. 
\end{abstract}

\section{Introduction}

It is well-known that low-rank approximations of matrices are well-behaved: 
For every matrix, there is a best low-rank approximation with a fixed error, 
and every element closer to the original matrix must be of larger rank. 
In other words, the approximate rank
$$ 
\rank^{\varepsilon}(T) \coloneqq \inf_{W \in B_{\varepsilon}(T)} \rank(W)
$$
where $B_{\varepsilon}(T)$ is the $\varepsilon$-ball around an element $T$ for a given norm $\Vert \cdot \Vert$, coincides with the exact rank when $\varepsilon > 0$ is small enough.

\begin{figure}[t]
\begin{tikzpicture}[scale=0.8]

\draw[fill=color1, draw=none]  plot[smooth, tension=0.7] coordinates {(-1,0) (0,0.5) (3,0.8) (5,0) (3.5, -1.4) (1,-1.6) (-1,-1) (-1,0)};

\node at (4,0) {$\mathcal{V}^{\otimes n}$};

\node at (-0.3,-0.35) {$T$};

\filldraw (0,-0.5) circle (2pt);

\filldraw (0.8,-0.95) circle (1.5pt);
\node at (1.1,-0.7) {$T_{\varepsilon}$};

\draw[dashed] (3,-0.5) to[in=30, out=-180] (2,-1) to[in=-30, out=-150] (0,-0.5);

\draw (0,-2) -- (0,-3.5);
\draw (-0.1,-3.4) -- (3,-3.4);

\draw[color3] (0,-3) -- (2.8,-3);
\draw[color3,fill=white] (0,-3) circle (2pt);

\draw[color3,fill=color3] (0,-2.5) circle (2pt);

\node[color=color3, anchor=west,text=black] at (3,-3) {$\trank(T_{\varepsilon})$};
\node[color=color3, anchor=east,text=black] at (-0.2,-2.5) {$\trank(T)$};

\node at (3.2,-3.5) {$\varepsilon$};

\draw[dotted] (0.8, -1.15) -- (0.8,-2.9);
\draw[dotted] (0, -0.7) -- (0,-1.9);

\end{tikzpicture}
\caption{\textbf{Border rank}. Given a tensor $T$ in an $n$-fold tensor product space and a certain type of rank $\trank$, if there exists a family of tensors $(T_{\varepsilon})_{\varepsilon > 0}$ such that $T_{\varepsilon} \to T$ for $\varepsilon \to 0$ and $\trank(T_{\varepsilon}) < \trank(T)$ for all $\varepsilon > 0$, we say that $\trank$ exhibits a gap between rank and border rank.
}
\label{fig:borderRank}
\end{figure}

\begin{figure*}[thb]\centering
\begin{tikzpicture}[scale=1.04]

\draw (0,1.8) -- (3,0);

\draw (0,0) -- (15.5,0);
\draw (1,-1.5) -- (15.5,-1.5);
\draw (1,-3) -- (15.5,-3);

\draw (3,2.5) -- (3,-4.5);
\draw (5.5,2.5) -- (5.5,-4.5);
\draw (8,2.5) -- (8,-4.5);
\draw (10.5,2.5) -- (10.5,-4.5);
\draw (13,2.5) -- (13,-4.5);

\node[anchor=west] at (1.2,-0.75) {$\rank$};

\node[anchor=west] at (1.2,-2) {$\psdrank$};
\node[anchor=west] at (1.2,-2.5) {$\purirank$};

\node[anchor=west] at (1.2,-3.5) {$\nnrank$};
\node[anchor=west] at (1.2,-4) {$\seprank$};

\node[anchor=west] at (0.2,0.8) {\textsc{Type}};
\node[anchor=west] at (0.2,0.4) {\textsc{of rank}};

\node[anchor=east] at (2.9,2.1) {\textsc{Decomposition}};
\node[anchor=east] at (2.9,1.7) {\textsc{type}};

\node at (4.25,2.1) {Standard};
\node at (6.75,2.1) {Symmetric};
\node at (9.25,2.1) {Cyclic};
\node at (11.75,2.3) {Translational};
\node at (11.75,1.9) {invariant};
\node at (14.25,2.1) {Tree};

\begin{scope}[xshift=4.25cm , yshift=0.9cm]
\draw[fill=gray!50!white] (18:0.5cm) -- (90:0.5cm) -- (18+2*72:0.5cm) --
(18+3*72:0.5cm) -- (18+4*72:0.5cm) -- cycle;
\filldraw (18:0.5cm) circle (1.2pt);
\filldraw (18+72:0.5cm) circle (1.2pt);
\filldraw (18+2*72:0.5cm) circle (1.2pt);
\filldraw (18+3*72:0.5cm) circle (1.2pt);
\filldraw (18+4*72:0.5cm) circle (1.2pt);
\end{scope}

\begin{scope}[xshift=6.75cm , yshift=0.9cm]
\draw[fill=gray!50!white] (18:0.5cm) -- (90:0.5cm) -- (18+2*72:0.5cm) --
(18+3*72:0.5cm) -- (18+4*72:0.5cm) -- cycle;
\filldraw (18:0.5cm) circle (1.2pt);
\filldraw (18+72:0.5cm) circle (1.2pt);
\filldraw (18+2*72:0.5cm) circle (1.2pt);
\filldraw (18+3*72:0.5cm) circle (1.2pt);
\filldraw (18+4*72:0.5cm) circle (1.2pt);

\draw[stealth-stealth, color3, thick] (30:0.58cm) to[out=118, in=-10] (78:0.58cm);
\draw[stealth-stealth, color3, thick, rotate=72] (30:0.58cm) to[out=118, in=-10] (78:0.58cm);
\draw[stealth-stealth, color3, thick, rotate=2*72] (30:0.58cm) to[out=118, in=-10] (78:0.58cm);
\draw[stealth-stealth, color3, thick, rotate=3*72] (30:0.58cm) to[out=118, in=-10] (78:0.58cm);
\draw[stealth-stealth, color3, thick, rotate=4*72] (30:0.58cm) to[out=118, in=-10] (78:0.58cm);s
\end{scope}

\begin{scope}[xshift=9.25cm , yshift=0.9cm]
\draw (18:0.5cm) -- (90:0.5cm) -- (18+2*72:0.5cm) --
(18+3*72:0.5cm) -- (18+4*72:0.5cm) -- cycle;
\filldraw (18:0.5cm) circle (1.2pt);
\filldraw (18+72:0.5cm) circle (1.2pt);
\filldraw (18+2*72:0.5cm) circle (1.2pt);
\filldraw (18+3*72:0.5cm) circle (1.2pt);
\filldraw (18+4*72:0.5cm) circle (1.2pt);
\end{scope}

\begin{scope}[xshift=11.75cm , yshift=0.9cm]
\draw (18:0.5cm) -- (90:0.5cm) -- (18+2*72:0.5cm) --
(18+3*72:0.5cm) -- (18+4*72:0.5cm) -- cycle;
\filldraw (18:0.5cm) circle (1.2pt);
\filldraw (18+72:0.5cm) circle (1.2pt);
\filldraw (18+2*72:0.5cm) circle (1.2pt);
\filldraw (18+3*72:0.5cm) circle (1.2pt);
\filldraw (18+4*72:0.5cm) circle (1.2pt);

\draw[-stealth, color3, thick] (30:0.58cm) to[out=118, in=-10] (78:0.58cm);
\draw[-stealth, color3, thick, rotate=72] (30:0.58cm) to[out=118, in=-10] (78:0.58cm);
\draw[-stealth, color3, thick, rotate=2*72] (30:0.58cm) to[out=118, in=-10] (78:0.58cm);
\draw[-stealth, color3, thick, rotate=3*72] (30:0.58cm) to[out=118, in=-10] (78:0.58cm);
\draw[-stealth, color3, thick, rotate=4*72] (30:0.58cm) to[out=118, in=-10] (78:0.58cm);
\end{scope}

\begin{scope}[xshift=14.25cm, yshift=0.9cm]

\draw (0,0.5) -- (0.5,0) -- (0.75,-0.5);
\draw (0.5,0) -- (0.25,-0.5);

\draw (0,0.5) -- (0,0);

\draw (0,0.5) -- (-0.5,0) -- (-0.25,-0.5);
\draw (-0.5,0) -- (-0.75,-0.5);

\filldraw (0,0.5) circle (1.2pt);
\filldraw (0.5,0) circle (1.2pt);
\filldraw (-0.5,0) circle (1.2pt);
\filldraw (0,0) circle (1.2pt);
\filldraw (-0.75,-0.5) circle (1.2pt);
\filldraw (-0.25,-0.5) circle (1.2pt);
\filldraw (0.25,-0.5) circle (1.2pt);
\filldraw (0.75,-0.5) circle (1.2pt);

\end{scope}

\draw[thick, -stealth] (0.5,-0.75) -- (0.5, -3.75);

\node[rotate=90] at (0,-2.25) {\scriptsize Imposing stronger};
\node[rotate=90] at (0.25,-2.25) {\scriptsize positivity constraints};


\begin{scope}[yshift=-0.75cm, xshift=4.1cm]
\node[anchor=west] at (-1, 0.25) {Yes for $n \geq 3$};
\node[anchor=west] at (-1, -0.25) {\cite{Bi80, La11}};
\end{scope}

\begin{scope}[yshift=-0.75cm, xshift=6.6cm]
\node[anchor=west] at (-1, 0.25) {Yes for $n \geq 3$};
\node[anchor=west] at (-1, -0.25) {\cite{La11}};
\end{scope}

\begin{scope}[yshift=-0.75cm, xshift=9.1cm]
\node[anchor=west] at (-1, 0.25) {Yes for $n \geq 3$};
\node[anchor=west] at (-1, -0.25) {\cite{Ba21,Ch20, La12}};
\end{scope}

\begin{scope}[yshift=-0.75cm, xshift=11.6cm]
\node[anchor=west] at (-1, 0.25) {Yes for $n \geq 3$};
\node[anchor=west] at (-1, -0.25) {\cite{Ba21,Ch20}};
\end{scope}

\begin{scope}[yshift=-0.75cm, xshift=14.1cm]
\node[anchor=west] at (-1, 0.25) {No};
\node[anchor=west] at (-1, -0.25) {\cite{Ba21, La12}};
\end{scope}


\begin{scope}[yshift=-2.25cm, xshift=4.1cm]
\node[anchor=west] at (-1, 0.25) {\textbf{Yes} for $n \geq 5$};
\node[anchor=west] at (-1, -0.25) {(\Cref{ssec:standardBorder})};
\end{scope}

\begin{scope}[yshift=-2.25cm, xshift=6.6cm]
\node[anchor=west] at (-1, 0.25) {\textbf{Yes} for $n \geq 3$};
\node[anchor=west] at (-1, -0.25) {(\Cref{ssec:standardBorder})};
\end{scope}

\begin{scope}[yshift=-2.25cm, xshift=9.1cm]
\node at (0, 0) {?};
\end{scope}

\begin{scope}[yshift=-2.25cm, xshift=11.6cm]
\node[anchor=west] at (-1, 0.25) {\textbf{Yes} for $n \geq 17$};
\node[anchor=west] at (-1, -0.25) {(\Cref{ssec:tiBorder})};
\end{scope}

\begin{scope}[yshift=-2.25cm, xshift=14.1cm]
\node[anchor=west] at (-1, 0.25) {\bf{No}};
\node[anchor=west] at (-1, -0.25) {(\Cref{thm:treeBorderRank})};
\end{scope}


\begin{scope}[yshift=-3.75cm, xshift=4.1cm]
\node[anchor=west] at (-1, 0.25) {\bf{No}};
\node[anchor=west] at (-1, -0.25) {(\Cref{thm:nnDecNoBorderRank})};
\end{scope}

\begin{scope}[yshift=-3.75cm, xshift=6.6cm]
\node[anchor=west] at (-1, 0.25) {\bf{No}};
\node[anchor=west] at (-1, -0.25) {(\Cref{thm:nnDecNoBorderRank})};
\end{scope}

\begin{scope}[yshift=-3.75cm, xshift=9.1cm]
\node[anchor=west] at (-1, 0.25) {\textbf{Yes} for $n \geq 3$};
\node[anchor=west] at (-1, -0.25) {(\Cref{ssec:cyclic})};
\end{scope}

\begin{scope}[yshift=-3.75cm, xshift=11.6cm]
\node[anchor=west] at (-1, 0.25) {\textbf{Yes} for $n \geq 5$};
\node[anchor=west] at (-1, -0.25) {(\Cref{ssec:tiBorder})};
\end{scope}

\begin{scope}[yshift=-3.75cm, xshift=14.1cm]
\node[anchor=west] at (-1, 0.25) {\bf{No}};
\node[anchor=west] at (-1, -0.25) {(\Cref{thm:treeBorderRank})};
\end{scope}

\end{tikzpicture}
\captionsetup{width=.7\linewidth, margin=0.5cm}
\caption{
\textbf{Is there a gap between rank and border rank in an $n$-fold tensor product space?} This table summarizes known results and the contributions of this paper (marked boldface): We prove that gaps persist when imposing positivity constrains corresponding to quantum correlation scenarios (second row), 
and that certain gaps disappear for stronger positivity constrains corresponding to classical correlation scenarios (third row). 
The types of ranks and of decompositions are defined in \Cref{sec:posTensorDec}.}
\label{fig:borderRankResults}
\end{figure*}

Surprisingly, the multipartite tensor rank behaves very differently: There exist tensors $T$ where the \emph{border rank} 
$$ \brank(T) \coloneqq \lim_{\varepsilon \to 0} \rank^{\varepsilon}(T)$$
is \emph{strictly smaller} than the rank of $T$ (see \Cref{fig:borderRank}). For the mathematician, this means that the rank is not lower semi-continuous. 
This is equivalent to the statement that the set of tensors whose rank is upper bounded by a constant $r$
$$
\mathcal{T} \coloneqq \left\{T \in \mathcal{V}^{\otimes n}: \rank(T) \leq r \right\}
$$
is topologically not closed, since there are sequences in $\mathcal{T}$ whose limit is not in $\mathcal{T}$. As a consequence, optimization problems over such sets, such as computing an optimal low-rank approximation, are generally ill-posed \cite{Si08}. It is known that 
 tensor decompositions with three or more local spaces exhibit a gap between rank and border rank \cite{Bi80, La17, Zu17}, 
 and so do tensor network decompositions containing loops \cite{La12, Ch20, Ba21, Ch21c},  
 where some of these results concern symmetric decompositions of invariant tensors. We refer to \cite{Co20} for an elaborate treatment of the topological properties of tensor ranks. For a perspective on optimization problems over fixed-rank tensors via the condition number, we refer to \cite{Be23}.

What about \emph{positive} multipartite tensors? 
Positive semidefinite and nonnegative tensors represent quantum and classical notions of positivity, respectively, and are thus central in quantum information theory and classical probability theory. 
Yet, the tensor decompositions mentioned above are insufficient to represent positive elements, 
as it is generally impossible to guarantee the global positivity  \cite{De15, Kl14}.  
This can be circumvented by introducing locally positive constraints, as in the local purification form and the separable decomposition for mixed states \cite{De13c, De19, De19d}, 
or the nonnegative, positive semidefinite and square-root decomposition for entrywise nonnegative tensors \cite{Fa14, Ja13, De19,De19d, Gl19}. 
Each of these gives rise to a notion of positive rank, 
which characterizes the complexity of linear vs.\  semidefinite programs \cite{Ya91, Go12} or classical vs.\ quantum correlation complexities \cite{Ja13, Fi12, Ja17}.
Can multipartite positive tensor ranks, possibly with invariance, exhibit a gap with its border rank? 

In this work, we prove that several locally positive and invariant decompositions exhibit a gap between rank and border rank, as summarized in \Cref{fig:borderRankResults}. 
This includes positive and/or symmetric versions of Matrix Product States (MPS) and Matrix Product Operators (MPO). 
We also prove that the ranks of these decompositions determine whether a probability distribution can be generated by certain classes of states via local measurements; 
for example, we prove that the positive semidefinite rank of a nonnegative tensor captures the amount of entanglement needed to generate multipartite probability distributions by local measurements (see \Cref{fig:quantum_correlation}).
We then leverage the gaps to border ranks together with this connection to quantum correlations to show that: 
\begin{enumerate}[label=(\roman*)]
	\item If a tensor network has a loop, computing the best approximation with fixed positive rank is ill-posed. Specifically, given a mixed state $\rho$ (of rank larger than $r$), there is generally no mixed state $\sigma$ which is the best approximation among all decompositions with a positive rank bounded by $r$, 
	because for any   $\varepsilon > 0$ there is an $\varepsilon$-close mixed state of rank $r$, 
	while the rank of $\rho$ is strictly larger than $r$.
	
	\item The set of probability distributions generated by a multipartite state with local measurements (\Cref{fig:quantum_correlation} (a)) is not closed. Thus, it is impossible to verify the necessity of a certain resource state from sampling the distribution, even in arbitrarily many rounds. The same applies to generating multipartite mixed states from local quantum channels (\Cref{fig:quantum_correlation} (b)).
	
	\item We provide correlation scenarios where the quantum case is fragile with respect to approximations, while the classical case is robust. This shows a novel type of separation between these two scenarios.
\end{enumerate}

\begin{figure}
\begin{tikzpicture}

\begin{scope}[scale=0.9]
\begin{scope}
\draw (18:0.5cm) -- (90:0.5cm) -- (18+2*72:0.5cm) --
(18+3*72:0.5cm) -- (18+4*72:0.5cm) -- cycle;
\filldraw (18:0.5cm) circle (1.2pt);
\filldraw (18+72:0.5cm) circle (1.2pt);
\filldraw (18+2*72:0.5cm) circle (1.2pt);
\filldraw (18+3*72:0.5cm) circle (1.2pt);
\filldraw (18+4*72:0.5cm) circle (1.2pt);

\draw[-stealth, color3, thick] (30:0.58cm) to[out=118, in=-10] (78:0.58cm);
\draw[-stealth, color3, thick, rotate=72] (30:0.58cm) to[out=118, in=-10] (78:0.58cm);
\draw[-stealth, color3, thick, rotate=2*72] (30:0.58cm) to[out=118, in=-10] (78:0.58cm);
\draw[-stealth, color3, thick, rotate=3*72] (30:0.58cm) to[out=118, in=-10] (78:0.58cm);
\draw[-stealth, color3, thick, rotate=4*72] (30:0.58cm) to[out=118, in=-10] (78:0.58cm);

\node at (0,-1.2) {$\psdrank(P) \leq r$};

\node at (1.35,0) {$\Longleftrightarrow$};

\node at (-1.2,0) {(a)};

\end{scope}

\begin{scope}[xshift=2.4cm, yshift=0.4cm, scale=0.38]

\draw[fill=color1, draw=none] (-1,-0.5) rectangle (11.5,2.4);
\node at (10.5, 0.5) {$\ket{\psi}$};




\draw[color3, thick, -stealth] (0.7,1.7) to[in=160,out=20] (2.3,1.7);
\draw[color3, thick, -stealth] (2.7,1.7) to[in=160,out=20] (4.3,1.7);
\draw[color3, thick, -stealth] (4.7,1.7) to[in=160,out=20] (6.3,1.7);
\draw[color3, thick, -stealth] (6.7,1.7) to[in=160,out=20] (8.3,1.7);

\draw[thick, rounded corners] (-0.5,0.5) -- (3.5,0.5) node[midway,below, yshift=-0.05cm] {\scriptsize $r$} -- (9.5,0.5) -- (9.5,1.5) -- (-0.5,1.5) -- cycle;

\draw[-stealth] (0.5,0.5) -- (0.5,-1.5);
\draw[-stealth] (2.5,0.5) -- (2.5,-1.5);
\draw[-stealth] (4.5,0.5) -- (4.5,-1.5);
\draw[-stealth] (6.5,0.5) -- (6.5,-1.5);
\draw[-stealth] (8.5,0.5) -- (8.5,-1.5);

\draw[-stealth] (0.5,-3) -- (0.5,-4);
\draw[-stealth] (2.5,-3) -- (2.5,-4);
\draw[-stealth] (4.5,-3) -- (4.5,-4);
\draw[-stealth] (6.5,-3) -- (6.5,-4);
\draw[-stealth] (8.5,-3) -- (8.5,-4);

\draw[decorate, decoration={brace}] (8.75,-4.2) -- (0.25,-4.2) node[midway, below, yshift=-0.1cm, color=black] {$P$};

\draw[fill=white] (0,0) rectangle (1,1);
\draw[fill=white] (2,0) rectangle (3,1);
\draw[fill=white] (4,0) rectangle (5,1);
\draw[fill=white] (6,0) rectangle (7,1);
\draw[fill=white] (8,0) rectangle (9,1);

\draw[fill=white] (-0.25,-3) rectangle (1.25,-1.5);
\draw[fill=white] (1.75,-3) rectangle (3.25,-1.5);
\draw[fill=white] (3.75,-3) rectangle (5.25,-1.5);
\draw[fill=white] (5.75,-3) rectangle (7.25,-1.5);
\draw[fill=white] (7.75,-3) rectangle (9.25,-1.5);


\begin{scope}[shift={(0.5,-2.9)}]
\draw[thick,domain=45:135, xshift=0] plot ({0.7*cos(\x)}, {0.7*sin(\x)});
\draw[line width=0.1cm, white, rotate around={-15:(0,0.1)}] (0,0.2) -- (0,1.1);
\draw[-stealth, rotate around={-15:(0,0.1)}] (0,0.2) -- (0,1.1);
\end{scope}

\begin{scope}[shift={(2.5,-2.9)}]
\draw[thick,domain=45:135, xshift=0] plot ({0.7*cos(\x)}, {0.7*sin(\x)});
\draw[line width=0.1cm, white, rotate around={-15:(0,0.1)}] (0,0.2) -- (0,1.1);
\draw[-stealth, rotate around={-15:(0,0.1)}] (0,0.2) -- (0,1.1);
\end{scope}

\begin{scope}[shift={(4.5,-2.9)}]
\draw[thick,domain=45:135, xshift=0] plot ({0.7*cos(\x)}, {0.7*sin(\x)});
\draw[line width=0.1cm, white, rotate around={-15:(0,0.1)}] (0,0.2) -- (0,1.1);
\draw[-stealth, rotate around={-15:(0,0.1)}] (0,0.2) -- (0,1.1);
\end{scope}

\begin{scope}[shift={(6.5,-2.9)}]
\draw[thick,domain=45:135, xshift=0] plot ({0.7*cos(\x)}, {0.7*sin(\x)});
\draw[line width=0.1cm, white, rotate around={-15:(0,0.1)}] (0,0.2) -- (0,1.1);
\draw[-stealth, rotate around={-15:(0,0.1)}] (0,0.2) -- (0,1.1);
\end{scope}

\begin{scope}[shift={(8.5,-2.9)}]
\draw[thick,domain=45:135, xshift=0] plot ({0.7*cos(\x)}, {0.7*sin(\x)});
\draw[line width=0.1cm, white, rotate around={-15:(0,0.1)}] (0,0.2) -- (0,1.1);
\draw[-stealth, rotate around={-15:(0,0.1)}] (0,0.2) -- (0,1.1);
\end{scope}

\end{scope}

\end{scope}


\begin{scope}[yshift=-3cm, scale=0.9]

\begin{scope}
\draw (18:0.5cm) -- (90:0.5cm) -- (18+2*72:0.5cm) --
(18+3*72:0.5cm) -- (18+4*72:0.5cm) -- cycle;
\filldraw (18:0.5cm) circle (1.2pt);
\filldraw (18+72:0.5cm) circle (1.2pt);
\filldraw (18+2*72:0.5cm) circle (1.2pt);
\filldraw (18+3*72:0.5cm) circle (1.2pt);
\filldraw (18+4*72:0.5cm) circle (1.2pt);

\draw[-stealth, color3, thick] (30:0.58cm) to[out=118, in=-10] (78:0.58cm);
\draw[-stealth, color3, thick, rotate=72] (30:0.58cm) to[out=118, in=-10] (78:0.58cm);
\draw[-stealth, color3, thick, rotate=2*72] (30:0.58cm) to[out=118, in=-10] (78:0.58cm);
\draw[-stealth, color3, thick, rotate=3*72] (30:0.58cm) to[out=118, in=-10] (78:0.58cm);
\draw[-stealth, color3, thick, rotate=4*72] (30:0.58cm) to[out=118, in=-10] (78:0.58cm);

\node at (0,-1.2) {$\purirank(\rho) \leq r$};

\node at (1.35,0) {$\Longleftrightarrow$};

\node at (-1.2,0) {(b)};

\end{scope}

\begin{scope}[xshift=2.4cm, yshift=0.4cm, scale=0.38]

\draw[fill=color1, draw=none] (-1,-0.5) rectangle (11.5,2.4);
\node at (10.5, 0.5) {$\ket{\psi}$};


%

\draw[color3, thick, -stealth] (0.7,1.7) to[in=160,out=20] (2.3,1.7);
\draw[color3, thick, -stealth] (2.7,1.7) to[in=160,out=20] (4.3,1.7);
\draw[color3, thick, -stealth] (4.7,1.7) to[in=160,out=20] (6.3,1.7);
\draw[color3, thick, -stealth] (6.7,1.7) to[in=160,out=20] (8.3,1.7);

\draw[thick, rounded corners] (-0.5,0.5) -- (3.5,0.5) node[midway,below, yshift=-0.05cm] {\scriptsize $r$} -- (9.5,0.5) -- (9.5,1.5) -- (-0.5,1.5) -- cycle;

\draw[-stealth] (0.5,0.5) -- (0.5,-1.5);
\draw[-stealth] (2.5,0.5) -- (2.5,-1.5);
\draw[-stealth] (4.5,0.5) -- (4.5,-1.5);
\draw[-stealth] (6.5,0.5) -- (6.5,-1.5);
\draw[-stealth] (8.5,0.5) -- (8.5,-1.5);

\draw[-stealth] (0.5,-3) -- (0.5,-4);
\draw[-stealth] (2.5,-3) -- (2.5,-4);
\draw[-stealth] (4.5,-3) -- (4.5,-4);
\draw[-stealth] (6.5,-3) -- (6.5,-4);
\draw[-stealth] (8.5,-3) -- (8.5,-4);

\draw[decorate, decoration={brace}] (8.75,-4.2) -- (0.25,-4.2) node[midway, below, yshift=-0.1cm, color=black] {$\rho$};

\draw[fill=white] (0,0) rectangle (1,1);
\draw[fill=white] (2,0) rectangle (3,1);
\draw[fill=white] (4,0) rectangle (5,1);
\draw[fill=white] (6,0) rectangle (7,1);
\draw[fill=white] (8,0) rectangle (9,1);

\draw[fill=white] (-0.25,-3) rectangle (1.25,-1.5) node[midway] {$\mathcal{E}_1$};
\draw[fill=white] (1.75,-3) rectangle (3.25,-1.5) node[midway] {$\mathcal{E}_2$};
\draw[fill=white] (3.75,-3) rectangle (5.25,-1.5) node[midway] {$\mathcal{E}_3$};
\draw[fill=white] (5.75,-3) rectangle (7.25,-1.5) node[midway] {$\mathcal{E}_4$};
\draw[fill=white] (7.75,-3) rectangle (9.25,-1.5) node[midway] {$\mathcal{E}_5$};
\end{scope}

\end{scope}

\end{tikzpicture}
\caption{\textbf{Implications of border ranks for correlations}.  (a) The translational invariant (t.i.) cyclic psd-rank characterizes the minimal bond dimension to generate this distribution via a t.i.\ MPS together with local measurements, where $P$ is a nonnegative tensor representing an $n$-partite probability distribution. The gap between rank and border rank implies that the set of probability distributions generated in such a way is not closed.
The same applies when replacing the cyclic graph and translational invariance by other decomposition geometries and symmetries (see \Cref{thm:quantumCorrMainText}). (b) Gaps between border rank and rank also imply that the set of $n$-partite density matrices arising via MPS of bond dimension $r$ and local quantum channels is not closed. }
\label{fig:quantum_correlation}
\end{figure}

This paper is structured as follows. In \Cref{sec:posTensorDec}, we review positive tensor decompositions on nonnegative tensors and positive semidefinite multipartite matrices. 
In \Cref{sec:results}, we present the results summarized in \Cref{fig:borderRankResults}. In \Cref{sec:applications}, we provide applications of positive tensor decompositions and implications of the existence of border ranks. 
In \Cref{sec:Conclusion}, we conclude and discuss open questions.  

The appendix is structured as follows. In \Cref{app:OmegaG-decompositions}, we review positive and invariant tensor decompositions on hypergraph structures, so-called $(\Omega,G)$-decompositions. In \Cref{app:rankInequalities}, \Cref{app:NonnegativeDec_CausalStructures}, \Cref{app:CorrelationCorrespondenceAppendix}, and \Cref{app:nnDecNoBorderRank}, we present proofs of lemmas used in the main text. In \Cref{app:POVMQChannel} and \Cref{app:BolzanoWeierstrass}, we review basic facts of the topology of quantum channels and positive operator-valued measures.

\section{Positive tensor decompositions}
\label{sec:posTensorDec}

 In the following we review several tensor decomposition on entrywise nonnegative tensors by varying them in three different aspects: local positivity (\Cref{ssec:multipartite}), arrangement of indices (\Cref{ssec:arbitraryArrangements}), and symmetry (\Cref{ssec:symmetricDec}). A summary of all the different notions of ranks together with their abbreviations is provided in \Cref{fig:rank_abbrv}.
We will later generalize these decompositions to multipartite positive semidefinite (psd) matrices (\Cref{ssec:matrixDec}). Finally, we introduce in \Cref{ssec:borderRank} the notion of a border rank.

\begin{figure*}[thb]\centering
\begin{tikzpicture}[scale=1.04]

\draw (0,1.8) -- (3,0);

\draw (0,0) -- (11,0);
\draw (1,-0.5) -- (11,-0.5);
\draw (1,-1) -- (11,-1);

\draw (3,1.5) -- (3,-1.5);
\draw (5,1.5) -- (5,-1.5);
\draw (7,1.5) -- (7,-1.5);
\draw (9,1.5) -- (9,-1.5);

\node[anchor=east] at (2.9,-0.25) {\footnotesize no constraint};

\node[anchor=east] at (2.9,-0.75) {\footnotesize positive semidefinite};

\node[anchor=east] at (2.9,-1.25) {\footnotesize nonnegative};

\node[anchor=west] at (-0.2,0.8) {\footnotesize \textsc{Positivity}};
\node[anchor=west] at (-0.2,0.4) {\footnotesize \textsc{constraint}};

\node[anchor=east] at (2.9,1.7) {\footnotesize \textsc{Decomposition}};
\node[anchor=east] at (2.9,1.3) {\footnotesize \textsc{type}};

\node at (4,1.5) {\footnotesize Standard};
\node at (6,1.5) {\footnotesize Symmetric};
\node at (8,1.5) {\footnotesize Cyclic};
\node at (10,1.7) {\footnotesize Translational};
\node at (10,1.4) {\footnotesize invariant};

\begin{scope}[xshift=4cm , yshift=0.7cm, scale=0.75]
\draw[fill=gray!50!white] (18:0.5cm) -- (90:0.5cm) -- (18+2*72:0.5cm) --
(18+3*72:0.5cm) -- (18+4*72:0.5cm) -- cycle;
\filldraw (18:0.5cm) circle (1.2pt);
\filldraw (18+72:0.5cm) circle (1.2pt);
\filldraw (18+2*72:0.5cm) circle (1.2pt);
\filldraw (18+3*72:0.5cm) circle (1.2pt);
\filldraw (18+4*72:0.5cm) circle (1.2pt);
\end{scope}

\begin{scope}[xshift=6cm , yshift=0.7cm, scale=0.75]
\draw[fill=gray!50!white] (18:0.5cm) -- (90:0.5cm) -- (18+2*72:0.5cm) --
(18+3*72:0.5cm) -- (18+4*72:0.5cm) -- cycle;
\filldraw (18:0.5cm) circle (1.2pt);
\filldraw (18+72:0.5cm) circle (1.2pt);
\filldraw (18+2*72:0.5cm) circle (1.2pt);
\filldraw (18+3*72:0.5cm) circle (1.2pt);
\filldraw (18+4*72:0.5cm) circle (1.2pt);

\draw[stealth-stealth, color3, thick] (30:0.58cm) to[out=118, in=-10] (78:0.58cm);
\draw[stealth-stealth, color3, thick, rotate=72] (30:0.58cm) to[out=118, in=-10] (78:0.58cm);
\draw[stealth-stealth, color3, thick, rotate=2*72] (30:0.58cm) to[out=118, in=-10] (78:0.58cm);
\draw[stealth-stealth, color3, thick, rotate=3*72] (30:0.58cm) to[out=118, in=-10] (78:0.58cm);
\draw[stealth-stealth, color3, thick, rotate=4*72] (30:0.58cm) to[out=118, in=-10] (78:0.58cm);
\end{scope}

\begin{scope}[xshift=8cm , yshift=0.7cm, scale=0.75]
\draw (18:0.5cm) -- (90:0.5cm) -- (18+2*72:0.5cm) --
(18+3*72:0.5cm) -- (18+4*72:0.5cm) -- cycle;
\filldraw (18:0.5cm) circle (1.2pt);
\filldraw (18+72:0.5cm) circle (1.2pt);
\filldraw (18+2*72:0.5cm) circle (1.2pt);
\filldraw (18+3*72:0.5cm) circle (1.2pt);
\filldraw (18+4*72:0.5cm) circle (1.2pt);
\end{scope}

\begin{scope}[xshift=10cm , yshift=0.7cm, scale=0.75]
\draw (18:0.5cm) -- (90:0.5cm) -- (18+2*72:0.5cm) --
(18+3*72:0.5cm) -- (18+4*72:0.5cm) -- cycle;
\filldraw (18:0.5cm) circle (1.2pt);
\filldraw (18+72:0.5cm) circle (1.2pt);
\filldraw (18+2*72:0.5cm) circle (1.2pt);
\filldraw (18+3*72:0.5cm) circle (1.2pt);
\filldraw (18+4*72:0.5cm) circle (1.2pt);

\draw[-stealth, color3, thick] (30:0.58cm) to[out=118, in=-10] (78:0.58cm);
\draw[-stealth, color3, thick, rotate=72] (30:0.58cm) to[out=118, in=-10] (78:0.58cm);
\draw[-stealth, color3, thick, rotate=2*72] (30:0.58cm) to[out=118, in=-10] (78:0.58cm);
\draw[-stealth, color3, thick, rotate=3*72] (30:0.58cm) to[out=118, in=-10] (78:0.58cm);
\draw[-stealth, color3, thick, rotate=4*72] (30:0.58cm) to[out=118, in=-10] (78:0.58cm);
\end{scope}

%


\begin{scope}[yshift=-0.25cm, xshift=4cm]
\node at (0, 0) {\scriptsize $\rank$};
\end{scope}

\begin{scope}[yshift=-0.25cm, xshift=6cm]
\node at (0, 0) {\scriptsize $\symmrank$};
\end{scope}

\begin{scope}[yshift=-0.25cm, xshift=8cm]
\node at (0, 0) {\scriptsize $\osr$};
\end{scope}

\begin{scope}[yshift=-0.25cm, xshift=10cm]
\node at (0, 0) {\scriptsize $\tiosr$};
\end{scope}


\begin{scope}[yshift=-0.75cm, xshift=4cm]
\node at (0, 0) {\scriptsize $\psdrank$};
\end{scope}

\begin{scope}[yshift=-0.75cm, xshift=6cm]
\node at (0, 0) {\scriptsize $\symmpsdrank$};
\end{scope}

\begin{scope}[yshift=-0.75cm, xshift=8cm]
\node at (0, 0) {\scriptsize $\psdosr$};
\end{scope}

\begin{scope}[yshift=-0.75cm, xshift=10cm]
\node at (0, 0) {\scriptsize $\tipsdosr$};
\end{scope}


\begin{scope}[yshift=-1.25cm, xshift=4cm]
\node at (0, 0) {\scriptsize $\nnrank$};
\end{scope}

\begin{scope}[yshift=-1.25cm, xshift=6cm]
\node at (0, 0) {\scriptsize $\symmnnrank$};
\end{scope}

\begin{scope}[yshift=-1.25cm, xshift=8cm]
\node at (0, 0) {\scriptsize $\nnosr$};
\end{scope}

\begin{scope}[yshift=-1.25cm, xshift=10cm]
\node at (0, 0) {\scriptsize $\tinnosr$};
\end{scope}

\end{tikzpicture}
\captionsetup{width=.7\linewidth, margin=0.5cm}
\caption{\textbf{Abbreviations of all the different notions of ranks.} The columns represent different arrangements of indices in the decomposition. The rows distinguish between the positivity constraints imposed on the decomposition. The term $\osr$ stands for the \emph{operator Schmidt rank}, the abbreviation ti stands for \emph{translational invariant}, and symm stands for \emph{symmetric}.}
\label{fig:rank_abbrv}
\end{figure*}

Bipartite locally positive tensor decompositions arise from matrix factorizations. For instance, given a matrix $T \in \mathcal{M}_d(\mathbb{C}) \cong \mathbb{C}^d \otimes \mathbb{C}^d$, $T$ has rank at most $r$ if there exists a decomposition
\begin{equation} 
\label{eq:MatrixRankTensorRank}
T = \sum_{\alpha = 1}^{r} \ket{v_{\alpha}} \bra{\bar{w}_{\alpha}},
\end{equation}
where $\bra{\bar{w}_{\alpha}} \coloneqq \ket{w_{\alpha}}^t$.
Equivalently, seeing $T$ as an element in $\mathbb{C}^d \otimes \mathbb{C}^d$, the matrix factorization translates into
$$T = \sum_{\alpha = 1}^{r} \ket{v_{\alpha}} \otimes \ket{w_{\alpha}}.$$

Restricting to entrywise nonnegative vectors, i.e.\ $\Cbraket{j}{v_{\alpha}}, \Cbraket{j}{w_{\alpha}} \geq 0$ for every $j \in \{1,\ldots,d\}$, we obtain the nonnegative matrix factorization \cite{Co93b}. We denote the minimal $r$ attained by such a decomposition as $\nnrank(T)$. Replacing the nonnegative vectors by psd matrices, we obtain the psd matrix factorization, given by
$$ T_{ij} = \sum_{\alpha, \beta = 1}^r \left(A_i\right)_{\alpha, \beta} \cdot \left(B_j\right)_{\alpha, \beta} = \tr\left(A_i B_j^{t}\right)$$
where $A_i, B_j \in \mathcal{M}_{r}^{+}(\mathbb{C})$ are hermitian psd matrices of size $r \times r$. The positive semidefinite rank $\psdrank(T)$ is the minimum size $r \in \mathbb{N}$ necessary to generate such a decomposition. 

These three ranks are related via the inequalities (see \cite{Fa14})
\begin{equation}\label{eq:rank_ineq1}
\rank(T) \leq \psdrank(T)^2
\end{equation}
and
\begin{equation}\label{eq:rank_ineq2}
\psdrank(T), \rank(T) \leq \nnrank(T).
\end{equation}
Note that inequalities in the other direction do not exist in general without including the dimension $d$.

\subsection{Multipartite tensor decompositions}
\label{ssec:multipartite}

Bipartite decompositions generalize to multipartite ones by considering more vectors and keeping the sum over a single index, i.e.\
$$ T = \sum_{\alpha=1}^{r} \ket{v_{\alpha}} \otimes \ket{w_{\alpha}} \otimes \cdots \otimes \ket{z_{\alpha}}.$$
With the additional constraint of entrywise nonnegative local vectors we obtain the \emph{standard nonnegative} multipartite tensor decomposition. The psd-decomposition generalizes to the multipartite case as
\begin{equation}
\label{eq:SigmaNpsdDec}
T_{j_1,\ldots, j_n} =  \sum_{\alpha, \beta = 1}^r \left(A_{j_1}\right)_{\alpha, \beta} \cdot \left(B_{j_2}\right)_{\alpha, \beta} \cdots \left(Z_{j_n}\right)_{\alpha, \beta},
\end{equation}
where $A_j, B_j, \ldots, Z_j \in \mathcal{M}_r^{+}(\mathbb{C})$ are psd \cite{Ja17, De19d}.

By introducing multiple summation indices, we obtain various decompositions of multipartite tensors into local elements. For example, a cyclic decomposition is given by
\begin{equation}
\label{eq:CyclicDec}
T = \sum_{\alpha_1, \ldots, \alpha_n = 1}^{r} \ket{x_{\alpha_1, \alpha_2}} \otimes \ket{y_{\alpha_2, \alpha_3}} \otimes \cdots \otimes \ket{z_{\alpha_n, \alpha_1}},
\end{equation}
This is known as a Matrix Product State (MPS) decomposition \cite{Pe07}. Restricting again to nonnegative tensors gives rise to the nonnegative cyclic decompositions \cite{De19d, Gl19}, also called Stochastic Matrix Product State \cite{Te10}.
Similarly,
\begin{equation}
\label{eq:CyclicPsdDecs}
T_{j_1, \ldots, j_n} = \hspace*{-0.1cm} \sum_{\alpha_i,\beta_i = 1}^{r} \hspace*{-0.1cm} \left(A_{j_1}\right)_{\alpha_1, \alpha_2; \beta_1, \beta_2} \cdots \left(Z_{j_n}\right)_{\alpha_n, \alpha_1; \beta_n, \beta_1},
\end{equation}
where $A_i, \ldots, Z_i \in \mathcal{M}_{r^2}^{+}(\mathbb{C})$ are psd, defines the cyclic psd-decomposition \cite{De19,Gl19}.

Analogously to the bipartite case, these notions of ranks are related via the inequalities in Equations \eqref{eq:rank_ineq1} and \eqref{eq:rank_ineq2}.

\subsection{Arbitrary arrangements of indices}
\label{ssec:arbitraryArrangements}

More generally, one can define any arrangement of indices with a hypergraph $\Omega$ by associating the nodes to the local spaces and the (hyper-)edges to the summation indices. This includes all so far mentioned decompositions (see \Cref{fig:hypergraphDec} for examples). Given a hypergraph $\Omega$, we define the corresponding ranks as $\rank_\Omega, \nnrank_\Omega, \psdrank_\Omega$ \cite{De19d}. One particular subclass of decompositions considered in this paper arises from trees (i.e.\ connected graphs containing no loops). Note that the notion of a tree has no meaning for hypergraphs. While tree tensor decompositions do not show a gap between border rank and rank, some positive tensor decompositions exhibit a gap already for the simplest hypergraph, namely a simplex. For a detailed treatment of $\Omega$-decompositions we refer to \Cref{app:OmegaG-decompositions}, where $\Omega$-decompositions are defined via \emph{weighted simplicial complexes} $\Omega$ representing the hypergraph structure.

\begin{figure}
\begin{tikzpicture}

\begin{scope}[scale=0.7]
\draw[fill=gray!50!white] (0,0) -- (1,0) -- (0.5,0.86) -- cycle;

\node at (0.5,0.29) {\tiny $\alpha$};
\draw[fill=black] (0,0) circle (1.7pt);
\draw[fill=black] (1,0) circle (1.7pt);
\draw[fill=black] (0.5,0.86) circle (1.7pt);

\node at (2.5,0.46) {$\displaystyle \sum_{\alpha=1}^{r}$};
\node[anchor=west] at (3,0.46){$ \ket{x_{\alpha}} \otimes  \ket{y_{\alpha}} \otimes \ket{z_{\alpha}} $};
\node at (-1,0.4) {(a)};

\end{scope}

\begin{scope}[yshift=-2cm, scale=0.7]
\draw (0,0) -- (1,0) -- (0.5,0.86) -- cycle;

\draw[fill=black] (0,0) circle (1.7pt);
\draw[fill=black] (1,0) circle (1.7pt);
\draw[fill=black] (0.5,0.86) circle (1.7pt);

\node at (2.5,0.46) {$\displaystyle \sum_{\alpha,\beta,\gamma=1}^{r}$};
\node[anchor=west] at (3,0.46) {$\ket{x_{\alpha,\beta}} \otimes  \ket{y_{\beta,\gamma}} \otimes \ket{z_{\gamma, \alpha}} $};
\node at (-1,0.4) {(b)};

\node at (0.5,-0.25) {\tiny $\alpha$};
\node at (0.95,0.5) {\tiny $\beta$};
\node at (0.05,0.5) {\tiny $\gamma$};

\end{scope}

\begin{scope}[yshift=-4cm, scale=0.7]

\draw (-0.3,0) -- (0.45,0.3);
\draw (1.2,0.6) -- (1.45,0.7);
\draw[dotted] (0.45,0.3) -- (1.2,0.6);

\node at (-0.18,0.25) {\tiny $\alpha_1$};
\node at (0.7,0.1) {\tiny $\alpha_2$};
\node at (0.65,0.7) {\tiny $\alpha_{n-1}$};

\draw[fill=black] (-0.3,0) circle (1.7pt);
\draw[fill=black] (0.2,0.2) circle (1.7pt);
\draw[fill=black] (1.45,0.7) circle (1.7pt);


\node at (2.5,0.46) {$\displaystyle \sum_{\alpha_1, \ldots, \alpha_{n-1}=1}^{r}$};
\node[anchor=west] at (3,0.5) {$ \ket{x_{\alpha_1}} \otimes  \ket{y_{\alpha_1, \alpha_2}} \otimes \cdots \otimes \ket{z_{\alpha_{n-1}}} $};
\node at (-1,0.4) {(c)};

\end{scope}

\begin{scope}[yshift=-6cm, scale=0.7]
\draw (0.5,0.86) -- (-0.2,0);
\draw (0.5,0.86) -- (0.5,0);
\draw (0.5,0.86) -- (1.2,0);

\draw[fill=black] (-0.2,0) circle (1.7pt);
\draw[fill=black] (0.5,0) circle (1.7pt);
\draw[fill=black] (1.2,0) circle (1.7pt);
\draw[fill=black] (0.5,0.86) circle (1.7pt);

\node at (2.5,0.46) {$\displaystyle \sum_{\alpha,\beta,\gamma=1}^{r}$};
\node[anchor=west] at (3,0.46) {$ \ket{x_{\alpha}} \otimes  \ket{y_{\beta}} \otimes \ket{z_{\gamma}} \otimes \ket{w_{\alpha,\beta,\gamma}} $};
\node at (-1,0.4) {(d)};

\node at (0,0.5) {\tiny $\alpha$};
\node at (1,0.5) {\tiny $\gamma$};
\node at (0.67,0.3) {\tiny $\beta$};

\end{scope}


\end{tikzpicture}
\caption{\textbf{Examples of decomposition types.} Hypergraph structures $\Omega$ give rise to different tensor decompositions. Each vertex corresponds to a local space and each (hyper-)edge corresponds to an index in the decomposition. If $\Omega$ is the $n$-simplex, we obtain standard tensor decomposition (a), for the $n$-cycle, we obtain the cyclic decomposition (b), for a line of length $n$ we get the decomposition in (c) and for a tree, we obtain a tree tensor network (d).}
\label{fig:hypergraphDec}
\end{figure}

\subsection{Symmetric decompositions}
\label{ssec:symmetricDec}

Many tensor decompositions admit a symmetrized version, where we define a symmetric decomposition by constraining the local elements to being the same. For example, the symmetric tensor decomposition is given by
$$ T = \sum_{\alpha = 1}^{r} \ket{v_{\alpha}} \otimes \ket{v_{\alpha}} \otimes \cdots \otimes \ket{v_{\alpha}}.$$
A psd-decomposition as given in Equation \eqref{eq:SigmaNpsdDec} is called symmetric if all local matrices $A_{j}, B_{j},\ldots$ are identical
for every fixed $j$. In this situation, we denote the corresponding ranks by $\symmrank$ for the unconstrained, $\symmpsdrank$ for the positive semidefinite, and $\symmnnrank$ for the nonnegative decomposition.

For a cyclic decomposition, its translational invariant (t.i.) version is given by
\begin{equation}
\label{eq:CyclicTiDec}
 T = \sum_{\alpha_1, \ldots, \alpha_n = 1}^{r} \ket{v_{\alpha_1, \alpha_2}} \otimes \cdots \otimes \ket{v_{\alpha_n, \alpha_1}}
\end{equation}
Similarly, a t.i.\ psd-decomposition is given by a decomposition as in Equation \eqref{eq:CyclicPsdDecs} where all local matrices $A_j, B_j, \ldots$ are identical for every fixed $j$.
We denote the corresponding ranks by $\tiosr$, $\tipsdosr$, and $\tinnosr$.

Symmetric decompositions generalize to arbitrary structures $\Omega$ by introducing a group action $G$ on the set of vertices $\{1, \ldots, n\}$ that is compatible with the hypergraph structure (see \Cref{fig:symmetricDec} for an example). This gives rise to $(\Omega,G)$-decompositions, each with its corresponding rank. We refer to \Cref{app:OmegaG-decompositions} for further details.

\begin{figure}
\centering
\begin{tikzpicture}[scale=0.8]

\draw (18:0.5cm) -- (90:0.5cm) -- (18+2*72:0.5cm) --
(18+3*72:0.5cm) -- (18+4*72:0.5cm) -- cycle;
\filldraw (18:0.5cm) circle (1.2pt);
\filldraw (18+72:0.5cm) circle (1.2pt);
\filldraw (18+2*72:0.5cm) circle (1.2pt);
\filldraw (18+3*72:0.5cm) circle (1.2pt);
\filldraw (18+4*72:0.5cm) circle (1.2pt);

\draw[-stealth, color3, thick] (30:0.58cm) to[out=118, in=-10] (78:0.58cm);
\draw[-stealth, color3, thick, rotate=72] (30:0.58cm) to[out=118, in=-10] (78:0.58cm);
\draw[-stealth, color3, thick, rotate=2*72] (30:0.58cm) to[out=118, in=-10] (78:0.58cm);
\draw[-stealth, color3, thick, rotate=3*72] (30:0.58cm) to[out=118, in=-10] (78:0.58cm);
\draw[-stealth, color3, thick, rotate=4*72] (30:0.58cm) to[out=118, in=-10] (78:0.58cm);

\draw[-stealth, color4, thick] (-82:0.35cm) to[out=55, in=-163] (-26:0.35cm);
\draw[-stealth, color4, thick, rotate=72] (-82:0.35cm) to[out=55, in=-163] (-26:0.35cm);
\draw[-stealth, color4, thick, rotate=2*72] (-82:0.35cm) to[out=55, in=-163] (-26:0.35cm);
\draw[-stealth, color4, thick, rotate=3*72] (-82:0.35cm) to[out=55, in=-163] (-26:0.35cm);
\draw[-stealth, color4, thick, rotate=4*72] (-82:0.35cm) to[out=55, in=-163] (-26:0.35cm);

\node[anchor=west] at (0.7,0) {$\displaystyle \sum_{\alpha_1, \ldots, \alpha_n = 1}^{r} \ket{v_{\alpha_1, \alpha_2}} \otimes \ket{v_{\alpha_2, \alpha_3}} \otimes \cdots \otimes \ket{v_{\alpha_n, \alpha_1}}$};

\draw[-stealth, color4, thick] (3.8,0.3) to[out=45, in=135] (5,0.3);
\draw[-stealth, color4, thick] (5.9,0.3) to[out=45, in=135] (7,0.3);
\draw[-stealth, color4, thick] (8.5,-0.35) to[out=-155, in=-25] (3.8,-0.35);

\end{tikzpicture}
\caption{A group action on a hypergraph $\Omega$ is a group action on the vertices (red), that induces a well-defined group action on the edges (green). The group action gives rise to constraints on the local tensors. In this example the group action gives rise to the constraint $\ket{v_{\alpha,\beta}^{[i]}} = \ket{v_{\alpha,\beta}^{[j]}}$ for $i, j \in \{1,\ldots,n\}$ (blue).}
\label{fig:symmetricDec}
\end{figure}

\subsection{Positive and invariant decompositions of multipartite psd matrices}
\label{ssec:matrixDec}

There is a similar hierarchy of tensor decompositions for psd matrices $\rho \in \mathcal{M}_d(\mathbb{C})^{\otimes n}$ containing the unconstrained tensor decomposition\footnote{An unconstrained tensor decomposition refers to the fact that the decomposition contains no positivity constraints in its local terms.}
\begin{equation}
\label{eq:MatrixDec}
 \rho = \sum_{\alpha = 1}^r A_{\alpha} \otimes B_{\alpha} \otimes \cdots \otimes C_{\alpha}
\end{equation}
where $A_{\alpha}, B_{\alpha}, C_{\alpha} \in \mathcal{M}_d(\mathbb{C})$,
the separable decomposition, where $A_{\alpha}, B_{\alpha}, C_{\alpha}$ are restricted to being psd matrices, and the local purification form, which consists of a factorization $\rho = L L^{\dag}$ together with an unconstrained tensor decomposition of $L$. We denote the corresponding ranks by $\rank(\rho)$, $\seprank(\rho)$, and $\purirank(\rho)$. In the unconstrained case and the separable case, $\rank(\rho)$ and $\seprank(\rho)$ are the minimal $r \in \mathbb{N}$ such that a decomposition of the form \eqref{eq:MatrixDec} exists. The purification rank $\purirank(\rho)$ is the minimal $\rank(L)$ among all purifications of $\rho$. One can define arbitrary $\Omega$-decompositions by arranging the indices of the local matrices differently. We refer to \Cref{app:OmegaG-decompositions} for details.

The hierarchy of positive tensor decompositions embeds in the framework of types of matrix tensor decompositions as follows. Choosing
\begin{equation}
\label{eq:rhoT}
\rho_T \coloneqq \sum_{j_1, \ldots, j_n=1}^{d} T_{j_1, \ldots, j_n} \ket{j_1, \ldots, j_n}\bra{j_1, \ldots, j_n},
\end{equation}
every unconstrained tensor decomposition of $\rho_T$ can be made into an unconstrained tensor decomposition of $T$ by ignoring the off-diagonal terms. The same is true for the separable and nonnegative decomposition as well as the positive semidefinite decomposition and the purification form. In particular, we have that
\begin{align*} \rank(T) &= \rank(\rho_T), \\ \nnrank(T) &= \seprank(\rho_T), \\ \psdrank(T) &= \purirank(\rho_T).
\end{align*}
including all different arrangements of indices as well as symmetries. For a detailed exposition and a proof of the correspondence we refer to \cite[Theorem 38]{De19} and \cite[Theorem 43]{De19d}.

If a certain type of rank for nonnegative tensors exhibits a gap between border rank and rank, then this gap is also present for the corresponding matrix tensor rank. For example, for any tensor $T$ with $\bpsdrank(T) < \psdrank(T)$ we have that
\begin{align*}
\bpurirank(\rho_T) &\leq \bpsdrank(T) \\ &< \psdrank(T) \\ &= \purirank(\rho_T)
\end{align*}
where we have used the correspondence above in the first and the last step.

\subsection{Border ranks}
\label{ssec:borderRank}

The rank of a tensor relates to complexity measures in various descriptions \cite{La11, Fa22, Ja17}. Are these complexities robust to approximations? In other words, given an approximation error $\varepsilon > 0$, how does the smallest rank of an approximation
$$ \rank^{\varepsilon}(T) \coloneqq \inf_{\Vert M - T \Vert < \varepsilon} \rank(M)$$
relate to its exact counterpart $\rank(T)$? For a fixed approximation error, Ref. \cite{De20} shows that for certain $\ell_p$-norms and Schatten $p$-norms, the ranks of the corresponding decompositions
$\nnrank^{\varepsilon}$ and $\seprank^{\varepsilon}$ can be upper bounded independently of the system dimension, in contrast to their exact counterparts. However, these bounds depend on $\varepsilon$ and therefore diverge if $\varepsilon \to 0$.

The border rank \cite{La11, Bi80,Zu17} is an asymptotic measure for the approximate rank\footnote{Note that the function $\varepsilon \mapsto \textrm{rank}^{\varepsilon}(T)$
 is constant for  $\varepsilon$ small enough.}, i.e.\
$$ \brank(T) = \lim_{\varepsilon \to 0} \rank^{\varepsilon}(T).$$
That is, $\brank(T) \leq r$ if and only if there exists a converging sequence of tensors $T_n \to T$ such that $\rank(T_n) \leq r$. By definition, we have $$\rank^{\varepsilon}(T) \leq \brank(T) \leq \rank(T)$$ for every fixed approximation error $\varepsilon > 0$. Note that the border rank is independent of the choice of the norms for convergence since all norms are equivalent in finite dimensional vector spaces.\footnote{That is, there exist constants $A, B > 0$, such that $A \Vert \cdot \Vert_1 \leq  \Vert \cdot \Vert_2  \leq B \Vert \cdot \Vert_1$}. Therefore, if $T_n \to T$ converges, it also converges in any other norm.
Moreover, the border rank is independent of normalization. If $\brank(T) = r$, then we also have $\brank(T/\Vert T\Vert) = r$ for any norm.

The border rank of all other notions of ranks presented in this section is defined similarly. We denote them by placing an underline on the usual identifier of the corresponding rank.

It is known that in the case of the matrix rank (i.e.\ the bipartite tensor rank), we have that
$$\brank(T) = \rank(T)$$ for every bipartite tensor $T$. This follows from the fact that the rank corresponds to the size of the largest submatrix of $T$ with non-zero determinant and that the determinant is continuous. Alternatively, the statement follows from the Eckart--Young approximation theorem \cite{Ec36}.

However, for multipartite tensor decompositions, there are tensors $T$ such that
$$\brank(T) < \rank(T).$$
Whenever such a situation appears, we will say that there is a \emph{gap between rank and border rank}. Moreover, if there exists a sequence of multipartite tensors $(T_n)_{n \in \mathbb{N}}$ with $T_n \in (\mathbb{C}^d)^{\otimes n}$ such that $\brank(T_n)$ is constant and $\rank(T_n) \to \infty$, we will say that there is a \emph{border rank separation}.

Results on the existence and non-existence of gaps between border ranks and ranks include unconstrained tensor decompositions with various arrangement of indices \cite{La11,Ch20, Ba21}, bipartite locally positive decompositions \cite{Fa14}, and the standard nonnegative tensor decomposition \cite{Qi16}. For completeness, we give a brief overview of these results in the next section.

\section{Gaps between rank and border rank}
\label{sec:results}

Here we present the results summarized in \Cref{fig:borderRankResults}. Throughout, gaps between ranks and border ranks are established by giving explicit examples of tensors exhibiting them.

\subsection{Standard tensor decompositions}
\label{ssec:standardBorder}
Since the matrix rank does not exhibit a gap between border rank and rank, systems of size $n=3$ are the smallest examples with a gap between border rank and rank. While this has been studied for the standard and symmetric tensor rank, we extend this investigations in this subsection also to positive tensor decompositions for nonnegative tensors and positive semidefinite matrices.

For the standard (unconstrained) tensor decomposition, the unnormalized $n$-partite $W$-state
$$W_n \coloneqq \ket{0,\ldots, 0,1} + \ket{0,\ldots, 1,0} + \ldots + \ket{1,0, \ldots, 0}$$
exhibits a gap between border rank and rank as well as between symmetric border rank and rank for system sizes $n \geq 3$. Specifically, for $\varepsilon > 0$, the family of tensors
\begin{equation} 
\label{eq:W-BorderRank}
W_{n}^{\varepsilon} = \frac{1}{\varepsilon}\left(\ket{0} + \varepsilon \ket{1}\right)^{\otimes n} - \frac{1}{\varepsilon} \ket{0, \ldots, 0}
\end{equation}
imply that
$$\bsymmrank(W_n) = \brank(W_n) = 2$$
since $\lim_{\varepsilon \to 0} W_{n}^{\varepsilon} = W_{n}$.
On the other hand, ${\rank(W_n) = n}$ (see \Cref{prop:WnLoweBound} in \Cref{app:rankInequalities} for a self-contained proof).\footnote{Alternatively, this statement can be shown for the symmetric rank by using Sylvester's theorem for binary forms \cite{Sy52, Co11}. The statement for the standard tensor rank follows by the validity of Comon's conjecture for tensors with border rank $2$ \cite{Ba13}.}

We now show that the $W_n$-tensor exhibits a gap between rank and border rank for the psd-rank. For this purpose, we first present a lower bound of the $\psdrank$ using the following lemma and then compute an upper bound of the border rank. 

\begin{lemma}
Let $T$ be a nonnegative tensor. Then,
\label{lem:rank-psdrank-bound}
$$ \rank(T) \leq \psdrank(T)^2$$
The same bound holds for arbitrary arrangements of indices $\Omega$ with arbitrary symmetry-constraints.
\end{lemma}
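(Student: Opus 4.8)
The plan is to show that any psd-decomposition of $T$ of size $r$ can be converted, by merging the row and column indices of each local psd matrix into one index of size $r^{2}$, into a standard (unconstrained) tensor decomposition with $r^{2}$ terms. Concretely, set $r=\psdrank(T)$ and fix a decomposition
$$T_{j_1,\ldots,j_n}=\sum_{\alpha,\beta=1}^{r}\bigl(A^{(1)}_{j_1}\bigr)_{\alpha,\beta}\bigl(A^{(2)}_{j_2}\bigr)_{\alpha,\beta}\cdots\bigl(A^{(n)}_{j_n}\bigr)_{\alpha,\beta}$$
with $A^{(k)}_{j}\in\mathcal{M}_r^{+}(\mathbb{C})$. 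First I would introduce the compound index $\mu=(\alpha,\beta)\in\{1,\ldots,r\}^{2}$ and, for each site $k$, define the local vectors $\ket{v^{(k)}_{\mu}}\coloneqq\sum_{j}\bigl(A^{(k)}_{j}\bigr)_{\alpha,\beta}\ket{j}$. Reading the identity above entrywise in $(j_1,\ldots,j_n)$ then gives $T=\sum_{\mu=1}^{r^{2}}\ket{v^{(1)}_{\mu}}\otimes\cdots\otimes\ket{v^{(n)}_{\mu}}$, a standard decomposition with $r^{2}$ summands, so $\rank(T)\leq r^{2}$. The local vectors are in general complex even though $T$ is nonnegative, but this is harmless since the standard tensor rank is taken over $\mathbb{C}$, and the positive semidefiniteness of the $A^{(k)}_{j}$ plays no role once the algebraic identity is in hand.

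For the second claim the same index-merging works verbatim for an arbitrary hypergraph (or weighted simplicial complex) $\Omega$. In a psd $\Omega$-decomposition each local matrix $A^{[v]}_{j_v}$ is psd with both its row and column indices labelled by the (hyper-)edges incident to $v$; merging, for every edge $e$, the two copies $\alpha_e$ and $\beta_e$ of that edge index into a single index $\mu_e\in\{1,\ldots,r\}^{2}$ yields local tensors carrying one index of size $r^{2}$ per incident edge. Since each $\mu_e$ is shared by exactly the same local tensors as the original edge, contracting according to $\Omega$ still reproduces $T$, giving a standard $\Omega$-decomposition of bond dimension $r^{2}$, i.e.\ $\rank_{\Omega}(T)\leq\psdrank_{\Omega}(T)^{2}$. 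If the psd-decomposition is invariant under a group action on the vertices, so that local matrices in the same orbit coincide, then the merged tensors in the same orbit coincide as well, so the bound descends to the symmetry-constrained ranks without change.

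I do not expect a real obstacle here: the statement is a bookkeeping argument, and the only points needing a line of care are (i) recording that passing to complex local vectors is legitimate, and (ii) checking that merging $\alpha_e$ with $\beta_e$ respects the incidence structure of $\Omega$ and any invariance constraints, so that the contraction pattern and the group action are literally unchanged. If one wanted a sharper statement, the argument in fact bounds $\rank(T)$ by the number of distinct compound indices, but $r^{2}$ is the clean uniform bound and is exactly what is needed downstream, e.g.\ to deduce $\psdrank(W_n)\geq\sqrt{\rank(W_n)}=\sqrt{n}$.
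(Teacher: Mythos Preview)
Your proposal is correct and follows essentially the same argument as the paper: merge the row and column indices $(\alpha,\beta)$ of each local psd matrix into a single compound index of size $r^{2}$, and read off an unconstrained $(\Omega,G)$-decomposition whose local vectors have entries $(A^{[i]}_j)_{\alpha,\beta}$. The paper phrases this via $\mathcal{L}=\mathcal{I}\times\mathcal{I}$ with projections $p_1,p_2$, but the content is identical, including your observations that positivity is not used and that $G$-invariance is preserved under the index merge.
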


We refer to \Cref{lem:rank-psdrank-bound_general} in \Cref{app:rankInequalities} for a proof. \Cref{lem:rank-psdrank-bound} implies that $\psdrank(W_5) \geq 3$ as well as $\psdrank(W_n) = \Omega(\sqrt{n})$.\footnote{i.e.\ $\psdrank$ is asymptotically lower bounded by $D \cdot \sqrt{n}$ for some constant $D$} It is not known if this lower bound is tight.

For $\varepsilon > 0$ the family of tensors $\widetilde{W}_n^{\varepsilon}$ defined by psd matrices
\renewcommand{\arraystretch}{1}
\begin{equation} \label{eq:defAmatrices} A^{\varepsilon}_0 = \frac{C}{\sqrt[n-1]{\varepsilon}} \left(\begin{array}{cc} 1 & e^{\frac{i \pi}{n}} \\ e^{-\frac{i \pi}{n}} & 1 \end{array}\right), \quad A^{\varepsilon}_1 = \varepsilon \left(\begin{array}{rr} 1 & 1 \\ 1 & 1 \end{array}\right)
\end{equation}
for a suitable constant $C \in \mathbb{R}$ provides an arbitrarily close approximation of $W_n$ which proves that $$\bpsdrank(W_n) = \bsymmpsdrank(W_n) = 2.$$
In other words, there is a border rank separation for $n$-partite psd-decompositions with $n \geq 5$.

In the $3$-partite case, a dimension analysis shows that $\symmpsdrank(W_3) \geq 3$, thereby also leading to a gap between border rank and rank (see \Cref{prop:symmetryAnalysisWState} in the appendix). However, for the non-symmetric case in the tripartite scenario the existence of a gap between border rank and rank is still open (see \Cref{conj:psdrank} and \Cref{sec:Conclusion}).

In contrast to the psd-decomposition, the nonnegative (and subsequently also the separable) decomposition exhibit no gap between border rank and rank in the $n$-partite case for arbitrary $n$, as we will see in \Cref{sec:noBorderRanks}. The special case of nonnegative decompositions has already been shown in \cite{Li09}.

\subsection{Border ranks for cyclic decompositions}
\label{ssec:cyclic}

We now consider the cyclic decomposition with and without translational invariance. Barthel et al.\ \cite{Ba21} show that for the operator Schmidt rank, there is a gap between border rank and rank for the \emph{two-domain state}, given by
\begin{align*}
\tau \coloneqq &\sum_{\alpha=1}^{k} \ket{\alpha, \alpha}^{\otimes n} \\ + &\sum_{i = 0}^{n-2} \sum_{\alpha \neq \beta = 1}^{k} \ket{\alpha, \alpha}^{\otimes i} \otimes \ket{\alpha, \beta} \otimes \ket{\beta, \beta}^{\otimes (n-i-2)} \otimes \ket{\beta, \alpha}.
\end{align*}
In particular, they prove that $\bosr(\tau) \leq k < \osr(\tau)$. 

The construction in \cite{Ba21} also leads to a gap between border rank and rank for nonnegative cyclic decompositions, which we briefly review now. Let $\varepsilon > 0$ and define, for every $\alpha, \beta \in \{1, \ldots, k\}$, the nonnegative vectors
$$\ket{v_{\alpha,\beta}^{\varepsilon}} = \varepsilon \ket{\alpha, \beta} + (1-\varepsilon) \delta_{\alpha,\beta} \ket{\alpha,\beta}$$
where $\delta_{\alpha,\beta}$ is the Kronecker-delta, as well as
$$\ket{w_{\alpha,\beta}^{\varepsilon}} = \delta_{\alpha,\beta} \ket{\alpha, \beta} + \frac{1}{\varepsilon}(1-\delta_{\alpha,\beta}) \ket{\alpha,\beta}.$$
Setting
$$\tau^{\varepsilon} = \sum_{\alpha_i = 1}^{k} \ket{v^{\varepsilon}_{\alpha_1, \alpha_2}} \otimes \ket{v^{\varepsilon}_{\alpha_2, \alpha_3}} \otimes \cdots \otimes \ket{v^{\varepsilon}_{\alpha_{n-1}, \alpha_n}} \otimes \ket{w^{\varepsilon}_{\alpha_n, \alpha_1}}$$
we obtain $\tau_{\varepsilon} = \tau + \mathcal{O}(\varepsilon)$
and therefore $\bnnosr(\tau) \leq k$. This implies the following chain of inequalities
$$\bosr(\tau) \leq \bnnosr(\tau) \leq k < \osr(\tau) \leq \nnosr(\tau)$$
where the strict inequality is shown in \cite[Proposition 5]{Ba21} and the inequalities between $\osr$ and $\nnosr$ hold because the latter is a constrained version of the former.

While \Cref{lem:rank-psdrank-bound} cannot be employed to prove a gap between border psd-rank and psd-rank since the gap between border rank and rank is too small, we will see that a gap also appears for t.i.\ psd-decompositions. For this reason, we expect the same behavior for the non-symmetric case.

\begin{conjecture}
\label{conj:psdrank}
There is a nonnegative tensor $T$ such that $\bpsdosr(T) < \psdosr(T)$.
\end{conjecture}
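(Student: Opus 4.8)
The plan is to prove the conjecture in the same explicit style as the other gaps in this paper: exhibit a concrete nonnegative tensor $T$ together with a family $T_\varepsilon \to T$ of small cyclic psd-rank, and complement it with a lower bound on $\psdosr(T)$. The upper bound side costs essentially nothing. Since any nonnegative cyclic decomposition is in particular a cyclic psd-decomposition of the same bond dimension (replace each nonnegative bond vector by the diagonal psd matrix built from its entries), we have $\bpsdosr(S) \le \bnnosr(S)$ for every nonnegative tensor $S$; in particular $\bpsdosr(\tau) \le \bnnosr(\tau) \le k$ already for the two-domain state $\tau$ of \cite{Ba21}, via the approximating family $\tau^{\varepsilon}$ reviewed above, and this inequality is stable under any local modification of $\tau$ that is again built from nonnegative vectors. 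Hence the entire difficulty is concentrated in the lower bound $\psdosr(T) \ge k+1$.

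For the lower bound, the natural tool is \Cref{lem:rank-psdrank-bound}, which applied to the cyclic arrangement gives $\osr(T) \le \psdosr(T)^2$; so it would suffice to find $T$ with $\bosr(T) \le k$ but $\osr(T) \ge k^2 + 1$. The obstruction, already flagged in the text, is that for $\tau$ itself the gap between $\osr(\tau)$ and $\bosr(\tau)$ is far too small for this: it grows only mildly in $k$, nowhere near quadratically. The core of the plan is therefore to \emph{amplify} the rank/border-rank gap of the cyclic decomposition from a small additive gap to a quadratic one, while keeping the tensor nonnegative. Two avenues I would try: (i) ``tensor the bond space'', i.e.\ replace each MPS bond index $\alpha_i \in \{1,\dots,k\}$ by a pair $(\alpha_i,\alpha_i')$ and attach a structureless block on the primed coordinates, forcing every exact cyclic decomposition to carry bond dimension of order $k \cdot k'$ while the degenerating family $\tau^{\varepsilon}$ still works on the unprimed part; (ii) superimpose $\tau$-type correlations on several nested length scales along the $n$-cycle, so that the obstructions to a low-rank cyclic decomposition multiply rather than add. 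In either case the technical heart is an exact lower bound on $\osr(T)$, which I would attempt via a rank computation on a matricization of $T$ across a contiguous arc of the cycle.

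A second, independent route bypasses \Cref{lem:rank-psdrank-bound} and lower-bounds $\psdosr(T)$ directly, in the spirit of the dimension count behind $\symmpsdrank(W_3) \ge 3$ in \Cref{ssec:standardBorder}: parametrize the cyclic psd-decompositions of a given bond dimension $r$ by tuples of psd blocks, bound the dimension of the resulting semialgebraic set (or, better, pin down an explicit polynomial — e.g.\ a minor of a flattening — that must vanish on it), and verify that the chosen $T$ violates that constraint. A third candidate is simply to reuse the translational invariant tensor for which a gap $\btipsdosr < \tipsdosr$ is established elsewhere in this paper (see \Cref{fig:borderRankResults}): it is the most plausible witness for $\bpsdosr < \psdosr$ as well, provided one can show that dropping translational invariance does not let the exact psd-rank collapse all the way down to the border value.

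I expect the main obstacle to be exactly this quadratic-amplification step. One needs a nonnegative tensor whose ordinary cyclic rank is genuinely much larger — quadratically, not merely strictly — than its border rank, and, simultaneously, an argument that the extra freedom of psd bond blocks cannot be used to recover that loss. Since psd factorizations are often surprisingly economical, the safest realization of the plan may ultimately be to prove a bespoke lower bound on $\psdosr(T)$ tailored to the particular $T$, rather than to route everything through the generic inequality $\osr(T) \le \psdosr(T)^2$.
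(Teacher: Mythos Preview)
This statement is a \emph{conjecture} in the paper, not a theorem: the authors explicitly leave it open, remarking only that the analogous gap exists in the translational-invariant case and that they therefore ``expect the same behavior for the non-symmetric case.'' There is no proof in the paper to compare against.

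Your proposal is accordingly not a proof but a research plan, and you are candid about this: the upper bound $\bpsdosr(\tau)\le k$ is indeed free, but every route you sketch for the lower bound is left incomplete. The first route (amplify the $\osr$/$\bosr$ gap to quadratic size so that \Cref{lem:rank-psdrank-bound} bites) is exactly the obstruction the paper identifies and does not know how to overcome; your two amplification ideas are plausible heuristics but you have not produced a tensor with $\osr(T)\ge k^2+1$ and $\bosr(T)\le k$, nor an argument that either construction achieves this. The second route (a direct dimension count on the semialgebraic set of cyclic psd-decompositions of bond dimension $r$) is reasonable in spirit but substantially harder than the $\symmpsdrank(W_3)$ case, where a single symmetric $2\times 2$ psd block is involved; here one faces $n$ independent psd blocks of size $r^2\times r^2$, and you have not identified the vanishing polynomial. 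The third route (reuse the t.i.\ witness $W_n$) fails as stated: the non-symmetric $\osr(W_n)=2$, so \Cref{lem:rank-psdrank-bound} gives only $\psdosr(W_n)\ge 2$, and you offer no alternative mechanism to prevent the exact psd-rank from collapsing once translational invariance is dropped.

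In short, you have correctly located where the difficulty lies and enumerated natural strategies, but none is carried through; the conjecture remains open in your write-up just as it does in the paper.
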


\subsection{Translational invariant cyclic decomposition}
\label{ssec:tiBorder}

We now prove the existence of gaps between border rank and rank for cyclic t.i.\ decompositions. We obtain border rank separations for all types of decompositions. 

First, for the unconstrained tensor decomposition, the rank of the t.i. decomposition of $W_n$ increases with system size, while $\osr(W_n) = 2$ \cite[Example 22]{De19}. There are several lower bounds regarding the t.i.\ operator Schmidt rank. While P\'erez-Garc\'ia et al. \cite{Pe07} show that $\tiosr(W_n) \geq \Omega(n^{1/3})$ using Wielandt's inequality \cite{Sa09}, the tightest lower bound \cite[Proposition 23]{De19} is given to the best of our knowledge by
\begin{equation}
\label{eq:WnlowerBound}
\tiosr(W_n) \geq \sqrt{n}
\end{equation}
using the irreducible form of MPS \cite{De17}.
On the other hand, Christandl et al. \cite{Ch20} show that $\btiosr(W_n) = 2$ by defining an approximate decomposition with  $\ket{v_{12}^{\varepsilon}} = \ket{v_{21}^{\varepsilon}} = 0$ and
$$\ket{v_{11}^{\varepsilon}} = \frac{1}{\varepsilon^{1/n}} \begin{pmatrix} 1 \\ \varepsilon \end{pmatrix} \text{ and } \ket{v_{22}^{\varepsilon}} = \frac{1}{\varepsilon^{1/n}} \begin{pmatrix} (-1)^{\frac{1}{n}} \\ 0 \end{pmatrix}$$
for arbitrary $\varepsilon > 0$. This proves a separation between border rank and rank for the t.i.\ cyclic decompositions when $n \geq 5$.

For the t.i.\ psd-decomposition, we obtain $\btipsdosr(W_n) \leq \bsymmpsdrank(W_n) = 2$ by defining the psd matrices
$$\left(B_{j}^{\varepsilon}\right)_{\alpha, \alpha'; \beta, \beta'} = \delta_{\alpha,\alpha'} \cdot \delta_{\beta, \beta'} \cdot \left(A_{j}^{\varepsilon}\right)_{\alpha, \beta}$$
where $A_{j}^{\varepsilon}$ is defined in Equation \eqref{eq:defAmatrices}. We obtain
\begin{align*} & \sum_{\alpha_i, \beta_i = 1}^{2} \left(B_{{j_1}}^{\varepsilon}\right)_{\alpha_1, \alpha_2; \beta_1, \beta_2} \cdots \left(B_{{j_n}}^{\varepsilon}\right)_{\alpha_n, \alpha_1; \beta_n, \beta_1} \\ =& \ \left(W_n\right)_{j_1, \ldots, j_n} + \mathcal{O}(\varepsilon^{1 + \frac{1}{n-1}}).
\end{align*}
Using \Cref{lem:rank-psdrank-bound} together with \Cref{eq:WnlowerBound} we have that 
\begin{equation}
\label{eq:tipsdrankLowerBound}
\tipsdosr(W_n) \geq \Omega(n^{1/4})
\end{equation}
and in particular $\tipsdosr(W_n) \geq 3$ as soon as $n \geq 17$. This proves the separation between border rank and rank for the t.i.\ cyclic psd-decomposition.

For the t.i.\ nonnegative decomposition we construct a tensor with a separation between border rank and rank for every odd $n \geq 5$. Consider again the tensor $W_n$. By the previous discussion we have 
$$ \tinnosr(W_n) \geq \tiosr(W_n) \geq \sqrt{n}.$$
We will now show that 
$$\btinnosr(W_n) \leq 2$$
if $n$ is odd. To this end, we use in the following the representation of a nonnegative cyclic decomposition
$$T_{j_1, \ldots, j_n} = \tr(A_{j_1} \cdots A_{j_n})$$
where $A_j \in \mathcal{M}_{r}(\mathbb{C})$ and $(A_j)_{\alpha,\beta} \geq 0$. This representation coincides with the cyclic decomposition in Equation \eqref{eq:CyclicTiDec} by setting
$$ \left(A_{j}\right)_{\alpha,\beta} = \Cbraket{j}{v_{\alpha,\beta}}.$$
It follows that the rank of the decomposition is specified by the size of the matrices $A_j$. Let
$$A_{0}^{\varepsilon} = \frac{1}{\sqrt[n-1]{\varepsilon}} \left(\begin{array}{cc} 0 & 1\\ 1 & 0\end{array}\right) = \frac{1}{\sqrt[n-1]{\varepsilon}} P_{\tau} \quad \quad A_{1}^{\varepsilon} = \varepsilon I_2$$
be multiples of a nonnegative representation of the cyclic group on $\{1,2\}$, where $\tau$ is the permutation $1 \mapsto 2$ and $2 \mapsto 1$ and $P_{\tau}$ the corresponding permutation matrix. We have
\begin{align*} \left(\widehat{W}^{\varepsilon}_n\right)_{j_1, \ldots, j_n} \coloneqq& \ \frac{1}{2} \tr\left(A_{j_1}^{\varepsilon} \cdots A_{j_n}^{\varepsilon}\right) \\ =& \ \frac{1}{2} \left\{\begin{array}{cl} 0 & \text{: } j_1 + \ldots + j_n \textrm{ even} \\[0.2cm]  \varepsilon^{k - 1 + \frac{k-1}{n-1}} & \text{: } j_1 + \ldots + j_n \textrm{ odd} \end{array}\right. 
\end{align*}
where $k \coloneqq j_1 + \cdots + j_n$. This implies that $\widehat{W}^{\varepsilon}_n = \frac{1}{2} W_n + \mathcal{O}(\varepsilon^2)$. We conclude that for $n \geq 5$ odd, we have 
\begin{align*} \btinnosr(W_n) &= 2 < \sqrt{5} \\ &\leq \tiosr(W_n) \leq \tinnosr(W_n).
\end{align*}
This construction generalizes to every $n$ and $p | (n-1)$ by replacing $\{1,2\}$ with $\{1, \ldots, p\}$, and $\tau$ by the translation on $\{1, \ldots, p\}$. Since the corresponding permutation matrices  $A_{0}^{\varepsilon}$ and $A_{1}^{\varepsilon}$ are of size $p \times p$, it follows that $\btinnosr(W_n) \leq p$.

\subsection{Decomposing multipartite psd matrices}
\label{ssec:borderRanksMatrixDecompositions}

The three types of positive decompositions for nonnegative tensors are related to the three positive decompositions for multipartite psd matrices (see \Cref{ssec:matrixDec}). This enables us to translate gaps between border ranks and ranks for positive tensor decompositions to gaps between border rank and rank for multipartite psd matrices. Given a tensor $T$, such that $\bpsdrank(T) < \psdrank(T)$, the matrix $\rho_T$ (Equation \eqref{eq:rhoT}) satisfies
\begin{align*}
\bpurirank(\rho_T) &\leq  \bpsdrank(T)\\
&< \psdrank(T) \\ &= \purirank(\rho_T).
\end{align*}
and thereby exhibits a gap between border rank and rank for $\purirank$. Analogously, one obtains gaps for matrix tensor decompositions whenever there is a gap in the corresponding tensor decomposition. This strategy results in gaps between border rank and rank for $\symmpurirank$, $\tipuriosr$, $\seposr$, and $\tiseposr$.

\subsection{Non-existence of border rank gaps}
\label{sec:noBorderRanks}

In the following, we provide two cases where no gaps between border rank and rank appear. First, we establish that for standard tensor decompositions (i.e.\ only containing one summation index), the $\nnrank$, $\symmnnrank$, $\seprank$, and the $\symmseprank$ do not exhibit a gap. Second, we prove that $\Omega$-decompositions arising from a tree $\Omega$ do not exhibit gaps between rank and border rank regardless of the positivity constraints.

The proof strategy is similar in both cases. We first show that every decomposition can be reduced to a particular normalized version (i.e.\ every local element satisfies a normalization constraint). Then, we apply the Bolzano--Weierstraß Theorem to the local elements to guarantee that every sequence of decompositions obtained from a converging sequence of global elements converges to a decomposition of the same rank.

\subsubsection{Nonnegative and separable standard tensor decompositions}

Let us now show that $\nnrank$, $\symmnnrank$, $\seprank$, and the $\symmseprank$ do not exhibit a gap between rank and border rank.

\begin{theorem}
\label{thm:nnDecNoBorderRank}
Let $(\rho_k)_{k \in \mathbb{N}}$ be a sequence of $n$-partite separable matrices with limit $\rho_k \to \rho$ and $\seprank(\rho_k) \leq r$ for every $k$. Then,
$$\seprank(\rho) \leq r$$
The same statement holds for $\symmseprank$. It also holds for sequences of nonnegative tensors together with $\nnrank$, and $\symmnnrank$.
\end{theorem}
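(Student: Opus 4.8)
The plan is to reduce each decomposition in the sequence to a compact normalized form and then extract a convergent subsequence of the local pieces via Bolzano--Weierstra\ss, exactly as announced in the excerpt's proof strategy. I will first treat the nonnegative tensor case with $\nnrank$ in detail, since the separable matrix case and the symmetric versions are cosmetic variations.

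First I would set up normalization. For each $k$, write $\rho_k = \sum_{\alpha=1}^r A_\alpha^{(k)} \otimes B_\alpha^{(k)} \otimes \cdots \otimes C_\alpha^{(k)}$ with all local matrices psd (in the tensor case: nonnegative rank-one-per-index vectors $\ket{v_\alpha^{(k)}}, \ldots, \ket{z_\alpha^{(k)}}$ with nonnegative entries). The key observation is that one can rescale within each summand: replacing $A_\alpha^{(k)} \mapsto \lambda_1 A_\alpha^{(k)}$, $B_\alpha^{(k)} \mapsto \lambda_2 B_\alpha^{(k)}$, \ldots, with $\lambda_1 \lambda_2 \cdots \lambda_n = 1$ leaves the summand unchanged. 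Using this freedom I can force, say, $\|A_\alpha^{(k)}\| = \|B_\alpha^{(k)}\| = \cdots = \|(\text{second-to-last factor})_\alpha^{(k)}\|$ for every $\alpha$ and $k$ (discarding any summand that is identically zero, which only decreases the count below $r$). Then the common norm $c_\alpha^{(k)}$ of the first $n-1$ factors satisfies $(c_\alpha^{(k)})^{n-1} \|(\text{last factor})_\alpha^{(k)}\| = \|\text{summand}_\alpha^{(k)}\| \le \|\rho_k\| \cdot (\text{something bounded})$; since $\rho_k \to \rho$ the norms $\|\rho_k\|$ are bounded, so every local factor in the normalized decomposition lies in a fixed bounded (hence compact) subset of the relevant matrix space — one has to check the last factor does not blow up, which follows because if $c_\alpha^{(k)} \to \infty$ then the last factor $\to 0$ and the summand's contribution, being a bounded-norm psd term, still stays controlled; a cleaner route is to instead normalize \emph{all $n$ factors} to unit norm and carry a nonnegative scalar weight $w_\alpha^{(k)} \ge 0$ out front, with $\sum_\alpha w_\alpha^{(k)} \le \|\rho_k\|_1$-type bound, so the weights live in a compact simplex-like set too.

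Next I would invoke Bolzano--Weierstra\ss: the tuple consisting of all $r$ weights and all $rn$ unit-norm local factors lives in a compact set, so along a subsequence $k_j$ everything converges: $w_\alpha^{(k_j)} \to w_\alpha$, $A_\alpha^{(k_j)} \to A_\alpha$, etc., with the limits again psd / entrywise-nonnegative (these properties are closed conditions) and of unit norm or zero. The multilinear map $(w,A,B,\ldots,C) \mapsto \sum_\alpha w_\alpha\, A_\alpha \otimes \cdots \otimes C_\alpha$ is continuous, so the limiting data gives a valid separable decomposition of $\lim_j \rho_{k_j} = \rho$ with at most $r$ terms. Hence $\seprank(\rho) \le r$. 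For $\symmseprank$ (and $\symmnnrank$) the reduction is identical except all factors within a summand are forced equal from the start, so compactness and continuity of the limit are unaffected; for the nonnegative tensor statements one replaces matrices by vectors with nonnegative entries and the same argument runs verbatim.

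The main obstacle is the normalization step: I must be sure that requiring equal norms (or unit norms plus a bounded weight) across all factors is genuinely achievable \emph{and} that it keeps every individual local factor in a bounded set uniformly in $k$. The danger is a "spreading" phenomenon where one factor's norm diverges while another's vanishes with the product staying finite — the rescaling gauge is precisely what kills this, but one has to phrase it carefully (the unit-norm-plus-weight formulation is the safest), and one must separately argue that the number of nonzero summands can only drop in the limit, never forcing it above $r$. Everything else — closedness of the psd and nonnegativity cones, continuity of the tensor product map, equivalence of norms in finite dimension — is routine. I would also remark that this argument shows the relevant set $\{\rho : \seprank(\rho) \le r\}$ is closed, which is the topological restatement promised in the introduction, and contrast it explicitly with the cyclic case where no such global rescaling gauge exists because the bond indices are shared between neighboring factors.
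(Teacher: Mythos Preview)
Your proposal is correct and follows the same two-step skeleton as the paper (normalize the local pieces into a compact set, then apply Bolzano--Weierstra\ss), but the normalization step is carried out differently. You use the rescaling gauge to force unit-norm local factors plus a nonnegative weight $w_\alpha^{(k)}$, and bound the weights via $\sum_\alpha w_\alpha^{(k)} = \Vert \rho_k \Vert_1$ (trace additivity for psd summands). The paper instead bounds the local factors directly, without any rescaling, by using the psd inequality $\lambda_{\max}(A) \le \lambda_{\max}(A+B)$ for psd $A,B$ (their Lemma~\ref{lem:psdInequality}): since each summand $\rho_{\alpha,k}^{\otimes n}$ is psd and their sum is $\rho_k$, one gets $\Vert \rho_{\alpha,k}\Vert_\infty^n = \Vert \rho_{\alpha,k}^{\otimes n}\Vert_\infty \le \Vert \rho_k\Vert_\infty$ immediately in the symmetric case. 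Your route is slightly more general-purpose (it works for any decomposition with a per-summand scaling gauge and does not need the operator-norm lemma), while the paper's route is cleaner in the symmetric case precisely because there the scaling gauge is unavailable --- a point your ``reduction is identical'' remark glosses over but which your unit-norm-plus-weight reformulation still handles correctly, since one can always pull out $w_\alpha = \tr(\rho_{\alpha,k})^n$ and reabsorb $w_\alpha^{1/n}$ at the end. Both arguments ultimately hinge on the same positivity fact: psd summands cannot cancel, so each term is dominated by the whole.
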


Since the nonnegative decomposition corresponds to the separable decomposition of a diagonal matrix, it suffices to show the statement for separable decompositions. We will now give a sketch of the proof; for details and the general statement, we refer to \Cref{app:nnDecNoBorderRank}.
Without loss of generality, let $\rho_k$ be a sequence of $n$-partite separable matrices with a separable decomposition
$$ \rho_k = \sum_{\alpha = 1}^{r} \rho^{[1]}_{\alpha, k} \otimes \rho^{[2]}_{\alpha, k} \otimes \cdots \otimes \rho^{[n]}_{\alpha, k}.$$
This sequence can be chosen to be normalized, i.e.\ ${\Vert \rho^{[i]}_{\alpha,k} \Vert \leq C \Vert \rho \Vert}$, so that $(\rho^{[i]}_{\alpha,k})_{k \in \mathbb{N}}$ is a bounded sequence. The Bolzano--Weierstraß Theorem (\Cref{thm:BolzanoWeierstrass} in the appendix) implies the existence of a limiting point of a subsequence leading to a separable decomposition of $\rho$ with $\seprank(\rho) \leq r$.

This generalizes the result in \cite{Qi16}, which shows that the multipartite nonnegative standard tensor decomposition does not exhibit a gap between rank and border rank.

\subsubsection{Tree tensor network decompositions}

Tensor networks without local positivity show border rank phenomena if and only if they contain loops in the hypergraph $\Omega$ that specifies the decomposition structure \cite{Ba21}. In particular, if a hypergraph $\Omega$ is a tree, the corresponding unconstrained tensor network decomposition exhibits no gap between rank and border rank (see also \Cref{app:treeDecompositions} for a brief review of this result). In the following we will prove that the same is the case for positive tensor networks. We show the following:

\TabPositions{0pt, 0.35\linewidth, 0.45\linewidth}
\begin{theorem}
\label{thm:treeBorderRank}
Let $\Omega$ be a tree and $T$ a nonnegative tensor, $\rho_1$ an $n$-partite separable matrix and $\rho_2$ an $n$-partite psd matrix. Then, the following holds:
\begin{enumerate}[label=(\roman*)]
	\item\label{treeI} $\bnnrank_{\Omega}(T)$ \tab $=$ \tab $\nnrank_{\Omega}(T)$
	\item\label{treeII} $\bpsdrank_{\Omega}(T)$ \tab $=$ \tab $\psdrank_{\Omega}(T)$
	\item\label{treeIII} $\bseprank_{\Omega}(\rho_1)$ \tab $=$ \tab $\seprank_{\Omega}(\rho_1)$
	\item\label{treeIV} $\bpurirank_{\Omega}(\rho_2)$ \tab $=$ \tab $\purirank_{\Omega}(\rho_2)$
\end{enumerate}
\end{theorem}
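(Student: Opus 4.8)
\emph{Proof plan.} The plan is to deduce all four items from a single mechanism, the one already sketched above: a positivity-preserving \emph{gauge normalization} of tree decompositions, followed by the Bolzano--Weierstraß Theorem (\Cref{thm:BolzanoWeierstrass}). (One could instead reduce \ref{treeI} and \ref{treeII} to \ref{treeIII} and \ref{treeIV} by pinching: for a converging sequence of separable matrices with bounded $\seprank_\Omega$, replacing each term by its diagonal part does not increase $\seprank_\Omega$ and lands on the diagonal embeddings $\rho_{T_k}$, and similarly $\purirank_\Omega$ is non-increasing under the local completely dephasing channel, so that the border ranks of $T$ and of $\rho_T$ agree; but I will treat all cases uniformly.)

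\emph{Step 1 (normalization).} First I would root $\Omega$ at an arbitrary vertex $v_0$ and orient every edge towards $v_0$, so that each $v \neq v_0$ has a unique parent edge and a set of child edges. Starting from an $\Omega$-decomposition of the relevant type and bond dimension $\le r$, I sweep the vertices from the leaves inward: at each $v \neq v_0$ and each value $\gamma$ of its parent-edge index, I rescale the slice of the local tensor obtained by fixing that index to have unit norm, absorbing the positive scalar into the matching block of the parent's local tensor, and I delete values $\gamma$ with a vanishing slice (which only lowers the bond dimension). The rescaling is by a \emph{positive} number -- on the parent side, conjugation by a positive diagonal matrix -- so nonnegativity (resp.\ the separable form, resp.\ the Löwner order) and the contracted element are both preserved. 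After the sweep every non-root local tensor has all parent-slices of unit norm, hence norm at most a constant $c(\Omega,r,d)$ depending only on $\Omega$, the bound $r$, and the local dimension $d$. For \ref{treeIV} I would first invoke the standard bound on the ancilla dimension of a local purification on a tree (at most a function of $\Omega$ and $r$), so that the purifying tensor $L$ lives in a fixed finite-dimensional space, and then apply the \emph{unconstrained} tree normalization of \Cref{app:treeDecompositions} to $L$.

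\emph{Step 2 (compactness).} Now let $\rho_k \to \rho_1$ with $\seprank_\Omega(\rho_k) \le r$, and fix a normalized decomposition of each $\rho_k$. The non-root local tensors lie in the compact ball of radius $c(\Omega,r,d)$. For the root tensor I use positivity to rule out cancellations: every term of the contraction sum is positive (entrywise, resp.\ in the Löwner order), so $\rho_k$ dominates each individual term; choosing along every branch a bond configuration on which the unit-normalized non-root factors stay bounded below by an explicit $\delta(\Omega,r,d) > 0$ forces $\Vert \rho^{[v_0]}_k \Vert \le \delta^{-1} \sup_k \Vert \rho_k \Vert < \infty$. Hence the local tensors of all $\rho_k$ lie in a common compact set; by Bolzano--Weierstraß I pass to a subsequence along which each local tensor converges. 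The limiting local tensors are again of the prescribed type (the nonnegative orthant and the PSD cone are closed), and the tree contraction -- resp.\ $L \mapsto L L^\dagger$ for \ref{treeIV} -- is continuous, so their contraction equals $\lim_k \rho_k = \rho_1$; thus $\seprank_\Omega(\rho_1) \le r$, and the same reasoning yields \ref{treeI}, \ref{treeII} and \ref{treeIV}.

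\emph{Main obstacle.} The hard part will be making Step 1 and the root bound in Step 2 work simultaneously: one needs a gauge transformation that is at once rank-non-increasing, positivity-preserving, and strong enough to control \emph{all} local tensors -- the root included -- uniformly in $k$. This is exactly the point where both the loop-free structure (so that the leaves-to-root sweep is well defined and the bound does not feed back on itself) and positivity (so that no cancellation can hide an unbounded local tensor inside a bounded global one) are indispensable; for \ref{treeIV} the additional ingredient is the a priori bound on the ancilla dimension, without which the tensors $L_k$ would not live in a fixed space and Bolzano--Weierstraß could not be applied.
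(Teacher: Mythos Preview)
Your treatment of \ref{treeI} and \ref{treeIII} is essentially the paper's: both normalize the non-root local tensors by a leaves-to-root sweep of positive rescalings (the paper's \Cref{lem:boundSepDecTree}) and then invoke Bolzano--Weierstra\ss. Your root bound is phrased differently---you search for a bond configuration on which every non-root factor is bounded below and use domination by a single term---whereas the paper simply observes that after normalization $\sum_\beta \tr(\rho^{[v_0]}_\beta)=\tr(\rho)$, which bounds each $\rho^{[v_0]}_\beta$ directly. Your greedy root-to-leaf construction of such a configuration does work on a tree (for each fixed parent index at least one child configuration has trace $\ge r^{-(\deg v -1)}$), but the exponent in your bound should be $\delta^{-(n-1)}$ rather than $\delta^{-1}$.

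For \ref{treeIV} you take a genuinely different route from the paper. The paper converts a purification into a normalized pure state $\ket\psi$ with $\rank_\Omega(\ket\psi)\le r$ together with local CPTP maps via \Cref{thm:QQCorr}, and then uses compactness of unit vectors and of $\CPTP(d_1,d_2)$ plus the unconstrained tree result. You instead bound the ancilla dimension $d'$ a priori (this is indeed possible: restrict each site's ancilla to the span of the column spaces of the $L^{[i]}_\beta$, which has dimension $\le d\cdot r^{\deg i}$, without changing $LL^\dagger$) and then apply the unconstrained tree normalization of \Cref{app:treeDecompositions} directly to $L_k$; since $\Vert L_k\Vert_2^2=\tr(\rho_k)$ is bounded, the root factor is controlled and Bolzano--Weierstra\ss\ finishes. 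Your argument is more self-contained (it avoids building the $\QQCorr$ correspondence), while the paper's argument recycles machinery it needs anyway and dispenses with the ancilla bound.

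There is, however, a genuine gap in your ``uniform'' handling of \ref{treeII}. In a psd $\Omega$-decomposition
\[
T_{j_1,\dots,j_n}=\sum_{\alpha,\alpha'\in\mathcal I^{\sfacetF}} \bigl(E^{[1]}_{j_1}\bigr)_{\alpha_{|_1},\alpha'_{|_1}}\cdots \bigl(E^{[n]}_{j_n}\bigr)_{\alpha_{|_n},\alpha'_{|_n}},
\]
the individual summands are products of \emph{off-diagonal} entries of psd matrices and are generically complex, so your claim that ``every term of the contraction sum is positive'' fails here and the domination argument for the root does not go through. The diagonal gauge you describe does preserve positive semidefiniteness of each $E^{[i]}_j$, but that is not the same as making the contraction a sum of nonnegative terms. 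Your pinching alternative---reducing \ref{treeII} to \ref{treeIV} via $\psdrank_\Omega(T)=\purirank_\Omega(\rho_T)$ and the fact that local dephasing does not increase $\purirank_\Omega$, hence $\bpsdrank_\Omega(T)=\bpurirank_\Omega(\rho_T)$---is correct and is the clean fix; the paper instead proves \ref{treeII} by the $\CQCorr$ correspondence of \Cref{thm:CQCorr}, using compactness of POVMs in place of CPTP maps.
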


For the proof of \ref{treeI} and \ref{treeIII} we refer to \Cref{ssec:sepTree}, for the proof of \ref{treeII} and \ref{treeIV} we refer to \Cref{ssec:puriTree}.

\begin{figure}[t]
\label{fig:proofStrategySepBounded}
\begin{tikzpicture}[scale=0.7]

\draw [fill=color2, draw=white] plot [smooth cycle, tension=1.2] coordinates {(-3.8,3.5) (-3,5) (-1,6.2) (1.8,6) (2.6,4.4) (0.5,4) (-1, 5) (-2.8, 2.9)};
\draw [fill=color1, draw=white] plot [smooth cycle, tension=1.2] coordinates {(-1.8, 3.5) (-1, 1.8) (-2.6,1.8)};
\draw [fill=color3, draw=white] plot [smooth cycle, tension=1.2] coordinates {(-0.2, 3.5) (0.4, 2.4) (-0.8,2.2)};

\draw[thick] (0,6) -- (0.8,5);
\draw[thick] (0,6) -- (0.8,5);
\draw[thick] (0,6) -- (2,5);
\draw[thick] (0,6) -- (-2,5);
\draw[draw=gray!50!white] (-2,5) -- (-1,4);
\draw[thick] (-2,5) -- (-1.5,4.5);
\draw[thick] (-2,5) -- (-3,4);
\draw[draw=gray!50!white] (-1,4) -- (-1.8,3);
\draw[thick] (-1.8,3) -- (-1.4,3.5);
\draw[draw=gray!50!white] (-1,4) -- (-0.2,3);
\draw[thick] (-0.2,3) -- (-0.6,3.5);
\draw[thick] (-1.8,3) -- (-2.1,2);
\draw[thick] (-1.8,3) -- (-1.5,2);

\draw[thick] (0.8,5) -- (0.5, 4.7);
\draw[thick] (0.8,5) -- (0.9, 4.7);
\draw[thick] (0.8,5) -- (1.1, 4.7);

\draw[thick] (2,5) -- (1.7, 4.7);
\draw[thick] (2,5) -- (2.1, 4.7);
\draw[thick] (2,5) -- (2.3, 4.7);

\draw[thick] (-3,4) -- (-3.3, 3.7);
\draw[thick] (-3,4) -- (-3.1, 3.7);
\draw[thick] (-3,4) -- (-2.7, 3.7);

\draw[thick] (-0.2,3) -- (0.1, 2.7);
\draw[thick] (-0.2,3) -- (-0.3, 2.7);
\draw[thick] (-0.2,3) -- (-0.5, 2.7);

\draw[draw=gray!50!white] (-1,4) -- (-0.8,4.2);

\node at (0.8,4.3) {$\cdots$};
\node at (2,4.3) {$\cdots$};
\node at (-3,3.3) {$\cdots$};
\node at (-0.2,2.3) {$\cdots$};

\node[gray!50!white] at (-0.55,4.15) {\scriptsize $n$};

\draw[fill=color2,draw=none] (0,6) circle (2.5pt);
\draw[fill=color2,draw=none] (0.8,5) circle (2.5pt);
\draw[fill=color2,draw=none] (2,5) circle (2.5pt);
\draw[fill=color2,draw=none] (-2,5) circle (5pt);
\draw[fill=white, draw=none] (-1,4) circle (2.5pt);
\draw[fill=color2,draw=none] (-3,4) circle (2.5pt);
\draw[fill=color1,draw=none] (-1.8,3) circle (5pt);
\draw[fill=color3,draw=none] (-0.2,3) circle (5pt);
\draw[fill=color1,draw=none] (-2.1,2) circle (2.5pt);
\draw[fill=color1,draw=none] (-1.5,2) circle (2.5pt);

\draw[fill=black] (0,6) circle (1.5pt);
\draw[fill=black] (0.8,5) circle (1.5pt);
\draw[fill=black] (2,5) circle (1.5pt);
\draw[fill=black] (-2,5) circle (3pt);
\draw[fill=gray!50!white,draw=gray!50!white] (-1,4) circle (1.5pt);
\draw[fill=black] (-3,4) circle (1.5pt);
\draw[fill=black] (-1.8,3) circle (3pt);
\draw[fill=black] (-0.2,3) circle (3pt);
\draw[fill=black] (-2.1,2) circle (1.5pt);
\draw[fill=black] (-1.5,2) circle (1.5pt);

\draw[-stealth, thick] (0.5,3.5) -- (2.5,2.5) node[midway,right, xshift=0.1cm,yshift=0.2cm] {\scriptsize Induction step}; 

\begin{scope}[xshift=5.1cm,yshift=-3cm]

\draw[thick] (0,6) -- (0.8,5);
\draw[thick] (0,6) -- (0.8,5);
\draw[thick] (0,6) -- (2,5);
\draw[thick] (0,6) -- (-2,5);
\draw[thick] (-2,5) -- (-1,4);
\draw[thick] (-2,5) -- (-3,4);
\draw[thick] (-1,4) -- (-1.8,3);
\draw[thick] (-1,4) -- (-0.2,3);
\draw[thick] (-1.8,3) -- (-2.1,2);
\draw[thick] (-1.8,3) -- (-1.5,2);

\draw[thick, color4, -stealth, xshift=-0.15cm, yshift=0.15cm] (-1.6,3.25) -- (-1.2,3.75);
\draw[thick, color4, -stealth, xshift=0.15cm, yshift=0.15cm] (-1.7,4.7) -- (-1.3,4.3);
\draw[thick, color4, -stealth, xshift=0.15cm, yshift=0.15cm] (-0.4,3.25) -- (-0.8,3.75);

\draw[thick] (0.8,5) -- (0.5, 4.7);
\draw[thick] (0.8,5) -- (0.9, 4.7);
\draw[thick] (0.8,5) -- (1.1, 4.7);

\draw[thick] (2,5) -- (1.7, 4.7);
\draw[thick] (2,5) -- (2.1, 4.7);
\draw[thick] (2,5) -- (2.3, 4.7);

\draw[thick] (-3,4) -- (-3.3, 3.7);
\draw[thick] (-3,4) -- (-3.1, 3.7);
\draw[thick] (-3,4) -- (-2.7, 3.7);

\draw[thick] (-0.2,3) -- (0.1, 2.7);
\draw[thick] (-0.2,3) -- (-0.3, 2.7);
\draw[thick] (-0.2,3) -- (-0.5, 2.7);

\draw[thick] (-1,4) -- (-0.7,4.3);

\node at (0.8,4.3) {$\cdots$};
\node at (2,4.3) {$\cdots$};
\node at (-3,3.3) {$\cdots$};
\node at (-0.2,2.3) {$\cdots$};

\node at (-0.4,4.2) {\scriptsize $n$};

\draw[fill=white,draw=none] (0,6) circle (2.5pt);
\draw[fill=white,draw=none] (0.8,5) circle (2.5pt);
\draw[fill=white,draw=none] (2,5) circle (2.5pt);
\draw[fill=white,draw=none] (-2,5) circle (2.5pt);
\draw[fill=white, draw=none] (-1,4) circle (5pt);
\draw[fill=white,draw=none] (-3,4) circle (2.5pt);
\draw[fill=white,draw=none] (-1.8,3) circle (2.5pt);
\draw[fill=white,draw=none] (-0.2,3) circle (2.5pt);
\draw[fill=white,draw=none] (-2.1,2) circle (2.5pt);
\draw[fill=white,draw=none] (-1.5,2) circle (2.5pt);

\draw[fill=black] (0,6) circle (1.5pt);
\draw[fill=black] (0.8,5) circle (1.5pt);
\draw[fill=black] (2,5) circle (1.5pt);
\draw[fill=black] (-2,5) circle (1.5pt);
\draw[fill=black] (-1,4) circle (3pt);
\draw[fill=black] (-3,4) circle (1.5pt);
\draw[fill=black] (-1.8,3) circle (1.5pt);
\draw[fill=black] (-0.2,3) circle (1.5pt);
\draw[fill=black] (-2.1,2) circle (1.5pt);
\draw[fill=black] (-1.5,2) circle (1.5pt);
\end{scope}

\end{tikzpicture}
\caption{\textbf{Proof idea of \Cref{lem:boundSepDecTree}}. Assuming that the separable matrix $\rho$ attains a normalized decomposition on subtrees of size at most $n-1$ (i.e.\ every local family of matrices is normalized except one), we obtain a normalized decomposition of size $n$ by shifting the weight to the added vertex.}
\label{fig:proofSketchTrees}
\end{figure}

Again, the proof contains the same two steps: First, every element attains a decomposition with bounded local elements; second, every bounded sequence contains a limiting sequence which gives rise to an optimal decomposition. In the following, we will give a brief sketch of the proof for the separable decomposition. The existence of a normalized decomposition follows from the following lemma.

\begin{lemma}
\label{lem:boundSepDecTree}
 Let $\Omega$ be a tree and $\rho$ an $n$-partite separable matrix with $\seprank_{\Omega}(\rho) \leq r$ and bounded trace-norm $\Vert \rho \Vert_{1} \leq 1$.
There exists a separable decomposition of rank $r$
such that the trace-norm of each local matrix is bounded by $1$.
\end{lemma}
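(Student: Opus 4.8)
The plan is to exploit the \emph{gauge freedom} of a tensor-network decomposition on a tree. On any edge $e$ of $\Omega$ one may multiply the local matrix at one endpoint by a positive scalar depending on the bond value carried by $e$ and the local matrix at the other endpoint by the reciprocal scalar, without changing $\rho$; because $\Omega$ is a tree (no loops) these rescalings never interfere, and on each global term $\bigotimes_i \rho^{[i]}_\bullet$ the scalars cancel. I will pick such a gauge that drives every local trace-norm down to at most $1$.

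Concretely, root $\Omega$ at a vertex $v_0$. Each non-root vertex $i$ has a unique parent edge $e_i$, and I write its local matrices as $\rho^{[i]}_{(\beta,\vec\delta)}$, where $\beta\in\{1,\dots,r\}$ is the bond value on $e_i$ and $\vec\delta$ collects the bond values on the edges from $i$ to its children. For a non-root $i$ and $\beta\in\{1,\dots,r\}$ define the \emph{subtree weight}
$$ S_i(\beta) \;\coloneqq\; \sum_{\substack{\text{configurations of the bonds}\\ \text{inside the subtree at }i}} \ \prod_{k \text{ in the subtree at }i} \bigl\Vert \rho^{[k]}_\bullet \bigr\Vert_1, $$
with the parent bond of $i$ fixed to $\beta$. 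Two elementary facts make this tractable: the trace-norm is multiplicative under tensor products, and for a psd matrix it equals the trace, which is additive. Hence $S_i$ obeys the recursion $S_i(\beta) = \sum_{\vec\delta}\Vert\rho^{[i]}_{(\beta,\vec\delta)}\Vert_1\prod_{j\text{ child of }i} S_j(\delta_j)$, and the root analogue of this sum equals $\sum_{\vec\delta}\Vert\rho^{[v_0]}_{(\vec\delta)}\Vert_1\prod_j S_j(\delta_j)=\tr(\rho)=\Vert\rho\Vert_1\le 1$.

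Now perform the gauge transformation that, on each edge $e_i$ and for each bond value $\beta$, multiplies the matrices on the child side ($i$) by $S_i(\beta)^{-1}$ and those on the parent side ($p(i)$) by $S_i(\beta)$, ignoring for the moment bond values with $S_i(\beta)=0$. For every global configuration the product of these factors over all vertices is $1$ (each edge contributes $S_i(\beta)^{-1}\cdot S_i(\beta)$), so the result is again a separable $\Omega$-decomposition of $\rho$ of rank $\le r$ with psd local matrices. The new matrix at a non-root $i$ has trace-norm $S_i(\beta)^{-1}\bigl(\prod_j S_j(\delta_j)\bigr)\Vert\rho^{[i]}_{(\beta,\vec\delta)}\Vert_1$, which is $\le 1$ since the numerator is a single summand of the nonnegative sum defining $S_i(\beta)$; the new matrices at $v_0$ have trace-norms $\bigl(\prod_j S_j(\delta_j)\bigr)\Vert\rho^{[v_0]}_{(\vec\delta)}\Vert_1$, which sum to $\tr(\rho)\le 1$ and hence are each $\le 1$. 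A bond value $\beta$ with $S_i(\beta)=0$ contributes nothing to $\rho$ (every global term using it has a vanishing local factor inside the subtree at $i$), so the matrices attached to it may simply be replaced by zero matrices, for which the bound is trivial. This is exactly the bottom-up procedure depicted in \Cref{fig:proofSketchTrees}: normalize the leaves, then repeatedly ``shift the leftover weight'' toward the root, where the budget $\tr(\rho)\le 1$ finally absorbs it.

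The argument can equally be organized, as in \Cref{fig:proofSketchTrees}, as an induction on $|V(\Omega)|$: applied to each subtree hanging from $v_0$ it yields a decomposition normalized everywhere except at that subtree's root while tracking precisely the weights $S_j(\delta_j)$, and one closes the step by rescaling $\rho^{[v_0]}_\bullet$ by $\prod_j S_j(\delta_j)$ and invoking $\sum \tr(\,\cdot\,)=\tr(\rho)\le 1$. Either way, the only genuine work is the bookkeeping — stating the subtree-weight recursion precisely, checking the gauge factors cancel configuration-by-configuration, and disposing of the degenerate vanishing weights — and none of it is delicate: loop-freeness of $\Omega$ is what makes the gauge well-defined, and positivity of $\rho$ and of the local matrices is what turns the single scalar bound $\Vert\rho\Vert_1\le 1$ into an additive budget over configurations that the recursion can distribute down the tree.
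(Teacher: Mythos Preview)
Your proof is correct and takes essentially the same approach as the paper: normalize the local tensors by shifting weight toward a distinguished vertex via diagonal gauge transformations on the edges, using that the trace-norm of a psd matrix equals its trace and is therefore additive over configurations. The paper organizes this as an induction on the number of vertices (exactly the inductive formulation you sketch at the end), while you give the direct global version with explicit subtree weights $S_i(\beta)$; your explicit handling of the degenerate case $S_i(\beta)=0$ is in fact more careful than the paper's.
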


We refer to \Cref{ssec:sepTree} for the proof of \Cref{lem:boundSepDecTree}.
Following the same reasoning as for the separable standard tensor decomposition, \Cref{lem:boundSepDecTree} together with the Bolzano--Weierstraß Theorem implies that for every sequence $\rho_k \to \rho$ with $\seprank(\rho_k) \leq r$ for every $k$ it follows that $\seprank(\rho) \leq r$ which proves \ref{treeIII} in \Cref{thm:treeBorderRank}.

Note that \Cref{lem:boundSepDecTree} implies that every separable density matrix attains an optimal decomposition
$$\rho = \sum_{\alpha_1, \ldots, \alpha_n = 1}^{r} \rho_{\alpha_1}^{[1]} \otimes \rho_{\alpha_1, \alpha_2}^{[2]} \otimes \cdots \otimes \rho_{\alpha_{n-2}, \alpha_{n-1}}^{[n-1]} \otimes \rho_{\alpha_{n-1}}^{[n]}$$
with $\Vert \rho_{\alpha,\beta}^{[i]} \Vert_1 \leq 1$, since the summation indices are arranged according to a line. In contrast, a cyclic decomposition is in general not normalizable since this violates the gap between rank and border rank for separable decompositions shown in \Cref{ssec:cyclic} and \Cref{ssec:borderRanksMatrixDecompositions}.

We now present the proof idea of \Cref{lem:boundSepDecTree}.  Every tree can be built up from smaller ones, and, by induction over such decompositions, the weight of local elements can be transferred to neighboring vertices in the larger tree (see \Cref{fig:proofSketchTrees}).  For a single system, the statement is trivial. For $n-1 \to n$ consider the graph which arises when removing the vertex $n$. The tree factorizes into smaller trees of size at most $n-1$. By the induction hypothesis, we can assume that every subtree leads to a family of normalized decompositions where each local vector is normalized by $1$, except for the local vectors in the systems adjacent to $n$ which are normalized by $\Vert \rho \Vert_{1}$. We finally show that it attains a decomposition with local matrices normalized by $1$ for systems $1, \ldots, n-1$ and normalized by $\Vert \rho \Vert_{1} \leq 1$ for system $n$ by adding system $n$ and transferring the weights to system $n$.

\section{Consequences}
\label{sec:applications}

Let us now present two implications of the existence of gaps between rank and border rank in multipartite positive tensor decompositions. 
In \Cref{ssec:rankCorrelation} we show a correspondence between positive tensor decompositions and quantum correlation sets. The gaps between border ranks and ranks then imply that certain sets of quantum correlations are not closed.
In \Cref{ssec:borderSeparation} we prove that gaps also lead to new types of separations between positive tensor ranks.

\begin{figure}[t]
\begin{tikzpicture}[scale=0.9]

\begin{scope}


\node[anchor=west] at (-1.2,-2.2) {\small (a) \textbf{Multipartite classical}};

\draw[fill=gray!20!white] (18:0.5cm) -- (90:0.5cm) -- (18+2*72:0.5cm) --
(18+3*72:0.5cm) -- (18+4*72:0.5cm) -- cycle;
\filldraw (18:0.5cm) circle (1.2pt);
\filldraw (18+72:0.5cm) circle (1.2pt);
\filldraw (18+2*72:0.5cm) circle (1.2pt);
\filldraw (18+3*72:0.5cm) circle (1.2pt);
\filldraw (18+4*72:0.5cm) circle (1.2pt);

\node at (0,-1.2) {$\nnrank(P) \leq r$};

\node at (1.15,0) {$\Longleftrightarrow$};


\begin{scope}[xshift=1.8cm,yshift=-0.2cm]

\draw[-stealth] (2.4,0.9) to[out=180,in=90] (0.4,0.1);
\draw[-stealth] (2.4,0.9) to[out=-155,in=90] (1.4,0.1);
\draw[-stealth] (2.4,0.9) to[out=-90,in=90] (2.4,0.1);
\draw[-stealth] (2.4,0.9) to[out=-25,in=90] (3.4,0.1);
\draw[-stealth] (2.4,0.9) to[out=0,in=90] (4.4,0.1);

\draw[fill=white, draw=none] (2.4,0.9) circle (0.4cm);
\draw[fill=white] (2.4,0.9) circle (0.28cm);
\node at (2.4,0.9) {$\Lambda$};

\node at (3.9,1.3) {with $\Lambda \in [r]$};

\draw (0,-0.5) rectangle (0.8,0) node[midway] {\scriptsize $X_1 | \Lambda$};
\draw (1,-0.5) rectangle (1.8,0) node[midway] {\scriptsize $X_2 | \Lambda$};
\draw (2,-0.5) rectangle (2.8,0) node[midway] {\scriptsize $X_3 | \Lambda$};
\draw (3,-0.5) rectangle (3.8,0) node[midway] {\scriptsize $X_4 | \Lambda$};
\draw (4,-0.5) rectangle (4.8,0) node[midway] {\scriptsize $X_5 | \Lambda$};

\draw[-stealth] (0.4,-0.5) -- (0.4,-0.9);
\draw[-stealth] (1.4,-0.5) -- (1.4,-0.9);
\draw[-stealth] (2.4,-0.5) -- (2.4,-0.9);
\draw[-stealth] (3.4,-0.5) -- (3.4,-0.9);
\draw[-stealth] (4.4,-0.5) -- (4.4,-0.9);

\draw[decorate, decoration={brace}] (4.5,-1) -- (0.3,-1) node[midway, pos=0.25, below, yshift=-0.1cm, color=black] {\footnotesize $P \in \CCorr(n,d,r)$};

\end{scope}
\end{scope}

\begin{scope}[yshift=-5cm]


\node[anchor=west] at (-1.2,-2.4) {\small (b) \textbf{Multipartite quantum-classical}};

\draw (0,0.4) -- (-0.4,0);
\draw (-0.4,0) -- (-0.6,-0.4);
\draw (0,0.4) -- (0.4,0);
\draw (-0.4,0) -- (-0.2,-0.4);

\filldraw (0,0.4) circle (1.2pt);
\filldraw (-0.4,0) circle (1.2pt);
\filldraw (0.4,0) circle (1.2pt);
\filldraw (-0.6,-0.4) circle (1.2pt);
\filldraw (-0.2,-0.4) circle (1.2pt);

\node at (0,-1.2) {$\psdrank_{\Omega}(P) \leq r$};

\node at (1.15,0) {$\Longleftrightarrow$};


\begin{scope}[xshift=2.2cm, yshift=0.2cm, scale=0.38]

\draw[fill=color1, draw=none] (-0.7,-0.7) rectangle (9.7,4);

\node[anchor=west] at (0.1,2.9) {with $\rank_{\Omega}(\ket{\psi}) \leq r$};

\draw[fill=white] (-0.25,-0.2) rectangle (9.25,2);

\node at (3.25,0.9) {$\ket{\psi} \ \sim$};

\begin{scope}[scale=1.6, xshift=3.7cm, yshift=0.55cm]
\draw (0,0.4) -- (-0.4,0);
\draw (-0.4,0) -- (-0.6,-0.4);
\draw (0,0.4) -- (0.4,0);
\draw (-0.4,0) -- (-0.2,-0.4);

\filldraw (0,0.4) circle (1.2pt);
\filldraw (-0.4,0) circle (1.2pt);
\filldraw (0.4,0) circle (1.2pt);
\filldraw (-0.6,-0.4) circle (1.2pt);
\filldraw (-0.2,-0.4) circle (1.2pt);
\end{scope}

\draw[-stealth] (0.5,-0.2) -- (0.5,-1.5);
\draw[-stealth] (2.5,-0.2) -- (2.5,-1.5);
\draw[-stealth] (4.5,-0.2) -- (4.5,-1.5);
\draw[-stealth] (6.5,-0.2) -- (6.5,-1.5);
\draw[-stealth] (8.5,-0.2) -- (8.5,-1.5);

\draw[-stealth] (0.5,-3) -- (0.5,-4);
\draw[-stealth] (2.5,-3) -- (2.5,-4);
\draw[-stealth] (4.5,-3) -- (4.5,-4);
\draw[-stealth] (6.5,-3) -- (6.5,-4);
\draw[-stealth] (8.5,-3) -- (8.5,-4);

\draw[fill=white] (-0.25,-3) rectangle (1.25,-1.5);
\draw[fill=white] (1.75,-3) rectangle (3.25,-1.5);
\draw[fill=white] (3.75,-3) rectangle (5.25,-1.5);
\draw[fill=white] (5.75,-3) rectangle (7.25,-1.5);
\draw[fill=white] (7.75,-3) rectangle (9.25,-1.5);

\begin{scope}[shift={(0.5,-2.9)}]
\draw[thick,domain=45:135, xshift=0] plot ({0.7*cos(\x)}, {0.7*sin(\x)});
\draw[line width=0.1cm, white, rotate around={-15:(0,0.1)}] (0,0.2) -- (0,1.1);
\draw[-stealth, rotate around={-15:(0,0.1)}] (0,0.2) -- (0,1.1);
\end{scope}

\begin{scope}[shift={(2.5,-2.9)}]
\draw[thick,domain=45:135, xshift=0] plot ({0.7*cos(\x)}, {0.7*sin(\x)});
\draw[line width=0.1cm, white, rotate around={-15:(0,0.1)}] (0,0.2) -- (0,1.1);
\draw[-stealth, rotate around={-15:(0,0.1)}] (0,0.2) -- (0,1.1);
\end{scope}

\begin{scope}[shift={(4.5,-2.9)}]
\draw[thick,domain=45:135, xshift=0] plot ({0.7*cos(\x)}, {0.7*sin(\x)});
\draw[line width=0.1cm, white, rotate around={-15:(0,0.1)}] (0,0.2) -- (0,1.1);
\draw[-stealth, rotate around={-15:(0,0.1)}] (0,0.2) -- (0,1.1);
\end{scope}

\begin{scope}[shift={(6.5,-2.9)}]
\draw[thick,domain=45:135, xshift=0] plot ({0.7*cos(\x)}, {0.7*sin(\x)});
\draw[line width=0.1cm, white, rotate around={-15:(0,0.1)}] (0,0.2) -- (0,1.1);
\draw[-stealth, rotate around={-15:(0,0.1)}] (0,0.2) -- (0,1.1);
\end{scope}

\begin{scope}[shift={(8.5,-2.9)}]
\draw[thick,domain=45:135, xshift=0] plot ({0.7*cos(\x)}, {0.7*sin(\x)});
\draw[line width=0.1cm, white, rotate around={-15:(0,0.1)}] (0,0.2) -- (0,1.1);
\draw[-stealth, rotate around={-15:(0,0.1)}] (0,0.2) -- (0,1.1);
\end{scope}

\draw[decorate, decoration={brace}] (8.9,-4.1) -- (0.1,-4.1) node[midway, pos=0.12, below, yshift=-0.1cm, color=black] {\footnotesize $P \in \CQCorr_{\Omega}(n,d,r)$};

\end{scope}
\end{scope}

\begin{scope}[yshift=-10cm]


\node[anchor=west] at (-1.2,-2.4) {\small (c) \textbf{Multipartite quantum-quantum}};

\draw (0,0.4) -- (-0.4,0);
\draw (-0.4,0) -- (-0.6,-0.4);
\draw (0,0.4) -- (0.4,0);
\draw (-0.4,0) -- (-0.2,-0.4);

\filldraw (0,0.4) circle (1.2pt);
\filldraw (-0.4,0) circle (1.2pt);
\filldraw (0.4,0) circle (1.2pt);
\filldraw (-0.6,-0.4) circle (1.2pt);
\filldraw (-0.2,-0.4) circle (1.2pt);

\node at (0,-1.2) {$\purirank_{\Omega}(\rho) \leq r$};

\node at (1.15,0) {$\Longleftrightarrow$};


\begin{scope}[xshift=2.2cm, yshift=0.2cm, scale=0.38]

\draw[fill=color1, draw=none] (-0.7,-0.7) rectangle (9.7,4);

\node[anchor=west] at (0.1,2.9) {with $\rank_{\Omega}(\ket{\psi}) \leq r$};

\draw[fill=white] (-0.25,-0.2) rectangle (9.25,2);

\node at (3.25,0.9) {$\ket{\psi} \ \sim$};

\begin{scope}[scale=1.6, xshift=3.7cm, yshift=0.55cm]
\draw (0,0.4) -- (-0.4,0);
\draw (-0.4,0) -- (-0.6,-0.4);
\draw (0,0.4) -- (0.4,0);
\draw (-0.4,0) -- (-0.2,-0.4);

\filldraw (0,0.4) circle (1.2pt);
\filldraw (-0.4,0) circle (1.2pt);
\filldraw (0.4,0) circle (1.2pt);
\filldraw (-0.6,-0.4) circle (1.2pt);
\filldraw (-0.2,-0.4) circle (1.2pt);
\end{scope}

\draw[-stealth] (0.5,-0.2) -- (0.5,-1.5);
\draw[-stealth] (2.5,-0.2) -- (2.5,-1.5);
\draw[-stealth] (4.5,-0.2) -- (4.5,-1.5);
\draw[-stealth] (6.5,-0.2) -- (6.5,-1.5);
\draw[-stealth] (8.5,-0.2) -- (8.5,-1.5);

\draw[-stealth] (0.5,-3) -- (0.5,-4);
\draw[-stealth] (2.5,-3) -- (2.5,-4);
\draw[-stealth] (4.5,-3) -- (4.5,-4);
\draw[-stealth] (6.5,-3) -- (6.5,-4);
\draw[-stealth] (8.5,-3) -- (8.5,-4);

\draw[fill=white] (-0.25,-3) rectangle (1.25,-1.5) node[midway] {$\mathcal{E}_1$};
\draw[fill=white] (1.75,-3) rectangle (3.25,-1.5) node[midway] {$\mathcal{E}_2$};
\draw[fill=white] (3.75,-3) rectangle (5.25,-1.5) node[midway] {$\mathcal{E}_3$};
\draw[fill=white] (5.75,-3) rectangle (7.25,-1.5) node[midway] {$\mathcal{E}_4$};
\draw[fill=white] (7.75,-3) rectangle (9.25,-1.5) node[midway] {$\mathcal{E}_5$};

\draw[decorate, decoration={brace}] (8.9,-4.1) -- (0.1,-4.1) node[midway,pos=0.12, below, yshift=-0.1cm, color=black] {\footnotesize $\rho \in \QQCorr_{\Omega}(n,d,r)$};

\end{scope}
\end{scope}

\end{tikzpicture}
\caption{\textbf{Positive ranks characterize correlation scenarios.} (a) For a nonnegative tensor $P$ representing an $n$-partite probability distribution, the standard nonnegative rank characterizes the range of a shared hidden variable to generate $P$ via local conditional probability distributions. (b) $\psdrank_{\Omega}(P)$ characterizes the minimal $\rank_{\Omega}$ of a resource state $\ket{\psi}$ to generate $P$ via local measurements. (c) For an $n$-partite density matrix $\rho$, $\purirank_{\Omega}(\rho)$ characterizes the minimal $\rank_{\Omega}$ of a resource state $\ket{\psi}$ to generate $\rho$ via local quantum channels.}
\label{fig:correlationScenarios}
\end{figure}

\subsection{Sets of quantum correlations are not closed}
\label{ssec:rankCorrelation}

Positive tensor decompositions give rise to statements about (quantum) correlations. Consider a bipartite finite probability distribution represented as an entrywise nonnegative tensor $P \in \mathbb{C}^d \otimes \mathbb{C}^d$ via $$P_{ij} = P(X = i, Y =j)$$ where $X$ and $Y$ are random values taking values in $\{1,\ldots, d\}$. It is known that the non-negative rank of $P$ corresponds to the minimal range of a hidden variable $\Lambda$ necessary to sample this distribution locally. Similarly, the $\psdrank(P)$ is the minimal Schmidt rank of the bipartite state $\ket{\psi}$ necessary to generate $P$ via local measurements on each qudit \cite{Fa14, Ja13}.

We now show that these correspondences generalize to multipartite tensor decompositions as well as multipartite matrix tensor decompositions. Together with the existence of non-trivial border ranks, this entails that these sets of probability distributions as well as density matrices are topologically not closed.

\subsubsection{Entanglement in correlation scenarios and positive ranks}

We define the set $\CCorr(n,d,r)$ as the set of probability distributions arising from local distributions conditioned on a shared hidden variable taking values in $\{1,\ldots, r\}$ (see \Cref{fig:correlationScenarios} (a)), i.e.\
\begin{align*}
& P(X_1 = j_1, \ldots, X_n = j_n)\\ = &\sum_{\alpha = 1}^{r} P(\Lambda = \alpha) \prod_{i=1}^{n} P(X_i = j_i \, | \, \Lambda = \alpha)
\end{align*}
where $X_1, \ldots, X_n$ are random variables taking values in $\{1,\ldots, d\}$.

Similarly, we define the set $\CQCorr_{(\Omega,G)}(n,d,r)$ for a given hypergraph $\Omega$ and a group action $G$ on $\Omega$ as the set of all probability distributions $P$ arising as
$$P(X_1 = j_1, \ldots, X_n = j_n) = \bra{\psi} A_{j_1}^{[1]} \otimes \cdots \otimes A_{j_n}^{[n]} \ket{\psi}$$
where $\left(A^{[i]}_j\right)_{j=1}^{d}$ for $i \in \{1,\ldots, n\}$ are POVM measurements that are $G$-symmetric, i.e.\ the measurement on position $i$ coincides with the measurement on $gi$ for every $g \in G$. In other words, we have that $A_j^{[gi]} = A_j^{[i]}$ for every $g \in G$. Moreover, $\ket{\psi}$ satisfies the constraint that $\rank_{(\Omega,G)}(\ket{\psi}) \leq r$ (see \Cref{fig:correlationScenarios} (b)).

So, for example, when $\Omega = \Theta_n$ is a cycle graph of length $n$ (i.e.\ vertex $i$ is only connected to $i-1$ and $i+1$ with addition modulo $n+1$) then $\CQCorr_{\Theta_n}(n,d,r)$ is the set of all $n$-partite probability distributions obtained from an MPS $\ket{\psi}$ with $\osr(\ket{\psi}) \leq r$ via local measurements on each local space. For the cyclic group $G = C_n$, $\CQCorr_{(\Theta_n,C_n)}(n,d,r)$ is the set of probability distributions obtained from an MPS $\ket{\psi}$ with $\tiosr(\ket{\psi}) \leq r$ via identical local measurements on each local space.

These correlation scenarios also generalize to quantum states instead of probability distributions by replacing local measurement operators by local quantum channels (see \Cref{fig:correlationScenarios} (c)). We define the set $\QQCorr_{(\Omega,G)}(n,d,r)$ as the set of all density matrices arising as
$$ \rho = (\mathcal{E}_1 \otimes \cdots \otimes \mathcal{E}_n)\left(\ket{\psi} \bra{\psi}\right)$$
where $(\mathcal{E}_i)_{i=1}^{n}$ is a $G$-invariant family of quantum channels, i.e.\ $\mathcal{E}_{i} = \mathcal{E}_{gi}$. Moreover, $\ket{\psi}$ satisfies the restriction that $\rank_{(\Omega,G)}(\ket{\psi}) \leq r$.
 
Analogous to the above example, if $\Omega = \Theta_n$ is a cycle graph of length $n$, then $\QQCorr_{\Theta_n}(n,d,r)$ is the set of all $n$-partite density matrices obtained from an MPS $\ket{\psi}$ with $\osr(\ket{\psi}) \leq r$ and applying local quantum channels on each local space. The set $\QQCorr_{(\Theta_n,C_n)}(n,d,r)$ are the density matrices obtained from an MPS $\ket{\psi}$ with $\tiosr(\ket{\psi}) \leq r$ and applying identical quantum channels on each local space. 
 
Similarly to the bipartite case \cite{Ja17,Li09}, there is a correspondence between the local purification form and multipartite quantum scenarios.

\begin{theorem} \label{thm:quantumCorrMainText}
Let $P$ be a tensor representing an $n$-partite probability distribution with local dimensions $d$, and $\rho$ an $n$-partite density matrix with local dimensions $d$. Then,
\begin{enumerate}[label=(\roman*)]
	\item $P \in \CCorr(n,d,r) \quad \Leftrightarrow \quad \nnrank(P) \leq r$
	\item $P \in \CQCorr_{(\Omega,G)}(n,d,r)$\\ \hspace*{1.5cm} $\Leftrightarrow$\\ $\psdrank_{(\Omega,G)}(P) \leq r$
	\item $\rho \in \QQCorr_{(\Omega,G)}(n,d,r)$\\ \hspace*{1.5cm} $\Leftrightarrow$ \\ $\purirank_{(\Omega,G)}(\rho) \leq r$
\end{enumerate}
\end{theorem}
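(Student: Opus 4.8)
The plan is to prove the three equivalences one at a time, in each case by giving both translations---from the operational description to a positive $(\Omega,G)$-decomposition and back---and by checking that each translation preserves the relevant rank. The bipartite versions of (ii) and (iii) are classical \cite{Fa14,Ja13,Ja17,Li09,De19}, so the genuine work is to carry those constructions over to an arbitrary hypergraph $\Omega$ with a compatible group action $G$, using the weighted simplicial complex formalism of \Cref{app:OmegaG-decompositions}; statement (i) I would prove directly.

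For (i): if $\nnrank(P)\le r$, write $P=\sum_{\alpha=1}^{r}\ket{v_{\alpha}^{[1]}}\otimes\cdots\otimes\ket{v_{\alpha}^{[n]}}$ with entrywise nonnegative local vectors, drop every summand containing a zero vector, and for each surviving index put $c_{\alpha}^{[i]}:=\sum_{j}\Cbraket{j}{v_{\alpha}^{[i]}}>0$, $P(X_{i}=j\mid\Lambda=\alpha):=\Cbraket{j}{v_{\alpha}^{[i]}}/c_{\alpha}^{[i]}$ and $P(\Lambda=\alpha):=\prod_{i}c_{\alpha}^{[i]}$; summing the last quantity over $\alpha$ returns $\sum_{\vec\jmath}P_{\vec\jmath}=1$, so $P(\Lambda=\cdot)$ is an honest distribution on at most $r$ values and $P\in\CCorr(n,d,r)$. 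The converse needs no work, since the defining formula of $\CCorr(n,d,r)$ already \emph{is} a nonnegative decomposition of rank at most $r$. For the ``forward'' directions of (ii) and (iii) I would proceed by a doubling argument: given $\ket{\psi}$ with $\rank_{(\Omega,G)}(\ket{\psi})\le r$ and $G$-symmetric POVMs $(A_{j}^{[i]})_{j}$, insert the $(\Omega,G)$-decomposition of $\ket{\psi}$ into both the bra and the ket of $P_{\vec\jmath}=\bra{\psi}\bigotimes_{i}A_{j_{i}}^{[i]}\ket{\psi}$; contracting the physical leg at vertex $i$ against $A_{j}^{[i]}=\sum_{k}\mu_{k}\ket{c_{k}}\bra{c_{k}}$ (with $\mu_{k}\ge0$) yields a local object indexed by the ket- and bra-copies of the $\Omega$-edges at $i$ that is a nonnegative combination of rank-one psd matrices, hence psd, with exactly the index pattern of a psd-$(\Omega,G)$-decomposition; $G$-symmetry of the POVMs and $G$-invariance of the local tensors make the family $G$-symmetric, so $\psdrank_{(\Omega,G)}(P)\le r$. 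For (iii) I would replace each channel $\mathcal{E}_{i}$ by a Stinespring isometry $V_{i}$ supported on vertex $i$ (chosen equal along $G$-orbits), set $L:=(\bigotimes_{i}V_{i})\ket{\psi}$ and regard its dilation legs as purification legs; a local isometry leaves all bond dimensions untouched, so $\rank_{(\Omega,G)}(L)\le r$, and since $\rho=LL^{\dag}$ this gives $\purirank_{(\Omega,G)}(\rho)\le r$.

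The converse of (iii) is again short: from $\rho=LL^{\dag}$ with $\rank_{(\Omega,G)}(L)\le r$ I would reinterpret the purification legs of $L$ as additional physical legs, producing a pure state $\ket{\psi}$ on enlarged local spaces with $\rank_{(\Omega,G)}(\ket{\psi})=\rank_{(\Omega,G)}(L)\le r$, and let $\mathcal{E}_{i}$ be the partial trace over the $i$-th enlargement; this is literally the same channel along $G$-orbits because the purification-leg dimensions are constant on orbits by $G$-symmetry of $L$, and $(\bigotimes_{i}\mathcal{E}_{i})(\ket{\psi}\bra{\psi})=\rho$, so $\rho\in\QQCorr_{(\Omega,G)}(n,d,r)$.

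The main obstacle is the converse of (ii). Starting from a $G$-symmetric psd-$(\Omega,G)$-decomposition of $P$ with edge dimensions $r$, I would undo the doubling by a Cholesky/Gram factorization of each local psd matrix and assemble the factors into a candidate state whose bond indices run over the $\Omega$-edges, then \emph{gauge-transform} the factors so that the operators sandwiched at each vertex genuinely become POVMs satisfying $\sum_{j}A_{j}^{[i]}=\one$. The device is the multipartite version of the bipartite rescaling $A_{j}\mapsto\bigl(\textstyle\sum_{k}A_{k}\bigr)^{-1/2}A_{j}\bigl(\textstyle\sum_{k}A_{k}\bigr)^{-1/2}$: set $\Sigma^{[i]}:=\sum_{j}A_{j}^{[i]}$, replace $A_{j}^{[i]}$ by $(\Sigma^{[i]})^{-1/2}A_{j}^{[i]}(\Sigma^{[i]})^{-1/2}$ on the support of $\Sigma^{[i]}$, and absorb the compensating $(\Sigma^{[i]})^{1/2}$ into the \emph{local tensor} of $\ket{\psi}$ at vertex $i$ rather than into the bonds, so that $\rank_{(\Omega,G)}(\ket{\psi})\le r$ is maintained while $\braket{\psi}{\psi}=\sum_{\vec\jmath}P_{\vec\jmath}=1$. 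The points that need care are the passage to $\operatorname{supp}(\Sigma^{[i]})$ when it is rank-deficient (using $A_{j}^{[i]}\le\Sigma^{[i]}$, so that each $A_{j}^{[i]}$ is supported there), the fact that this restriction must be performed consistently across edges shared by adjacent vertices and compatibly with the $G$-action, and the verification that the reconstructed state and POVMs reproduce $P$ inside the simplicial-complex formalism of \Cref{app:OmegaG-decompositions}; the remaining bookkeeping---manifest $G$-equivariance of every construction, and the observation that the cyclic and translational-invariant scenarios of \Cref{fig:quantum_correlation} are just the instances $\Omega=\Theta_{n}$, $G=C_{n}$---is routine given \cite{De19,De19d}.
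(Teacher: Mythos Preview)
Your proposal is correct and follows essentially the same architecture as the paper's proofs (\Cref{thm:nndec_CausalStruct}, \Cref{thm:CQCorr}, \Cref{thm:QQCorr}): normalization of nonnegative vectors for (i), sandwiching the POVMs with the local tensors of $\ket{\psi}$ for the forward direction of (ii), and the $(\Sigma^{[i]})^{-1/2}$ rescaling for the converse.

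Two points of genuine difference are worth recording. First, what you call ``routine bookkeeping'' for $G$-equivariance in the converse of (ii) is precisely where the paper does real work: it isolates a $G$-symmetric eigendecomposition lemma (\Cref{lem:GsymmetricDiagonalization}) to ensure that the pseudo-inverse factors $T^{[i]},W^{[i]}$ can be chosen compatibly along orbits, and this lemma is where the \emph{external} hypothesis on the group action enters. Without it, a group element fixing vertex $i$ but permuting the facets in $\facetF_i$ forces an internal symmetry on $\Sigma^{[i]}$ that a generic eigenbasis will not respect, so your ``absorb $(\Sigma^{[i]})^{1/2}$ into the local tensor'' step need not produce a $G$-invariant family of POVMs. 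You should either add the external assumption explicitly or explain how to dispense with it. Second, your converse of (iii) via vectorizing $L$ and tracing out the purification legs is shorter than the paper's route, which instead parallels the proof of (ii) by pushing the Kraus operators of the unnormalized cp maps onto $\ket{\Omega_r}$ and then renormalizing via the same $T^{[i]},W^{[i]}$ machinery. Your argument is cleaner and avoids a second invocation of \Cref{lem:GsymmetricDiagonalization}; the paper's version has the virtue of making the resource state live on the structure tensor $\ket{\Omega_r}$ itself, which is conceptually uniform with (ii) and with \Cref{fig:normalization_process}, but is not needed for the bare equivalence.
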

For a proof of the statements in this theorem, we refer to \Cref{thm:nndec_CausalStruct}, \Cref{thm:CQCorr} and \Cref{thm:QQCorr}, respectively in \Cref{app:NonnegativeDec_CausalStructures} and \Cref{app:CorrelationCorrespondenceAppendix}.

\subsubsection{Non-closedness of quantum correlation scenarios}

We now show that the correspondence in \Cref{thm:quantumCorrMainText} together with the gaps between ranks and border ranks imply that the sets of correlations are not closed. It follows that it is generally impossible to test membership of a probability distribution in these sets with a finite number of measurements.

To prove non-closedness, let $(P_k)_{k \in \mathbb{N}}$ be a sequence of tensors representing a probability distribution with $\lim_{k \to \infty} P_k = P$ and exhibiting a gap between rank and border rank, i.e.\
$$ \psdrank_{(\Omega,G)}(P_k) \leq r < \psdrank_{(\Omega,G)}(P).$$
Then $P_k \in \CQCorr_{(\Omega,G)}(n,d,r)$ for all $k \in \mathbb{N}$ while $P \notin \CQCorr_{(\Omega,G)}(n,d,r)$, i.e.\ $\CQCorr_{(\Omega,G)}(n,d,r)$ is not closed.

This implies that there does not always exist a witness detecting the rank of a resource state necessary to generate a probability distribution. We call a continuous function ${f: \left(\mathbb{R}^d\right)^{\otimes n} \to \mathbb{R}}$ a witness detecting $P \notin \CQCorr_{(\Omega,G)}(n,d,r)$ if and only if $f(P) < 0$ and $f(Q) \geq 0$ for all $Q \in \CQCorr_{(\Omega,G)}(n,d,r)$. Assuming such a function exists, a sequence exhibiting a gap between border rank and rank would violate the property that $f$ is continuous. So, it is not possible to detect the rank necessary to generate $P$ from finitely many samples generating $P$, since every $\varepsilon$-ball around $P$ intersects with $\CQCorr_{(\Omega,G)}(n,d,r)$.

According to the gaps between ranks and border ranks (see \Cref{fig:borderRankResults}) the same behavior appears in the following cases:
\begin{enumerate}[label=(\roman*)]
\itemsep0.1em
	\item Testing the standard tensor rank for $n \geq 5$.
	\item Symmetrically testing the symmetric tensor rank for $n \geq 3$.
	\item Symmetrically testing the translational invariant operator Schmidt rank for $n \geq 17$.
\end{enumerate}
Analogously, one can show that $\QQCorr_{(\Omega,G)}(n,d,r)$ is not closed in the above situations.

In contrast, the set of classical correlations $\CCorr(n,d,r)$ is closed for every choice of $n,d,r \in \mathbb{N}$. This follows from the fact that $\nnrank$ does not exhibit a gap between border rank and rank, and hence for every converging sequence of nonnegative tensors $P_k \to P$ with $\nnrank(P_k) \leq r$ we also have $\nnrank(P) \leq r$. For every $P \notin \CCorr(n,d,r)$ there also exists a separating witness since the distance between $\CCorr(n,d,r)$ and $P$ is strictly positive. Moreover, the sets of quantum correlations $\CQCorr_{\Omega}(n,d,r)$ and $\QQCorr_{\Omega}(n,d,r)$ are closed if $\Omega$ is a tree.

\subsection{Stability of separations for approximate tensor decompositions}
\label{ssec:borderSeparation} 

Various notions of positive tensor decompositions exhibit separations \cite{Fa14}, meaning that there exist families of bipartite tensors $(T_d)_{d \in \mathbb{N}}$ where $T_d \in \mathbb{C}^d \otimes \mathbb{C}^d$ such that
$$ \rank(T_d) = \textrm{const.} \quad \text{and} \quad \psdrank(T_d) \to \infty.$$
Moreover, there is also a family of bipartite tensors $(S_d)_{d \in \mathbb{N}}$ such that
$$ \psdrank(S_d) = \textrm{const.} \quad \text{and} \quad \nnrank(S_d) \to \infty.$$
Are these separations robust with respect to approximations? In \cite{De20} it is proven that for fixed approximation error $\varepsilon > 0$ and a fixed norm, the separations between $\rank$, $\psdrank$ and $\nnrank$ disappear. More, precisely $\rank^{\varepsilon}(T), \psdrank^{\varepsilon}(T), \nnrank^{\varepsilon}(T)$ can be upper bounded by a function depending only on $\varepsilon$ and $\Vert T \Vert$, independent of the dimension of the tensor product space. However, if the choice of $\varepsilon > 0$ and vector space dimension is too small, this upper bound exceeds trivial dimension-dependent upper bounds. So, the bounds are only meaningful when the dimension of the tensor product space is large.

We will now prove a ``dual'' statement. If the dimension of the tensor product space is fixed, there exists an error $\varepsilon > 0$ such that the separation between $\rank$ and $\nnrank$ persists.
\begin{figure}
\centering
\label{fig:disappearenceOfSeparations}
\begin{tikzpicture}[scale=1]
  
\draw[fill=color1, domain=0.43:5, variable=\x, draw=white] (0.43,0) -- (0.43,2.52) -- plot (\x, {1/\x+0.2}) -- (5,0.4) --  (5,0) -- cycle;

\draw[thick] (-0.2, 0) -- (5.5, 0) node[below, midway,xshift=3cm] {\scriptsize System size};
  \draw[thick] (0, -0.2) -- (0, 3.5) node[above left, xshift=0.5cm] {\scriptsize Error $\varepsilon$};
  \draw[fill = color3, opacity=0.2, domain=0.43:5, variable=\x, draw=white] (0.43, 3.4) -- plot (\x,{1.8/sqrt(\x)+0.2}) -- (5,0.3) -- (5,3.4) -- cycle;
  
  \draw[thick, -stealth, opacity=0.2] (0.22,2.18) -- (2.02,2.18);
  \draw[thick, -stealth] (0.2,2.2) -- (2,2.2) node [right] {\scriptsize Disapperance};
  \node at (3,1.8) (A) {\scriptsize of separations \cite{De20}};
  \draw[thick, -stealth, opacity=0.2] (1.52,1.48) -- (1.52,0.39);
  \draw[thick, -stealth] (1.5,1.5) -- (1.5,0.42) node [right, yshift=-0.25cm, xshift=-1cm] {\scriptsize Separations remain (\Cref{thm:Separations})};
  
\end{tikzpicture}
\caption{\textbf{When do separations persist for approximate decompositions?} Varying the system size for a fixed approximation error leads to the disappearance of separations \cite{De20}, however, when fixing the system size and choosing small enough errors, the separations persist (\Cref{thm:Separations}). The red area shows when the upper bounds derived in \cite{De20} are smaller than the upper bounds of the exact decompositions, and the blue area shows the approximation errors $\varepsilon_n$ in \Cref{thm:Separations}.}
\end{figure}

\begin{theorem}
\label{thm:Separations}
There exists a family of nonnegative tensors $(T_n)_{n \in \mathbb{N}}$ with $T_n \in \left(\mathbb{C}^d\right)^{\otimes n}$ and a family of approximation errors $\varepsilon_n > 0$ such that
$$ \nnrank^{\varepsilon_n}(T_n) = n.$$
We have also that
$$ \rank^{\varepsilon}(T_n) = \psdrank^{\varepsilon}(T_n) = 2$$
for every $\varepsilon > 0$ independent of $n$.
\end{theorem}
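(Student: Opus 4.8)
The plan is to take $T_n$ to be the unnormalized $n$-partite $W$-state $W_n \in (\mathbb{C}^2)^{\otimes n}$ already used throughout \Cref{sec:results}, and to assemble the statement from three facts that are (essentially) at hand: $\rank(W_n) = n$ (\Cref{prop:WnLoweBound}), $\brank(W_n) = 2$ via the family \eqref{eq:W-BorderRank}, and $\bpsdrank(W_n) = 2$ via the family \eqref{eq:defAmatrices}. The only genuinely new input needed is the behaviour of $\nnrank^{\varepsilon}$.

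For the nonnegative rank, I would first observe that $\nnrank(W_n) = n$: on one hand $n = \rank(W_n) \le \nnrank(W_n)$, because a nonnegative decomposition is in particular a decomposition; on the other hand $W_n = \sum_{i=1}^{n} \ket{0}^{\otimes(i-1)} \otimes \ket{1} \otimes \ket{0}^{\otimes(n-i)}$ exhibits $n$ entrywise nonnegative rank-one terms, so $\nnrank(W_n) \le n$. By \Cref{thm:nnDecNoBorderRank} the nonnegative rank has no gap between rank and border rank, hence $\bnnrank(W_n) = n$. Since $\varepsilon \mapsto \nnrank^{\varepsilon}(W_n)$ is non-increasing, integer-valued, bounded above by $n$, and tends to $\bnnrank(W_n) = n$ as $\varepsilon \to 0$, there is a threshold $\varepsilon_n > 0$ with $\nnrank^{\varepsilon_n}(W_n) = n$ (and the same value for all smaller errors), which gives the first displayed equality of the theorem.

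For the unconstrained and the psd rank I would use the elementary bounds $\rank^{\varepsilon}(T) \le \brank(T)$ and $\psdrank^{\varepsilon}(T) \le \bpsdrank(T)$, valid for every $\varepsilon > 0$: combined with the two border-rank computations above they give $\rank^{\varepsilon}(W_n) \le 2$ and $\psdrank^{\varepsilon}(W_n) \le 2$ for all $\varepsilon > 0$, with the bound $2$ independent of $n$. For the matching lower bounds, note that $\rank^{\varepsilon}(W_n) < 2$ would require a rank-one tensor within distance $\varepsilon$ of $W_n$, and $\psdrank^{\varepsilon}(W_n) < 2$ a nonnegative rank-one tensor within distance $\varepsilon$ (since $\psdrank$ equals $1$ precisely on nonnegative rank-one tensors, a subset of the rank-one tensors); both are impossible once $\varepsilon$ is below the distance of $W_n$ to the rank-one variety. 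The crucial point is that this distance is bounded below by a constant independent of $n$: any rank-one $v^{[1]} \otimes \cdots \otimes v^{[n]}$ that is close to $W_n$ on the $n$ support basis vectors $\ket{0}^{\otimes(i-1)} \otimes \ket{1} \otimes \ket{0}^{\otimes(n-i)}$ is, by the product structure, forced to be close to $1$ also on the basis vectors carrying two ones, where $W_n$ vanishes, so the approximation error stays bounded away from $0$. Hence there is a universal $\varepsilon_0 > 0$ with $\rank^{\varepsilon}(W_n) = \psdrank^{\varepsilon}(W_n) = 2$ for all $0 < \varepsilon < \varepsilon_0$ (the upper bound $\le 2$ holding unconditionally for every $\varepsilon > 0$), which is exactly the $n$-independent behaviour to be contrasted with $\nnrank^{\varepsilon_n}(T_n) = n \to \infty$.

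I expect the main obstacle to be precisely this last uniform-in-$n$ lower bound on $\dist(W_n, \{\rank \le 1\})$ (and its psd analogue); everything else is a bookkeeping combination of \Cref{thm:nnDecNoBorderRank}, the explicit approximating families of \Cref{sec:results}, and the monotonicity of $\rank^{\varepsilon}$ in $\varepsilon$. A convenient way to make the estimate clean is to measure the error in the $\ell^{\infty}$-norm (the entries of $W_n$ are already $0$ or $1$) and argue only on the three entries indexed by $e_i$, $e_j$, $e_i + e_j$; a crude estimate there suffices and yields an explicit $\varepsilon_0$.
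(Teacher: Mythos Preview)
Your proposal is correct and follows exactly the same route as the paper: take $T_n = W_n$, use \Cref{thm:nnDecNoBorderRank} together with $\nnrank(W_n)=n$ to get $\varepsilon_n$ with $\nnrank^{\varepsilon_n}(W_n)=n$, and use $\brank(W_n)=\bpsdrank(W_n)=2$ for the other two ranks.

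The only difference is that you work harder than the paper does. The paper's proof of the second displayed claim consists solely of the inequalities $\rank^{\varepsilon}(W_n)\le\brank(W_n)=2$ and $\psdrank^{\varepsilon}(W_n)\le\bpsdrank(W_n)=2$; it never argues the lower bound $\ge 2$. In other words, the ``$=2$'' in the statement is really being used as ``$\le 2$'' (which is all that matters for contrasting with $\nnrank^{\varepsilon_n}(W_n)=n\to\infty$), and indeed the literal equality fails once $\varepsilon$ exceeds $\|W_n\|$. So the uniform-in-$n$ distance estimate to the rank-one variety that you flag as ``the main obstacle'' is not needed at all; you can simply drop that paragraph.
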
 

Note that a similar statement can be shown for every notion of rank that \emph{does not} exhibit a gap between border rank and rank.

\begin{proof}
Let $T_n = W_n$ the family of $W_n$-states. For fixed $n \in \mathbb{N}$, we know that
$$ \bnnrank(W_n) = \nnrank(W_n) = n.$$
Therefore there exists a $\varepsilon_n > 0$ such that
$$ \nnrank^{\varepsilon_n}(W_n) = n.$$
For the second statement, recall that $$\brank(W_n) = \bpsdrank(W_n) = 2.$$ Since $$\rank^{\varepsilon}(W_n) \leq \brank(W_n) = 2$$
and
$$\psdrank^{\varepsilon}(W_n) \leq \bpsdrank(W_n) = 2$$
for every $\varepsilon > 0$, this proves the statement.
\end{proof}

\renewcommand{\arraystretch}{1.6}

\section{Conclusions and Outlook}
\label{sec:Conclusion}

In this work, we have shown that many gaps between ranks and border ranks persist when introducing positivity and invariance constraints for tensor decompositions, and explored its consequences. 
More precisely, we have proven that: 
\begin{enumerate}[label=(\roman*)]
\item 
The standard and symmetric tensor decompositions exhibit gaps between border rank and rank for the psd-decomposition and local purifications (\Cref{ssec:standardBorder}), and the gaps disappear for the nonnegative and separable decomposition (\Cref{thm:nnDecNoBorderRank});  
\item 
Most of the gaps persist for cyclic and translational invariant decompositions (\Cref{ssec:cyclic} and \Cref{ssec:tiBorder}); 
\item 
There are no gaps for tree tensor decompositions, regardless of positivity constraints  (\Cref{thm:treeBorderRank});  
\item 
Upper bounds in the various positive ranks correspond to membership in certain (quantum) correlation sets (\Cref{thm:quantumCorrMainText}), which, together with the gaps between border rank and rank, imply that certain correlation sets are not topologically closed.
\end{enumerate}

Many examples exhibiting a separation are $n$-partite tensor decompositions with $n > 3$. This leaves open the question whether gaps between border ranks and ranks exist for positive and invariant $3$-partite decompositions. Specifically, do
\begin{itemize}
	\item $\psdrank$
	\item $\tipsdosr$
	\item $\tinnosr$
\end{itemize}
exhibit a gap between border rank and rank for $n = 3$?
Moreover, it is open, whether there is a gap for $\psdosr$ for any $n$. One further step would be to ask how big the gaps between border rank and rank can be for positive decompositions. We refer to \cite{Zu17} for a study in the case of standard tensor decompositions without positivity constraints.

Other surprising properties of tensor decompositions appearing already at $n=3$ include the fact that 
tensor rank and border rank are non-additive with respect to the direct sum \cite{Sch81, Sh17c, Ch21}, 
and that they are also non-multiplicative with respect to tensor products \cite{Ch18b, Ch19d}. 
Do these properties also hold for positive and invariant decompositions? 
And what would be their implications for correlation scenarios? 

Finally, there are  positivity structures such as multipartite nonnegative and sum-of-squares polynomials that behave very similarly to multipartite positive tensors \cite{De21b}. It would be interesting if positive decompositions of polynomials exhibit gaps between border ranks and ranks.

\section{Acknowledgments}
This research was funded in part by the Austrian Science Fund (FWF) [doi:\href{https://www.doi.org/10.55776/P33122}{10.55776/P33122}]. For open access purposes, the authors have applied a CC BY public copyright license to any author accepted manuscript version arising from this submission. AK further acknowledges funding of the Austrian Academy of Sciences (\"OAW) through the DOC scholarship 26547.

\bibliographystyle{quantum}
\bibliography{Bibliography.bib}

\onecolumn\newpage
\appendix
\renewcommand{\thesubsection}{\Alph{section}.\arabic{subsection}}

\captionsetup[subfigure]{justification=justified,singlelinecheck=false}

\section*{Appendices}

The following appendices are structured as follows:
\begin{itemize}
	\item In \Cref{app:OmegaG-decompositions}, we review the notion of $(\Omega,G)$-decompositions.
	\item In \Cref{app:rankInequalities}, we prove the tensor rank inequalities used in the main text.
	\item In \Cref{app:POVMQChannel}, we present the necessary background on quantum channels and POVMs.
	\item In \Cref{app:BolzanoWeierstrass}, we present the Bolzano--Weierstraß Theorem and its consequences.
	\item In \Cref{app:NonnegativeDec_CausalStructures} and \Cref{app:CorrelationCorrespondenceAppendix}, we prove the correspondences between ranks of positive tensor decompositions and certain (quantum) correlations.
	\item In \Cref{app:nnDecNoBorderRank}, we prove that the nonnegative and separable decomposition do not exhibit a gap between rank and border rank for standard tensor decompositions.
	\item In \Cref{app:treeDecompositions}, we prove that no tree tensor decomposition exhibits a gap between border rank and rank.
\end{itemize}

\section{Tensor decompositions on general weighted simplicial complexes}
\label{app:OmegaG-decompositions}

Here, we review the framework introduced in \cite{De19d} which generalizes tensor decompositions to arbitrary multi-hypergraph (weighted simplicial complex) structures with arbitrary symmetry constraints. It is structured as follows: In \Cref{ssec:wsc}, we review weighted simplicial complexes and group actions, which determine the structure of the tensor decomposition. In \Cref{ssec:tensordec}, we review the unconstrained, the non-negative, and the positive semidefinite tensor decomposition on weighted simplicial complexes $\Omega$ together with a group action $G$. In \Cref{ssec:matrixdec}, we review the separable decomposition and the purification form as examples of decompositions on matrix tensor product spaces. We also show that the examples discussed in the main text are special instances of $(\Omega,G)$-decompositions.

\subsection{Weighted simplicial complexes and group actions}
\label{ssec:wsc}

Here we give a definition of weighted simplicial complexes (wsc) $\Omega$ and group actions $G$ \cite{De19d}. Recall that $[n]$ denotes the set $\{1, \ldots, n\}$, and $\mathcal{P}_n$ the power set $\mathcal{P}([n])$ (i.e.\ the set of all subsets of $[n]$, which has $2^{n}$ elements).

\begin{definition}
\label{def:wsc}
(i) A \emph{weighted simplicial complex (wsc) on $[n]$} is a function $$\Omega: \mathcal{P}_n \to \N$$ where $S_1 \subseteq S_2$ implies that $\Omega(S_1)$ divides $\Omega(S_2)$.

(ii) A set $S \in \mathcal{P}_n$ is called  a \emph{simplex of $\Omega$}, if $\Omega(S) \neq 0$. We will assume that each singleton set $\{i\} \in \mathcal{P}_n$ is a simplex, and the elements $i \in [n]$ are the \emph{vertices} of the wsc. A maximal simplex (with respect to inclusion of sets) is a \emph{facet of $\Omega$}. The set of facets is denoted by
$$
\mathcal{F} \coloneqq \{F \in \mathcal{P}_n: F \text{ facet of } \Omega \}.
$$
The set of facets which contain vertex $i$ is denoted by
$$\mathcal{F}_i \coloneqq \{F \in \mathcal{F}: i \in F\}.
$$
The restriction of $\Omega$ to $\mathcal{F}$ gives rise to the multiset $\facetF$ which contains $F \in \mathcal{F}$ precisely $\Omega(F)$-times. 
The multiset $\facetF_i$ for $i \in [n]$ is defined analogously. 
There exists a canonical collapse map
$$c:\facetF \to \mathcal{F} \quad \text{ and } \quad c:\facetF_i \to \mathcal{F}_i$$
mapping all copies of a facet to the underlying facet. 

(iii) Two vertices $i,j$ are \emph{neighbors} if
$$ \mathcal{F}_i \cap \mathcal{F}_j \neq \emptyset$$
Two vertices $i,j$ are \emph{connected} if there is a sequence of vertices $i = i_1, i_2, \ldots, i_k = j$ such that $i_m$ and $i_{m+1}$ are neighbors for all ${m \in [k-1]}$.
\end{definition}

Before introducing group actions on weighted simplicial complexes, we define the notions of $G$-linearity and $G$-invariance. 

\begin{definition} Let $G$ be a group acting on the sets $X$,$Y$. A function $f:X \to Y$ is called \emph{$G$-linear} if
$$f(gx) = gf(x)$$
for all $x \in X$ and $g \in G$. If $G$ acts trivially on $Y$, $f$ is \emph{$G$-invariant}.
\end{definition}

The definition of a group action on $\Omega$ is split into two parts. First, we consider a group action of $G$ on the set $[n]$. Then, we extend this group action to the set of facets of $\Omega$. Without loss of generality, $G$ can be assumed to be a subgroup of the permutation group on the set $[n]$, $S_{n}$.  Every group action on $[n]$ canonically induces a group action on $\mathcal{P}_n$. If $\Omega$ is $G$-invariant, $G$ also induces a group action on $\mathcal{F}$, since for $F \in \mathcal{F}$, $g \in G$ and $F \subsetneq S$ it holds that $\Omega(gS) = \Omega(S) = 0$. This gives rise to the following definition.
\begin{definition} A group action of $G$ on the wsc $\Omega$ consists of the following:
\begin{enumerate}[label=(\roman*)]
	\item An action of $G$ on $[n]$ such that $\Omega$ is $G$-invariant with respect to the induced action on $\mathcal{P}_n$. This induces an action of $G$ on $\mathcal{F}$.
	\item An action of $G$ on $\widetilde{\mathcal{F}}$ such that the collapse map 
	$$c: \widetilde{\mathcal{F}} \to \mathcal{F}$$ is $G$-linear (we also say the action of $G$ on $\widetilde{\mathcal{F}}$ refines the action of $G$ on $\mathcal{F}$). 
\end{enumerate}
\end{definition}

In order to obtain  a group action on a wsc, one does not only have to specify how a group acts on $\mathcal{F}$, but also how it permutes the copies of facets in the multiset $\widetilde{\mathcal{F}}$. 

\begin{example}[Weighted simplicial complexes]
\label{ex:wsc}
Let us consider three examples of wsc that lead to the decompositions studied in the main text.
\begin{enumerate}[label=(\roman*)]
	\item\label{wsc:Sigman} The ``full simplex'' $\Sigma_n$ is given by $\Sigma_n: \mathcal{P}_n \to \mathbb{N}: S \mapsto 1$. The facet set is hence given by $\mathcal{F} = \big\{\{1, \ldots, n\}\big\}$.
	For $n = 5$, it is depicted as 
	\smallskip
	\begin{center}
	\begin{tikzpicture}

\draw[thick, fill=gray] (xyz polar cs:angle=0,radius=1)
-- (xyz polar cs:angle=72,radius=1)
-- (xyz polar cs:angle=144,radius=1)
-- (xyz polar cs:angle=216,radius=1)
-- (xyz polar cs:angle=288,radius=1)
-- cycle;

\draw[fill=black] (xyz polar cs:angle=0,radius=1) circle (2pt);
\draw[fill=black] (xyz polar cs:angle=72,radius=1) circle (2pt);
\draw[fill=black] (xyz polar cs:angle=144,radius=1) circle (2pt);
\draw[fill=black] (xyz polar cs:angle=216,radius=1) circle (2pt);
\draw[fill=black] (xyz polar cs:angle=288,radius=1) circle (2pt);

\node at (1.35,0) {$1$};
\node at (0.6,1.2) {$2$};
\node at (-1.1,0.9) {$3$};
\node at (-1.1,-0.9) {$4$};
\node at (0.6,-1.2) {$5$};

 \end{tikzpicture}
	\end{center}
	\smallskip
	where each node represents a vertex and the facet is represented by the gray area.
	\item\label{wsc:Thetan} The ``cycle'' is defined as
	$$\Theta_n: \mathcal{P}_n \to \mathbb{N}: S \mapsto \left\{\begin{array}{cl}
	1 & : \text{ if } |S| = 1 \text{ or } S = \{i, i+1\} \text{ for } i \in [n-1] \text{ or } \{n,1\}\\
	0 & : \text{ else.}
	\end{array}\right.$$
	Therefore, the set of facets is given by
	$$\mathcal{F} = \big\{\{1,2\}, \{2,3\}, \ldots, \{n-1, n\}, \{n,1\}\big\}.$$
	For $n = 5$, it is depicted as 
	\smallskip
	\begin{center}
	\begin{tikzpicture}

\draw[thick] (xyz polar cs:angle=0,radius=1)
-- (xyz polar cs:angle=72,radius=1)
-- (xyz polar cs:angle=144,radius=1)
-- (xyz polar cs:angle=216,radius=1)
-- (xyz polar cs:angle=288,radius=1)
-- cycle;

\draw[fill=black] (xyz polar cs:angle=0,radius=1) circle (2pt);
\draw[fill=black] (xyz polar cs:angle=72,radius=1) circle (2pt);
\draw[fill=black] (xyz polar cs:angle=144,radius=1) circle (2pt);
\draw[fill=black] (xyz polar cs:angle=216,radius=1) circle (2pt);
\draw[fill=black] (xyz polar cs:angle=288,radius=1) circle (2pt);

\node at (1.35,0) {$1$};
\node at (0.6,1.2) {$2$};
\node at (-1.1,0.9) {$3$};
\node at (-1.1,-0.9) {$4$};
\node at (0.6,-1.2) {$5$};

 \end{tikzpicture}
	\end{center}
	\smallskip
	where each line between two vertices $i$, $i+1$ represents the facet $\{i, i+1\}$.
	\item\label{wsc:Lambdan} The line $\Lambda_n$ is given by
	$$\Lambda_n: \mathcal{P}_n \to \mathbb{N}: S \mapsto \left\{\begin{array}{cl}
	1 & : \text{ if } |S| = 1 \text{ or } S = \{i, i+1\} \text{ for } i \in [n-1]\\
	0 & : \text{ else.}
	\end{array}\right.$$
	where $+$ is the ordinary addition on $\mathbb{N}$. Therefore, the set of facets is given by
	$$\mathcal{F} = \big\{\{1,2\}, \{2,3\}, \ldots, \{n-1, n\} \big\},$$
	depicted as
	\smallskip
	\begin{center}
	\begin{tikzpicture}

\draw[thick] (0,0) -- (3.5,0);
\draw[dotted, thick] (3.9,0) -- (4.6,0);
\draw[thick] (5,0) -- (7,0);

\draw[fill=black] (0,0) circle (2pt);
\draw[fill=black] (1.5,0) circle (2pt);
\draw[fill=black] (3,0) circle (2pt);
\draw[fill=black] (5.5,0) circle (2pt);
\draw[fill=black] (7,0) circle (2pt);

\node at (0,-0.5) {$1$};
\node at (1.5,-0.5) {$2$};
\node at (3,-0.5) {$3$};
\node at (5.5,-0.5) {$n-1$};
\node at (7,-0.5) {$n$};

 \end{tikzpicture}
	\end{center}
\end{enumerate}\demo
\end{example}

\begin{example}[Group actions on wsc]
\label{ex:GroupActions}
\begin{enumerate}[label=(\roman*)]
	\item\label{GroupActions:Sn} The group action $S_n$ of permutations on $[n]$ induces a group action on the full simplex $\Sigma_n$ since $S_n$ induces a group action of $S_n$ on $\mathcal{F}$ by mapping the simplex $\{1, \ldots, n\}$ to itself.
	\item\label{GroupActions:Cn} Let $C_n = \{e, \tau, \tau^2, \ldots, \tau^{n-1}\}$ be the group of translations on $[n]$, generated by $\tau: [n] \to [n]$ which is defined as $\tau(i) = i+1$ for $i \in [n-1]$ and $\tau(n) = 1$. $C_n$ induces a valid group action on the cycle $\Theta_n$, since facets $\{i, i+1\}$ are mapped to other facets $\{k, k+1\}$ by applying multiples of $\tau$ to every vertex. \demo
\end{enumerate}
\end{example}

In the upcoming definition of $(\Omega,G)$-decompositions (\Cref{def:tensor-dec}) we use $\beta \in \mathcal{I}^{\facetF_i}$ as summation indices. For this reason, we understand $\beta$ as a list of indices taking values in $\{1, \ldots, d\}$ where each index corresponds to a facet in $\facetF_i$, that is, $\beta = (\beta_{F_1}, \ldots, \beta_{F_n})$ where $\facetF_i = \{F_1, \ldots, F_n\}$. Moreover, we define the computational basis state corresponding to $\beta$ as
\begin{equation}
\label{eq:betaComputationalBasis}
\ket{\beta} \coloneqq \bigotimes_{F \in \facetF_i} \ket{\beta_{F}} \ \in \ \bigotimes_{i=1}^{|\facetF_i|} \mathbb{C}^{\mathcal{I}}. 
\end{equation}
For an element $g$ of a group $G$ acting on $\facetF$, and a function $(\alpha: \facetF \to \mathcal{I}) \in \mathcal{I}^{\facetF}$, we define
$$ {}^g \alpha: \facetF \to \mathcal{I}: F \mapsto \alpha(g^{-1} F)$$
We will often restrict to functions $\beta \in \mathcal{I}^{\facetF_i}$, i.e.\ defined on the facets $\facetF_i$ containing vertex $i$. In this situation, we have that ${}^{g} \beta: \facetF_{gi} \to \mathcal{I}$.

\subsection{$(\Omega,G)$-decompositions of nonnegative tensors}
\label{ssec:tensordec}

Let us now review the definition of $(\Omega,G)$-decompositions on nonnegative tensors $T \in \mathbb{C}^d \otimes \cdots \otimes \mathbb{C}^d$. For a more detailed exposition, we refer to \cite[Section 5]{De19d} and \cite{De19}. We call a tensor $T$ \emph{nonnegative} if $T_{j_1, \ldots, j_n} \geq 0$ for every $j_1, \ldots, j_n \in \{1,\ldots,d\}$. For any function $\alpha: \facetF \to \mathcal{I}$, we denote the restriction $\alpha_{|_{\sfacetF_i}}: \facetF_i \to \mathcal{I}$ by $\alpha_{|_i}$ for any $i \in [n]$.

\begin{definition}[Invariant decompositions of tensors \cite{De19d}]\label{def:tensor-dec} 
Let $T\in \mathbb{C}^{d}\otimes \cdots \otimes \mathbb{C}^{d} \cong \mathbb{C}^{d^n}$.
\begin{enumerate}[label=(\roman*),ref=(\roman*),leftmargin=*]
\item \label{def:tensor:i}
An \emph{$(\Omega,G)$-decomposition of $T$} is given by families of local tensors
$$\mathcal{T}^{[i]} = \left(\ket{T^{[i]}_{\beta}} \right)_{\beta \in \mathcal{I}^{\sfacetF_i}}$$
where $\ket{T^{[i]}_{\beta}} \in \mathbb{C}^{d}$ for all $i \in [n]$ and $\beta \in \mathcal{I}^{\facetF_i},$ such that
$$ T = \sum_{\alpha \in \mathcal{I}^{\sfacetF}} \ket{T^{[1]}_{\alpha_{\vert_1}}} \otimes \ket{T^{[2]}_{\alpha_{\vert_2}}} \otimes \cdots \otimes \ket{T^{[n]}_{\alpha_{\vert_n}}} $$
and
$$\ket{T^{[gi]}_{{}^g \beta}} = \ket{T^{[i]}_{\beta}}$$
for all $i \in [n]$ and $\beta \in \mathcal{I}^{\widetilde{\mathcal{F}}_i}$. The minimal cardinality of $\mathcal{I}$ among all $(\Omega,G)$-decompositions is called the \emph{$(\Omega,G)$-rank} of $T$, denoted  $\rank_{(\Omega,G)}(T)$. 
\item 
A \emph{nonnegative $(\Omega,G)$-decomposition of $T$} is an $(\Omega,G)$-decomposition of $T$ where all local vectors $\ket{T_{\beta}^{[i]}}$ have nonnegative entries in the standard basis, i.e.\ $\Cbraket{j}{T_{\beta}^{[i]}} \geq 0$ for all $j \in \{1, \ldots, d\}$.
The corresponding rank is called the \emph{nonnegative $(\Omega,G)$-rank} of $T$, denoted  $\nnrank_{(\Omega,G)}(T).$
\item 
A \emph{positive semidefinite $(\Omega,G)$-decomposition of $T$} consists of positive semidefinite matrices (indexed by $\beta, \beta' \in \mathcal{I}^{\facetF_i}$) $$ E_j^{[i]}\in \mathcal{M}_{\mathcal I^{\sfacetF_i}}^{+}(\mathbb{C})$$
 for $i \in [n]$ and $j \in \{1,\ldots, d\}$ such that  $$\bra{{}^g\beta} E_j^{[gi]} \ket{{}^g\beta'}= \bra{\beta} E_j^{[i]} \ket{\beta'}$$ for all $i,g,j,\beta,\beta'$, and $$T_{j_1, \ldots, j_n}=\sum_{\alpha,\alpha'\in\mathcal I^{\sfacetF}} \left(E_{j_1}^{[1]}\right)_{\alpha_{\mid_1}, \alpha'_{\mid_1}} \cdots \left(E_{j_n}^{[n]}\right)_{\alpha_{\mid_n}, \alpha'_{\mid_n}} = \sum_{\alpha,\alpha'\in\mathcal I^{\sfacetF}} \bra{\alpha_{|_1}} E_{j_1}^{[1]} \ket{\alpha'_{|_1}} \cdots \bra{\alpha_{|_n}} E_{j_n}^{[n]} \ket{\alpha'_{|_n}}$$ for all $j_1, \ldots, j_n$. 
The smallest cardinality of $\mathcal{I}$ among all positive semidefinite $(\Omega,G)$-decompositions is called the \emph{positive semidefinite $(\Omega,G)$-rank} of $T$, denoted ${\psdrank}_{(\Omega,G)}(T).$
\end{enumerate} 
\end{definition}

\begin{example}[Examples of $(\Omega,G)$-decompositions] Let us now present examples of $(\Omega,G)$-decompositions appearing in the main text.

\begin{enumerate}[label=(\roman*)]
	\item Let $\Omega = \Sigma_n$ be the full $n$-simplex (see \Cref{ex:wsc} \ref{wsc:Sigman}), i.e.\ the set of facets is a singleton $\mathcal{F} = \big\{ \{1, \ldots, n\} \big\}$.
	There is precisely one index in the decomposition; hence, a $\Sigma_n$-decomposition is given by
	$$\ket{T} = \sum_{\alpha = 1}^{r} \ket{T_\alpha^{[1]}} \otimes  \ket{T_\alpha^{[2]}} \otimes \cdots \otimes  \ket{T_\alpha^{[n]}}$$
	where $\ket{T_{\alpha}^{[i]}} \in \mathbb{C}^d$. The minimal natural number $r$ among all $\Sigma_n$-decompositions is the $\Sigma_n$-rank.	If, in addition, $\ket{T_{\alpha}^{[i]}}$ is a nonnegative vector in the standard basis for every $i \in \{1,\ldots, n\}$ and $\alpha \in \{1, \ldots, r\}$, this gives rise to a nonnegative $\Sigma_n$-decomposition. A positive semidefinite $\Sigma_n$-decomposition is given by
	$$T_{j_1, \ldots, j_n} = \sum_{\alpha, \alpha' = 1}^{r} \left(E_{j_1}^{[1]}  \right)_{\alpha, \alpha'} \cdots \left(E_{j_n}^{[n]}  \right)_{\alpha, \alpha'}$$
	with $E_j^{[i]} \geqslant 0$ being psd for every $i \in \{1, \ldots, n\}$ and $j \in \{1,\ldots, d\}$.
	
	When considering $\Sigma_n$-decompositions, we refer to their corresponding ranks via $\rank, \nnrank$ and $\psdrank$ without the index $\Sigma_n$.	
	
	\item Let $G = S_n$ be the full permutation group acting on the simplex $\Sigma_n$  (see \Cref{ex:GroupActions} \ref{GroupActions:Sn}).
	A $(\Sigma_n, S_n)$-decomposition of $\ket{T}$ is given by the additional constraint $\ket{T_{\alpha}^{[i_1]}} = \ket{T_{\alpha}^{[i_2]}} \eqqcolon \ket{T_{\alpha}}$ for every $i_1, i_2 \in \{1, \ldots, n\}$, i.e.\
	$$\ket{T} = \sum_{\alpha = 1}^{r} \ket{T_\alpha} \otimes  \ket{T_\alpha} \otimes \cdots \otimes  \ket{T_\alpha}.$$
	Similarly, a psd $(\Sigma_n, S_n)$-decomposition is given by 	
	$$T_{j_1, \ldots, j_n} = \sum_{\alpha, \alpha' = 1}^{r} \left(E_{j_1}\right)_{\alpha, \alpha'} \cdots \left(E_{j_n}\right)_{\alpha, \alpha'},$$
	i.e.\ the psd matrices $E_j^{[i]}$ coincide for every $i \in \{1, \ldots, n\}$.
	
	When considering $(\Sigma_n,S_n)$-decompositions, we refer to their corresponding ranks by $\symmrank, \symmnnrank$ and $\symmpsdrank$, and we do not use the subscript $(\Sigma_n,S_n)$.	
	
	\item Let $\Theta_n$ be a cycle of length $n$ (see \Cref{ex:wsc} \ref{wsc:Thetan}).
	There are $n$ indices $\alpha_i$ in the decomposition, each corresponding to one facet $\{i, i+1\}$. Hence, a $\Theta_n$-decomposition is given by
	$$T = \sum_{\alpha_1, \ldots, \alpha_n = 1}^{r} \ket{T_{\alpha_1, \alpha_2}^{[1]}} \otimes  \ket{T_{\alpha_2, \alpha_3}^{[2]}} \otimes \cdots \otimes  \ket{T_{\alpha_n, \alpha_1}^{[n]}}$$
	where $\ket{T_{\alpha_{i}, \alpha_{i+1}}^{[i]}} \in \mathbb{C}^d$. The minimal natural number $r$ among all $\Theta_n$-decompositions is the $\Theta_n$-rank.
	A $\Theta_n$-decomposition corresponds to the Matrix Product State (MPS) decomposition with closed boundary conditions \cite{Pe07, Ci20} and the rank corresponds to the bond dimension. If, in addition, $\ket{T_{\alpha, \beta}^{[i]}}$ is a nonnegative vector in the computational basis for every $i \in \{1,\ldots, n\}$, $\alpha, \beta \in \{1, \ldots, r\}$, this gives rise to a nonnegative $\Theta_n$-decomposition.
	
	A positive semidefinite $\Theta_n$-decomposition is given by
	$$T_{j_1, \ldots, j_n} = \sum_{\substack{\alpha_1, \ldots, \alpha_n,\\ \alpha'_1, \ldots, \alpha'_n = 1}}^{r} \bra{\alpha_1, \alpha_2} E_{j_1}^{[1]} \ket{\alpha'_1, \alpha'_2} \cdot \bra{\alpha_2, \alpha_3} E_{j_2}^{[2]} \ket{\alpha'_2, \alpha'_3} \cdots \bra{\alpha_n, \alpha_1} E_{j_n}^{[n]} \ket{\alpha'_n, \alpha'_1}$$
	with $E_j^{[i]} \geqslant 0$ for every $i \in \{1, \ldots, n\}$ and $j \in \{1,\ldots, d\}$.
	
	When considering $\Theta_n$-decompositions, we denote their corresponding ranks by $\osr, \nnosr$ and $\psdosr$, without the index $\Theta_n$.	
	
	\item Let $C_n$ the group of translations (see \Cref{ex:GroupActions} \ref{GroupActions:Cn}). A $(\Theta_n, C_n)$-decomposition is given by the additional constraints $\ket{T^{[i]}_{\alpha, \beta}} = \ket{T^{[j]}_{\alpha, \beta}} \eqqcolon \ket{T_{\alpha, \beta}}$, i.e.\ the decomposition is explicitly translational invariant. Hence,
	$$T = \sum_{\alpha_1, \ldots, \alpha_n = 1}^{r} \ket{T_{\alpha_1, \alpha_2}} \otimes  \ket{T_{\alpha_2, \alpha_3}} \otimes \cdots \otimes \ket{T_{\alpha_n, \alpha_1}}.$$
	This is known as a translational invariant MPS decomposition \cite{De19, Pe07} or the uniform Matrix Product State \cite{Ci20}.

	If, in addition, $\ket{T_{\alpha, \beta}}$ are nonnegative vectors in the computational basis, it is called a nonnegative $(\Theta_n, C_n)$-decomposition. Moreover,
	$$T_{j_1, \ldots, j_n} = \sum_{\substack{\alpha_1, \ldots, \alpha_n,\\ \alpha'_1, \ldots, \alpha'_n = 1}}^{r} \bra{\alpha_1, \alpha_2} E_{j_1} \ket{\alpha'_1, \alpha'_2} \cdot \bra{\alpha_2, \alpha_3} E_{j_2} \ket{\alpha'_2, \alpha'_3} \cdots \bra{\alpha_n, \alpha_1} E_{j_n} \ket{\alpha'_n, \alpha'_1}$$
	is called a psd $(\Theta_n,C_n)$-decomposition.
	
	When considering $(\Theta_n, C_n)$-decompositions, we denote their corresponding ranks by $\tiosr, \tinnosr$ and $\tipsdosr$. \demo
\end{enumerate}

\end{example}

\subsection{$(\Omega,G)$-decompositions of positive semidefinite matrices}
\label{ssec:matrixdec}

Let us now review the definition of $(\Omega,G)$-decompositions on matrices $\rho \in \mathcal{M}_d(\mathbb{C})^{n}$ that are psd or separable. Recall that $\rho$ is separable if there exists a decomposition
$$\rho = \sum_{\alpha = 1}^{n} \rho_{\alpha}^{[1]} \otimes \cdots \otimes \rho_{\alpha}^{[n]}$$
such that $\rho_{\alpha}^{[i]} \geqslant 0$. For a more detailed exposition, see \cite[Section 3]{De19d} and \cite{De19}.

\begin{definition}[Invariant decompositions of matrices \cite{De19d}]\label{def:matrix-dec} 
Let $\rho \in \mathcal{M}_d(\mathbb{C}) \otimes \cdots \otimes \mathcal{M}_d(\mathbb{C})$.
\begin{enumerate}[label=(\roman*),ref=(\roman*),leftmargin=*]
\item \label{def:tensor:i}
An \emph{$(\Omega,G)$-decomposition of $\rho$} is given by families
$$\mathcal{A}^{[i]} = \left(A^{[i]}_{\beta} \right)_{\beta \in \mathcal{I}^{\sfacetF_i}}$$
where $A^{[i]}_{\beta} \in \mathcal{M}_d(\mathbb{C})$ for all $i \in [n]$ and $\beta \in \mathcal{I}^{\mathcal{\widetilde{F}}_i},$ such that
$$ \rho = \sum_{\alpha \in \mathcal{I}^{\sfacetF}} A^{[1]}_{\alpha_{\vert_1}} \otimes A^{[2]}_{\alpha_{\vert_2}} \otimes \cdots \otimes A^{[n]}_{\alpha_{\vert_n}} $$
and
$$A^{[gi]}_{{}^g \beta} = A^{[i]}_{\beta}$$
for all $i \in [n]$ and $\beta \in \mathcal{I}^{\facetF_i}$. The minimal cardinality of $\mathcal{I}$ among all $(\Omega,G)$-decompositions is called the \emph{$(\Omega,G)$-rank} of $\rho$, denoted  $\rank_{(\Omega,G)}(\rho)$. 
\item A \emph{separable $(\Omega,G)$-decomposition of $\rho$} is an $(\Omega,G)$-decomposition with the additional constraint that 
$$A^{[i]}_{\beta} \geqslant 0 \quad \text{ for every } i \in \{1, \ldots, n\} \text{ and } \beta \in \mathcal{I}^{\facetF_i}.$$
The minimal cardinality of $\mathcal{I}$ among all separable $(\Omega,G)$-decompositions is called the \emph{separable $(\Omega,G)$-rank} of $\rho$, denoted  $\seprank_{(\Omega,G)}(\rho)$.

\item An \emph{$(\Omega,G)$-purification of $\rho$} is a factorization $\rho = L L^{\dag}$ where
$$L \in \mathcal{M}_{d, d'}(\mathbb{C}) \otimes \cdots  \otimes \mathcal{M}_{d, d'}(\mathbb{C})$$
together with an $(\Omega,G)$-decomposition of $L$.
The minimal cardinality of $\mathcal{I}$ in the decomposition of $L$ among all $(\Omega,G)$-purifications is called the \emph{$(\Omega,G)$ purification-rank} of $\rho$, denoted  $\purirank_{(\Omega,G)}(\rho)$. 
\end{enumerate}
\end{definition}

\begin{example}[$(\Omega,G)$-decompositions of multipartite matrices] We now present examples of $(\Omega,G)$-decompositions which appear in the main text.

\begin{enumerate}[label=(\roman*)]
	\item Let $\Omega$ be the full $n$-simplex $\Sigma_n$. A $\Sigma_n$-decomposition is given by
	$$\rho = \sum_{\alpha = 1}^{r} A_\alpha^{[1]} \otimes  A_\alpha^{[2]} \otimes \cdots \otimes  A_\alpha^{[n]}$$
	where $A_{\alpha}^{[i]} \in \mathcal{M}_{d}(\mathbb{C})$. The minimal natural number $r$ among all $\Sigma_n$-decompositions is the $\Sigma_n$-rank.
		
	If, in addition, $A_{\alpha}^{[i]} \geqslant 0$ for every $i \in \{1,\ldots, n\}$ and $\alpha \in \{1, \ldots, r\}$, the decomposition is a separable $\Sigma_n$-decomposition.
	
	We refer to the corresponding ranks of $\Sigma_n$-decompositions by $\rank, \seprank$ and $\purirank$, without the subscript $\Sigma_n$.	
	
	\item A $(\Sigma_n, S_n)$-decomposition of $\rho$ is given by the additional constraint $A_{\alpha}^{[i_1]} = A_{\alpha}^{[i_2]} \eqqcolon A_{\alpha}$,
	$$\rho = \sum_{\alpha = 1}^{r} A_\alpha \otimes  A_\alpha \otimes \cdots \otimes  A_\alpha$$
	Similarly, a $(\Sigma_n, S_n)$-purification is given by a factorization $\rho = L L^{\dag}$ with
	$$L = \sum_{\alpha = 1}^{r} L_\alpha \otimes  L_\alpha \otimes \cdots \otimes  L_\alpha$$
	
	We refer to the corresponding ranks of $(\Sigma_n,S_n)$-decompositions by $\symmrank, \symmseprank$ and $\symmpurirank$, without the index $(\Sigma_n,S_n)$.	
	
	\item Let $\Theta_n$ be a cycle of length $n$. A $\Theta_n$-decomposition is given by
	$$\rho = \sum_{\alpha_1, \ldots, \alpha_n = 1}^{r} A_{\alpha_1, \alpha_2}^{[1]} \otimes  A_{\alpha_2, \alpha_3}^{[2]} \otimes \cdots \otimes  A_{\alpha_n, \alpha_1}^{[n]}$$
	where $A_{\alpha,\beta}^{[i]} \in \mathcal{M}_d(\mathbb{C})$. The minimal natural number $r$ among all $\Theta_n$-decompositions is the $\Theta_n$-rank.
	
	A $\Theta_n$-decomposition corresponds to a Matrix Product Operator (MPO) decomposition \cite{De19}. The rank is known as the bond dimension or sometimes also called the \emph{operator Schmidt rank} (osr). This decomposition is represented with tensor networks in \Cref{fig:1Ddecompositions}.
	
	If, in addition, $A_{\alpha, \beta}^{[i]} \geqslant 0$ for every $i \in \{1,\ldots, n\}$ and $\alpha, \beta \in \{1, \ldots, r\}$, it gives rise to a separable $\Theta_n$-decomposition.
	
	A $\Theta_n$-purification is given by a factorization $\rho = L L^{\dag}$ together with
	$$L = \sum_{\alpha_1, \ldots, \alpha_n = 1}^{r} L_{\alpha_1, \alpha_2}^{[1]} \otimes  L_{\alpha_2, \alpha_3}^{[2]} \otimes \cdots \otimes  L_{\alpha_n, \alpha_1}^{[n]}.$$
	A $\Theta_n$-decomposition is better known as a local purification form in 1D \cite{De19,De19d}, depicted via tensor networks in \Cref{fig:1Ddecompositions}.
	
	We refer to the ranks of $\Theta_n$-decompositions by $\osr, \seposr$ and $\puriosr$, without using the index $\Theta_n$.

	\item A $(\Theta_n, C_n)$-decomposition is given by the additional constraints $A^{[i]}_{\alpha, \beta} = A^{[j]}_{\alpha, \beta} \eqqcolon A_{\alpha, \beta}$, i.e.\
	$$\rho = \sum_{\alpha_1, \ldots, \alpha_n = 1}^{r} A_{\alpha_1, \alpha_2} \otimes  A_{\alpha_2, \alpha_3} \otimes \cdots \otimes  A_{\alpha_n, \alpha_1}$$
	In words, the decomposition is explicitly translational invariant. This decomposition is also known as translational invariant operator Schmidt decomposition \cite{De19}.

	If, in addition, $A_{\alpha, \beta} \geqslant 0$, it is a separable $(\Theta_n, C_n)$-decomposition. A $(\Theta_n, C_n)$-purification is given by a factorization $\rho = L L^{\dag}$ together with a $(\Theta_n, C_n)$-decomposition
	$$L = \sum_{\alpha_1, \ldots, \alpha_n = 1}^{r} L_{\alpha_1, \alpha_2} \otimes  L_{\alpha_2, \alpha_3} \otimes \cdots \otimes  L_{\alpha_n, \alpha_1}.$$
	
	We refer to the corresponding ranks of $(\Theta_n, C_n)$-decompositions by $\tiosr, \tiseposr$ and $\tipuriosr$. \demo
\end{enumerate}

\end{example}

\begin{figure}
\centering
\begin{subfigure}[b]{0.44\textwidth}
\centering
\begin{tikzpicture}[scale=0.6]

\begin{scope}[xshift=-0.6cm]

\draw[thick, fill=color3] (-3.5,1) rectangle (-1,2) node[midway] {$\rho$};

\draw[thick] (-3.25,1) -- (-3.25,0.5);
\draw[thick] (-2.75,1) -- (-2.75,0.5);
\draw[thick] (-1.25,1) -- (-1.25,0.5);
\node at (-2,0.75) {$\cdots$};

\draw[thick] (-3.25,2.5) -- (-3.25,2);
\draw[thick] (-2.75,2.5) -- (-2.75,2);
\draw[thick] (-1.25,2.5) -- (-1.25,2);
\node at (-2,2.25) {$\cdots$};

\node at (-0.5,1.5) {$=$};
\end{scope}

\draw[very thick] (1,1.5) -- (2,1.5);
\draw[very thick] (3,1.5) -- (3.5,1.5);
\draw[very thick] (4.5,1.5) -- (5,1.5);

\node at (1.5,1.25) {\scriptsize $\alpha_2$};
\node at (3.35,1.25) {\scriptsize $\alpha_3$};
\node at (4.65,1.25) {\scriptsize $\alpha_n$};

\draw[very thick, rounded corners=1.5mm] (0,1.5) -- (-0.5,1.5) -- (-0.5, 2.25) -- (6.5, 2.25) -- (6.5,1.5) -- (6,1.5);
\node at (4,2.5) {\scriptsize $\alpha_1$};

\node at (4,1.5) {$\cdots$};

\draw[thick] (0.5,1) -- (0.5,0.5);
\draw[thick] (2.5,1) -- (2.5,0.5);
\draw[thick] (5.5,1) -- (5.5,0.5);

\draw[line width=2mm, white] (0.5,2) -- (0.5,2.5);
\draw[line width=2mm, white] (2.5,2) -- (2.5,2.5);
\draw[line width=2mm, white] (5.5,2) -- (5.5,2.5);

\draw[thick] (0.5,2) -- (0.5,2.5);
\draw[thick] (2.5,2) -- (2.5,2.5);
\draw[thick] (5.5,2) -- (5.5,2.5);

\draw[thick, fill=color3] (0,1) rectangle (1,2) node[midway] {$A^{[1]}$};
\draw[thick, fill=color3] (2,1) rectangle (3,2) node[midway] {$A^{[2]}$};
\draw[thick, fill=color3] (5,1) rectangle (6,2) node[midway] {$A^{[n]}$};

\draw[white,thick] (0.5,0) -- (0.5,-0.5);
\draw[white,thick] (0.5,3) -- (0.5,3.5);

 \end{tikzpicture}
\label{1Ddec:MPO}
\end{subfigure}
\hspace*{1.2cm}
\begin{subfigure}[b]{0.44\textwidth}
\centering
\begin{tikzpicture}[scale=0.6]

\begin{scope}[xshift=-0.6cm]
\draw[thick, fill=color3] (-3.5,1) rectangle (-1,2) node[midway] {$\rho$};

\draw[thick] (-3.25,1) -- (-3.25,0.5);
\draw[thick] (-2.75,1) -- (-2.75,0.5);
\draw[thick] (-1.25,1) -- (-1.25,0.5);
\node at (-2,0.75) {$\cdots$};

\draw[thick] (-3.25,2.5) -- (-3.25,2);
\draw[thick] (-2.75,2.5) -- (-2.75,2);
\draw[thick] (-1.25,2.5) -- (-1.25,2);
\node at (-2,2.25) {$\cdots$};

\node at (-0.4,1.5) {$=$};

\end{scope}

\draw[rectangle, fill=color1, draw=none] (-0.6, -0.3) rectangle (7.5,1.5);
\node at (7, 0.5) {$L$};

\draw[very thick, rounded corners=1.5mm] (0,0.5) -- (-0.5,0.5) -- (-0.5, 1.25) -- (6.5, 1.25) -- (6.5,0.5) -- (6,0.5);
\draw[very thick, rounded corners=1.5mm] (0,2.5) -- (-0.5,2.5) -- (-0.5, 3.25) -- (6.5, 3.25) -- (6.5,2.5) -- (6,2.5);

\draw[very thick] (1,0.5) -- (2,0.5);
\draw[very thick] (3,0.5) -- (3.5,0.5);
\draw[very thick] (4.5,0.5) -- (5,0.5);

\draw[very thick] (1,2.5) -- (2,2.5);
\draw[very thick] (3,2.5) -- (3.5,2.5);
\draw[very thick] (4.5,2.5) -- (5,2.5);

\node at (4,3.5) {\scriptsize $\alpha_1$};

\node at (1.5,2.25) {\scriptsize $\alpha_2$};
\node at (3.35,2.25) {\scriptsize $\alpha_3$};
\node at (4.65,2.25) {\scriptsize $\alpha_n$};

\node at (4,1.5) {\scriptsize $\alpha'_1$};

\node at (1.5,0.2) {\scriptsize $\alpha'_2$};
\node at (3.35,0.2) {\scriptsize $\alpha'_3$};
\node at (4.65,0.2) {\scriptsize $\alpha'_n$};

\node at (4,0.5) {$\cdots$};
\node at (4,2.5) {$\cdots$};

\draw[line width=2mm, white] (0.5,1) -- (0.5,2);
\draw[line width=2mm, white] (2.5,1) -- (2.5,2);
\draw[line width=2mm, white] (5.5,1) -- (5.5,2);

\draw[line width=2mm, color1] (0.5,1) -- (0.5,1.5);
\draw[line width=2mm, color1] (2.5,1) -- (2.5,1.5);
\draw[line width=2mm, color1] (5.5,1) -- (5.5,1.5);

\draw[thick] (0.5,1) -- (0.5,2);
\draw[thick] (2.5,1) -- (2.5,2);
\draw[thick] (5.5,1) -- (5.5,2);

\draw[thick] (0.5,0) -- (0.5,-0.5);
\draw[thick] (2.5,0) -- (2.5,-0.5);
\draw[thick] (5.5,0) -- (5.5,-0.5);

\draw[line width=2mm, white] (0.5,3) -- (0.5,3.5);
\draw[line width=2mm, white] (2.5,3) -- (2.5,3.5);
\draw[line width=2mm, white] (5.5,3) -- (5.5,3.5);

\draw[thick] (0.5,3) -- (0.5,3.5);
\draw[thick] (2.5,3) -- (2.5,3.5);
\draw[thick] (5.5,3) -- (5.5,3.5);

\draw[thick, fill=color3] (0,0) rectangle (1,1) node[midway] {\small $L^{[1]}$};
\draw[thick, fill=color3] (2,0) rectangle (3,1) node[midway] {\small $L^{[2]}$};
\draw[thick, fill=color3] (5,0) rectangle (6,1) node[midway] {\small $L^{[n]}$};

\draw[thick, fill=color3] (0,2) rectangle (1,3) node[midway] {\small $\bar{L}^{[1]}$};
\draw[thick, fill=color3] (2,2) rectangle (3,3) node[midway] {\small $\bar{L}^{[2]}$};
\draw[thick, fill=color3] (5,2) rectangle (6,3) node[midway] {\small $\bar{L}^{[n]}$};

 \end{tikzpicture}
\label{1Ddec:puri}
\end{subfigure}
\caption{(left) The Matrix product density operator (MPDO) form and (right) the local purification form drawn as a tensor network for $\Omega = \Theta_n$. Each line represents an index and each contraction a summation. The summation indices $\alpha_i$ form a cycle.}
\label{fig:1Ddecompositions}
\end{figure}

\begin{remark}
\label{rem:diagonalDec}
$(\Omega,G)$-decompositions of matrices generalize $(\Omega,G)$-decompositions on tensors. Given a tensor $T \in \mathbb{C}^d \otimes \cdots \otimes \mathbb{C}^d$, the diagonal matrix
$$ \rho_{T} \coloneqq \Diag(T) \coloneqq \sum_{j_1, \ldots, j_n = 1}^{d} T_{j_1, \ldots, j_n} \ket{j_1, \ldots, j_n} \bra{j_1, \ldots, j_n}$$
is psd if and only if $T$ is entrywise nonnegative. Moreover, every nonnegative $(\Omega,G)$-decomposition of $T$ corresponds to a separable $(\Omega,G)$-decomposition of $\rho_T$, and every psd $(\Omega,G)$-decomposition of $T$ corresponds to a $(\Omega,G)$-purification of $\rho_T$. In particular, we have that \cite[Theorem 43]{De19d}
\def\arraystretch{1.5}
\setlength{\tabcolsep}{0.5em}
\begin{center}
\begin{tabular}{r r c l}
(i) & $\rank_{(\Omega,G)}(T)$ & $ = $ & $\rank_{(\Omega,G)}(\rho_{T})$\\
(ii) & $\nnrank_{(\Omega,G)}(T)$ & $=$ & $\seprank_{(\Omega,G)}(\rho_{T})$\\
(iii) & $\psdrank_{(\Omega,G)}(T)$ & $=$ & $\purirank_{(\Omega,G)}(\rho_{T})$
\end{tabular}
\end{center}
\demo
\end{remark}

\subsection{The structure tensor $\ket{\Omega_r}$}
\label{ssec:structureTensor}

We now introduce for every wsc $\Omega$ a corresponding structure tensor $\ket{\Omega_r}$ which has $\rank_{\Omega}(\ket{\Omega_r}) \leq r$. This tensor allows for a more compact representation of $(\Omega,G)$-decompositions, which we will use in the proofs of \Cref{thm:CQCorr} and \Cref{thm:QQCorr}. Similar structure tensors, based on hyper-graphs, have been introduced in \cite{Ch19f, Ch21}.

Namely, for a given wsc $\Omega$, we define
$$\ket{\Omega_r} \coloneqq \sum_{\alpha \in \mathcal{I}^{\sfacetF}} \ket{\alpha_{|_1}} \otimes \cdots \otimes \ket{\alpha_{|_n}} \ \in \ \bigotimes_{i=1}^{n} \mathbb{C}^{r_i}$$
where $\mathcal{I} = \{1,\ldots, r\}$ and $r_i = |\mathcal{I}^{\facetF_i}|$. Since $\beta \in \mathcal{I}^{\facetF_i}$ can be understood as an array of indices in $\{1,\ldots, r\}$, $\ket{\beta}$ is defined as in Equation \eqref{eq:betaComputationalBasis}. 

Choosing for example the cycle $\Theta_n$, we obtain the $n$-fold matrix mutliplication (MaMu) tensor
$$ \ket{\Theta_{n,r}} = \sum_{\alpha_1, \ldots, \alpha_n = 1}^{r} \ket{\alpha_1, \alpha_2} \otimes \ket{\alpha_2, \alpha_3} \otimes \cdots \otimes \ket{\alpha_n, \alpha_1} \ \in \ \left(\mathbb{C}^r \otimes \mathbb{C}^r \right)^{\otimes n}.$$
For the $n$-fold simplex $\Sigma_n$, we obtain the unnormalized $n$-fold $r$-dimensional GHZ-state
$$ \ket{\Sigma_{n,r}} = \sum_{\alpha = 1}^{r} \ket{\alpha}^{\otimes n} \ \in \ \left(\mathbb{C}^r\right)^{\otimes n}.$$
Note that every $(\Omega,G)$-decomposition of rank $r$ can be written using $\ket{\Omega_r}$ as
\begin{equation}
\label{eq:OmegaRDecomposition}
\ket{T} = \sum_{\alpha \in \mathcal{I}^{\sfacetF}} \ket{v_{\alpha_{|_1}}^{[1]}} \otimes \cdots \otimes \ket{v_{\alpha_{|_n}}^{[n]}} = W^{[1]} \otimes \cdots \otimes W^{[n]} \ket{\Omega_r} \quad \text{ with } \quad W^{[i]} = \sum_{\beta \in \mathcal{I}^{\sfacetF_i}} \ket{v_{\beta}^{[i]}} \bra{\beta}.
\end{equation}
In this situation, $G$-invariance of $v^{[i]}_{\beta}$ translates to $W^{[gi]} \ket{{}^g \beta} = W^{[i]} \ket{\beta}$. Moreover, every psd $(\Omega,G)$-decomposition can be written as
$$T_{j_1, \ldots, j_n} = \sum_{\alpha, \alpha' \in \mathcal{I}^{\sfacetF}} \left(B_{j_1}^{[1]}\right)_{\alpha_{|_1}, \alpha'_{|_1}} \cdots \left(B_{j_n}^{[n]}\right)_{\alpha_{|_n}, \alpha'_{|_n}} = \bra{\Omega_r} B_{j_1}^{[1]} \otimes \cdots \otimes B_{j_n}^{[n]} \ket{\Omega_r}.$$
Note that in both examples, the corresponding $(\Omega,G)$-rank is given by the minimal parameter $r$ in the structure tensor that admits such a decomposition.

\section{Tensor rank inequalities}
\label{app:rankInequalities}
In the following, we will prove three inequalities used in the main text: that the rank of the $W_n$-state is $n$ (\Cref{ssec:rankWn}), the relation between $\rank$ and $\psdrank$ (\Cref{ssec:rank_psdrank}), and a lower bound for the symmetric psd-rank of $W_n$ (\Cref{ssec:symmetric_psdrank}).

\subsection{The rank of the $W_n$-state}
\label{ssec:rankWn}

Let us review a well-known example of a gap between border-rank and rank for the standard tensor rank. This statement has been proven for $n = 3$ in \cite{Si08} and generalizes to larger $n$. We prove it here for completeness.
The $W_n$-state has been studied from many different perspectives, for example, also bounds on the rank of $W_3^{\otimes k}$ have been established in \cite{Zu17}.

\begin{proposition}
\label{prop:WnLoweBound}
For $n \geq 2$, we have that $\rank(W_n) = n$.
\end{proposition}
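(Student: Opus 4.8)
The bound $\rank(W_n)\le n$ is immediate from the definition, since $W_n$ is a sum of $n$ product tensors. The work is the lower bound $\rank(W_n)\ge n$, which I would prove by the \emph{substitution method} together with an induction on $n$. The starting point is the recursion obtained by splitting off the first tensor factor,
$$W_n=\ket{0}\otimes W_{n-1}+\ket{1}\otimes\ket{0}^{\otimes(n-1)}.$$
The subtlety is that eliminating one rank-one term of a decomposition by applying a linear functional to the first $\mathbb{C}^2$ factor leaves a tensor of the form $W_{n-1}+\mu\ket{0}^{\otimes(n-1)}$ rather than $W_{n-1}$ itself, so a direct induction on $\rank(W_n)$ does not close. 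I would therefore prove the stronger statement: for every $m\ge 1$ and every $\mu\in\mathbb{C}$, the tensor $T_m^{\mu}\coloneqq W_m+\mu\ket{0}^{\otimes m}$ has $\rank(T_m^{\mu})=m$. This family is stable under the recursion, $T_m^{\mu}=\ket{0}\otimes T_{m-1}^{\mu}+\ket{1}\otimes\ket{0}^{\otimes(m-1)}$, which is exactly what makes the induction run.

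The base case $m=1$ is trivial, as $T_1^{\mu}=\ket{1}+\mu\ket{0}$ is a nonzero vector. For the inductive step, assume $\rank(T_{m-1}^{\nu})=m-1$ for all $\nu\in\mathbb{C}$. The upper bound $\rank(T_m^{\mu})\le m$ follows from the recursion together with the induction hypothesis. For the lower bound, take any decomposition $T_m^{\mu}=\sum_{\alpha=1}^{r}v_\alpha\otimes S_\alpha$ with $v_\alpha\in\mathbb{C}^2$ and $S_\alpha\in(\mathbb{C}^2)^{\otimes(m-1)}$, and apply $\bra{1}\otimes\id$: since $(\bra{1}\otimes\id)(T_m^{\mu})=\ket{0}^{\otimes(m-1)}\neq 0$, at least one $v_\alpha$ has a nonzero $\ket{1}$-component; after relabelling and absorbing a scalar into $S_\alpha$, say $v_r=\ket{1}+\lambda\ket{0}$. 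Rewriting $T_m^{\mu}$ in the basis $\{\ket{0},\ket{1'}\}$ with $\ket{1'}\coloneqq v_r$ gives $T_m^{\mu}=\ket{0}\otimes T_{m-1}^{\mu-\lambda}+\ket{1'}\otimes\ket{0}^{\otimes(m-1)}$. Applying the functional $\phi$ dual to $\ket{0}$ relative to $\{\ket{0},\ket{1'}\}$ (so that $\phi(v_r)=0$) to both this identity and the decomposition annihilates the $\alpha=r$ term and yields $T_{m-1}^{\mu-\lambda}=\sum_{\alpha=1}^{r-1}\phi(v_\alpha)S_\alpha$, hence $\rank(T_{m-1}^{\mu-\lambda})\le r-1$. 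By the induction hypothesis $m-1\le r-1$, so $r\ge m$. Setting $\mu=0$ gives $\rank(W_n)\ge n$.

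The main obstacle is conceptual rather than computational: a naive induction on the rank fails, and the key move is to identify the one-parameter family $\{W_m+\mu\ket{0}^{\otimes m}\}_{\mu}$ that is simultaneously closed under the decomposition recursion and under the substitution step. Once this is set up, the remaining steps — verifying that $\bra{1}\otimes\id$ does not kill $T_m^{\mu}$, carrying out the change of basis in the first factor, and checking that the leftover tensor is again of the form $T_{m-1}^{\nu}$ — are routine bookkeeping.
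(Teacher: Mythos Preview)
Your proof is correct and follows essentially the same substitution argument as the paper: both hinge on the fact that $\rank(W_m+\mu\ket{0}^{\otimes m})\ge m$, and both extract this by applying a linear functional on the first factor that kills one summand of a putative decomposition. The only difference is packaging: you build the statement about $W_m+\mu\ket{0}^{\otimes m}$ into a strengthened induction hypothesis, whereas the paper proves it as a separate lemma (observing that $W_m$ and $W_m+\mu\ket{0}^{\otimes m}$ are related by an invertible local map $A^{\otimes m}$) and then runs a proof by contradiction; your organization is arguably the cleaner of the two.
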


\begin{proof}
That $\rank(W_n) \leq n$ is clear by the definition of $W_n$. We prove that $\rank(W_n) \geq n$ by induction. The case $n = 2$ is clear, since $W_2 \in \mathbb{C}^2 \otimes \mathbb{C}^2$ corresponds to the matrix 
$$W_2 = \ket{0}\bra{1} + \ket{1} \bra{0}$$
via the correspondence presented in Equation \eqref{eq:MatrixRankTensorRank}. Therefore $W_2$ has rank $2$.

For the induction step $n \to n+1$, suppose that $W_{n+1}$ has $\rank(W_{n+1}) \leq n$ with a decomposition
$$W_{n+1} = \sum_{\alpha = 1}^{n} \ket{v_{\alpha}^{[1]}} \otimes \cdots \otimes \ket{v_{\alpha}^{[n]}}.$$
For system $1$ we will prove that
\begin{enumerate}[label=(\alph*)]
	\item\label{indep} The vectors $\{\ket{v_{\alpha}^{[1]}}\}_{\alpha = 1,\ldots,n}$ span $\mathbb{C}^2$.
	\item\label{dep} $\ket{v_{\beta}^{[1]}} = c_{\beta} \ket{0}$ for every $\beta \in \{1, \ldots, n\}$.
\end{enumerate}
These two conditions contradict each other, hence proving the statement of the proposition.

To prove \ref{indep} assume that the family $\{\ket{v_{\alpha}^{[1]}}\}_{\alpha = 1,\ldots,n}$ does not span $\mathbb{C}^2$. Then there exists a non-zero vector $\ket{x} \in \mathbb{C}^2$ such that $\Cbraket{x}{v_{\alpha}^{[1]}} = 0$ for every $\alpha$. Appyling $\bra{x}$ to the first tensor factor of $W_{n+1}$ leads to
$$0 = \Cbraket{x}{0} W_n + \Cbraket{x}{1} \ket{0,0,0,\ldots,0}.$$
Since $\ket{W_n}$ and $\ket{0,\ldots,0}$ are linearly independent this implies that $\ket{x} = 0$, which is a contradiction.

To prove \ref{dep}, note that 
$$\rank(W_n + b \ket{0,\ldots, 0}) \geq \rank(W_n) \geq n$$ for every  $b \in \mathbb{R}$
since
$$W_n = A^{\otimes n} \Big(W_n + b \ket{0,0,0,\ldots,0}\Big)$$
with
$$A: \ket{0} \mapsto \ket{0}, \quad \ket{1} \mapsto \ket{1} - \frac{b}{n} \ket{0}.$$
This shows that $\rank(W_n + b \ket{0,\ldots,0}) \geq \rank(W_n)$ since the rank is non-increasing under local operations. Now let $\beta \in \{1, \ldots, r\}$ be fixed and choose $\ket{x} \in \mathbb{C}^2$ such that $\Cbraket{x}{v_{\beta}^{[1]}} = 0$. Applying $\bra{x}$ to the first tensor factor of $W_{n+1}$ we obtain

$$ \sum_{\alpha = 1, \alpha \neq \beta}^{n} \Cbraket{x}{v_{\alpha}^{[1]}} \ket{v_{\alpha}^{[2]}} \otimes \cdots \otimes \ket{v_{\alpha}^{[n]}} = \Cbraket{x}{0} W_n + \Cbraket{x}{1} \ket{0,0,0,\ldots,0}$$
Since the sum on the left hand side contains $n-1$ elementary tensors and the right hand side has rank at least $n$, if $\Cbraket{x}{0} \neq 0$, it follows that $\Cbraket{x}{0} = 0$. But this implies that $\ket{v_{\beta}^{[1]}} = c_\beta \ket{0}$.
\end{proof}

\begin{corollary}
For $n \geq 3$, we have that
$$ \brank(W_n) = 2 < n = \rank(W_n)$$
\end{corollary}

\subsection{Relating $\rank$ and $\psdrank$}
\label{ssec:rank_psdrank}

We now review the inequality between rank and the psd-rank shown in \cite[Corollary 44]{De19d}. For completeness, we also provide a proof of this statement.

\begin{lemma}[Relation between rank and psd-rank]
\label{lem:rank-psdrank-bound_general}
Let $\Omega$ be a wsc and $G$ a group action on $\Omega$.
For every nonnegative tensor $T$, we have
$$\rank_{(\Omega,G)}(T) \leq \psdrank_{(\Omega,G)}(T)^2$$
\end{lemma}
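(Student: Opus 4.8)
The plan is to convert an optimal positive semidefinite $(\Omega,G)$-decomposition of $T$, of size $r := \psdrank_{(\Omega,G)}(T)$, directly into an unconstrained $(\Omega,G)$-decomposition whose index set has cardinality $r^2$. No Cholesky-type factorization of the psd matrices is needed; in fact positive semidefiniteness is never used, only the bookkeeping of the $(\Omega,G)$-formalism.

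Concretely, I would fix psd matrices $E_j^{[i]} \in \mathcal{M}^{+}_{\mathcal{I}^{\widetilde{\mathcal{F}}_i}}(\mathbb{C})$ realising the decomposition, so that $|\mathcal{I}| = r$, the identity
\[
T_{j_1,\ldots,j_n} = \sum_{\alpha,\alpha' \in \mathcal{I}^{\widetilde{\mathcal{F}}}} \bra{\alpha_{|_1}} E_{j_1}^{[1]} \ket{\alpha'_{|_1}} \cdots \bra{\alpha_{|_n}} E_{j_n}^{[n]} \ket{\alpha'_{|_n}}
\]
holds, and the invariance $\bra{{}^g\beta} E_j^{[gi]} \ket{{}^g\beta'} = \bra{\beta} E_j^{[i]} \ket{\beta'}$ holds for all $i,g,j,\beta,\beta'$. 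Then I set $\mathcal{I}' := \mathcal{I} \times \mathcal{I}$, so that $|\mathcal{I}'| = r^2$ and a function in $\mathcal{I}'^{\widetilde{\mathcal{F}}}$ is a pair $(\alpha,\alpha')$ with $\alpha,\alpha' \in \mathcal{I}^{\widetilde{\mathcal{F}}}$, whose restriction to $\widetilde{\mathcal{F}}_i$ is $(\alpha_{|_i},\alpha'_{|_i})$. For each vertex $i$ and each $\beta,\beta' \in \mathcal{I}^{\widetilde{\mathcal{F}}_i}$ I define the local vector
\[
\ket{w^{[i]}_{(\beta,\beta')}} := \sum_{j=1}^{d} \bra{\beta} E_j^{[i]} \ket{\beta'} \, \ket{j} \; \in \; \mathbb{C}^d .
\]
Expanding $\sum_{(\alpha,\alpha') \in \mathcal{I}'^{\widetilde{\mathcal{F}}}} \ket{w^{[1]}_{(\alpha_{|_1},\alpha'_{|_1})}} \otimes \cdots \otimes \ket{w^{[n]}_{(\alpha_{|_n},\alpha'_{|_n})}}$ in the computational basis reproduces exactly the right-hand side of the displayed identity, since $\sum_{(\alpha,\alpha')\in\mathcal{I}'^{\widetilde{\mathcal{F}}}}$ is literally $\sum_{\alpha,\alpha'\in\mathcal{I}^{\widetilde{\mathcal{F}}}}$; hence this is an (unconstrained) $\Omega$-decomposition of $T$ of size $r^2$.

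It then remains to verify the $G$-constraint. The action on $\mathcal{I}'^{\widetilde{\mathcal{F}}}$ induced by the action on $\widetilde{\mathcal{F}}$ is the diagonal one, ${}^g(\beta,\beta') = ({}^g\beta,{}^g\beta')$, directly from $\bigl({}^g(\beta,\beta')\bigr)(F) = (\beta,\beta')(g^{-1}F)$. Therefore
\[
\ket{w^{[gi]}_{{}^g(\beta,\beta')}} = \sum_{j=1}^{d} \bra{{}^g\beta} E_j^{[gi]} \ket{{}^g\beta'} \, \ket{j} = \sum_{j=1}^{d} \bra{\beta} E_j^{[i]} \ket{\beta'} \, \ket{j} = \ket{w^{[i]}_{(\beta,\beta')}},
\]
using precisely the invariance of the $E_j^{[i]}$. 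So $\bigl(\ket{w^{[i]}_{(\beta,\beta')}}\bigr)$ is a valid $(\Omega,G)$-decomposition with index set of cardinality $r^2$, giving $\rank_{(\Omega,G)}(T) \le r^2 = \psdrank_{(\Omega,G)}(T)^2$.

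There is no genuine obstacle, as the construction is essentially a reindexing; specialising $\Omega$ to the simplex $\Sigma_n$ (with or without the $S_n$-symmetry), to the cycle $\Theta_n$, or to the translational-invariant cycle $(\Theta_n,C_n)$ recovers the bound for $\rank$ versus $\psdrank$ in each of these cases, which is the ``same bound'' asserted for arbitrary arrangements and symmetries. The only point deserving attention is carrying the restrictions $\alpha \mapsto \alpha_{|_i}$ and the induced group action correctly through the $(\Omega,G)$-formalism; once the diagonal action on $\mathcal{I}\times\mathcal{I}$ is identified, both the decomposition identity and the invariance check are immediate.
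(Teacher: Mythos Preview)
Your argument is correct and is essentially identical to the paper's proof: both introduce the doubled index set $\mathcal{I}\times\mathcal{I}$ and define the local vectors by $\langle j\,|\,w^{[i]}_{(\beta,\beta')}\rangle = (E_j^{[i]})_{\beta,\beta'}$, which is exactly the paper's $\langle j\,|\,v_\beta^{[i]}\rangle = (A_j^{[i]})_{p_1\circ\beta,\,p_2\circ\beta}$ in different notation. Your write-up is in fact slightly more careful than the paper's, since you spell out the diagonal $G$-action on $\mathcal{I}\times\mathcal{I}$ and verify the invariance explicitly, whereas the paper simply asserts that the new vectors form an $(\Omega,G)$-decomposition.
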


\begin{proof}
Let
$$T_{j_1, \ldots, j_n} = \sum_{\alpha, \alpha' \in \mathcal{I}^{\sfacetF}} \left(A_{j_1}^{[1]}\right)_{\alpha_{|_1}, \alpha'_{|_1}} \cdots \left(A_{j_n}^{[n]}\right)_{\alpha_{|_n}, \alpha'_{|_n}} $$
be a psd-decomposition of $T$ with $\psdrank_{(\Omega,G)}(T) = |\mathcal{I}|$.
Consider the new index set $\mathcal{L} \coloneqq \mathcal{I} \times \mathcal{I}$ with the two projections $p_{1,2}: \mathcal{L} \to \mathcal{I}$ and define
$$\Cbraket{j}{v_{\beta}^{[i]}} \coloneqq \left(A_{j}^{[i]}\right)_{p_1 \circ \beta, p_2 \circ \beta}$$
for every $i \in [n], j \in \{1, \ldots, d\}$ and $\beta \in \mathcal{L}^{\facetF_i}$. The new vectors $\{\ket{v_{\beta}}\}_{\beta \in \mathcal{L}^{\sfacetF_i}}$ provide a $(\Omega,G)$-decomposition of $T$ with
$$\rank_{(\Omega,G)}(T) \leq |\mathcal{L}| = |\mathcal{I}|^2 = \psdrank_{(\Omega,G)}(T)^2$$
which proves the statement.
\end{proof}

\subsection{A border rank gap for symmetric psd-decompositions in a tripartite system}
\label{ssec:symmetric_psdrank}

We now prove that the there is a gap between border rank and rank for the symmetric psd-rank already for $n = 3$.

\begin{proposition}
\label{prop:symmetryAnalysisWState}
There is a gap between border rank and rank of $W_3$ for the symmetric psd-rank. Specifically we have
$$\bsymmpsdrank(W_3) = 2 < 3 \leq \symmpsdrank(W_3)$$
\end{proposition}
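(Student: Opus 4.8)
The plan is to prove the two inequalities $\bsymmpsdrank(W_3)\le 2$ and $\symmpsdrank(W_3)\ge 3$ separately. The remaining bound $\bsymmpsdrank(W_3)\ge 2$ is immediate: if $\bsymmpsdrank(W_3)\le 1$ then $W_3$ would be a limit of nonnegative rank-one (product) tensors $x_k^{\otimes 3}$ with $x_k\in\R^2_{\ge 0}$; boundedness of $\Vert x_k\Vert$ then gives a convergent subsequence and $W_3=x^{\otimes 3}$, contradicting $(W_3)_{000}=0$ while $(W_3)_{001}=1$.

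For $\bsymmpsdrank(W_3)\le 2$ I would specialize the family of Equation \eqref{eq:defAmatrices} to $n=3$, i.e.\ take the psd matrices $A_0^{\varepsilon}=C\varepsilon^{-1/2}\left(\begin{smallmatrix}1 & e^{i\pi/3}\\ e^{-i\pi/3} & 1\end{smallmatrix}\right)$ and $A_1^{\varepsilon}=\varepsilon\left(\begin{smallmatrix}1 & 1\\ 1 & 1\end{smallmatrix}\right)$, both of which are psd (the first has eigenvalues $0$ and $2C\varepsilon^{-1/2}$). A short computation shows that $\sum_{\alpha,\beta}(A_{j_1}^{\varepsilon})_{\alpha\beta}(A_{j_2}^{\varepsilon})_{\alpha\beta}(A_{j_3}^{\varepsilon})_{\alpha\beta}$ depends only on the number $k$ of indices among $j_1,j_2,j_3$ equal to $1$, and equals $C^{3-k}\varepsilon^{3-3(3-k)/2}$ times $2+2\cos\!\big(\tfrac{\pi(3-k)}{3}\big)$; choosing $C=1$ this vanishes for $k=0$, equals $1$ for $k=1$, and is $O(\varepsilon^{3/2})$ for $k\in\{2,3\}$. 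Hence these rank-$2$ symmetric psd-decompositions converge to $W_3$, so $\bsymmpsdrank(W_3)\le 2$.

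For the lower bound, suppose for contradiction $\symmpsdrank(W_3)=2$. Then there exist Hermitian psd matrices $A_0=\left(\begin{smallmatrix}a & c\\ \bar c & b\end{smallmatrix}\right)$ and $A_1=\left(\begin{smallmatrix}p & q\\ \bar q & s\end{smallmatrix}\right)$ with $a,b,p,s\ge 0$, $ab\ge|c|^2$, $ps\ge|q|^2$, and $(W_3)_{j_1j_2j_3}=\sum_{\alpha,\beta=1}^2(A_{j_1})_{\alpha\beta}(A_{j_2})_{\alpha\beta}(A_{j_3})_{\alpha\beta}$. Evaluating this on the eight index strings and using that the decomposition is symmetric leaves only four distinct scalar equations: $a^3+b^3+2\operatorname{Re}(c^3)=0$, $\ p^3+s^3+2\operatorname{Re}(q^3)=0$, $\ a^2p+b^2s+2\operatorname{Re}(c^2q)=1$, $\ ap^2+bs^2+2\operatorname{Re}(cq^2)=0$. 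The crucial step chains the psd constraint with AM--GM: from the first equation, $a^3+b^3=-2\operatorname{Re}(c^3)\le 2|c|^3\le 2(ab)^{3/2}\le a^3+b^3$, so every inequality is an equality, which forces $a=b$, $|c|=a$, and $c^3=-a^3$; since $a=0$ would give $A_0=0$ and contradict the third equation, we get $a>0$ and $A_0=a\left(\begin{smallmatrix}1 & \omega\\ \bar\omega & 1\end{smallmatrix}\right)$ with $\omega^3=-1$, $|\omega|=1$. The same argument applied to the second equation gives $p=s>0$ and $A_1=p\left(\begin{smallmatrix}1 & \eta\\ \bar\eta & 1\end{smallmatrix}\right)$ with $\eta^3=-1$, $|\eta|=1$. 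Substituting into the fourth equation yields $2ap^2\big(1+\operatorname{Re}(\omega\eta^2)\big)=0$, hence $\omega\eta^2=-1$; then $\omega^2\eta=\omega^3\eta^3/(\omega\eta^2)=1/(-1)=-1$, so the third equation becomes $2a^2p\big(1+\operatorname{Re}(\omega^2\eta)\big)=0\ne 1$, a contradiction. Thus $\symmpsdrank(W_3)\ge 3$, and combining everything gives $\bsymmpsdrank(W_3)=2<3\le\symmpsdrank(W_3)$.

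The main obstacle is the rigidity step in the third paragraph: one has to carefully read off, from the equality cases of the chain $-2\operatorname{Re}(c^3)\le 2|c|^3\le 2(ab)^{3/2}\le a^3+b^3$, that $A_0$ (and likewise $A_1$) is pinned to the one-parameter family $a\ket{u}\bra{u}$ with a phase that is a cube root of $-1$, and separately dispose of the degenerate cases $a=0$ and $p=0$ using the normalization equation $(W_3)_{001}=1$. Once the two matrices are in this form, the contradiction between the third and fourth equations is pure bookkeeping with roots of unity.
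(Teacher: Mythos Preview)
Your proof is correct and follows essentially the same architecture as the paper's: the same explicit family from Equation~\eqref{eq:defAmatrices} for the border-rank upper bound, and for the lower bound a reduction of any hypothetical rank-$2$ symmetric psd-decomposition to rank-one matrices $A_0,A_1$, followed by a phase contradiction.

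The one genuine difference is in how the rank-one reduction is obtained. The paper argues via Schur's product theorem: if $A_0$ were positive definite then $A_0\star A_0\star A_0$ would be positive definite and $(W_3)_{000}=\bra{M}A_0^{\star 3}\ket{M}>0$. You instead chain the psd constraint $|c|^2\le ab$ with AM--GM to force equality in $a^3+b^3=-2\operatorname{Re}(c^3)\le 2|c|^3\le 2(ab)^{3/2}\le a^3+b^3$. Your route is more elementary (no appeal to Schur) and, more importantly, your subsequent phase analysis is tighter: the paper asserts $\varphi_j=1/2$, whereas in fact the constraint $\cos(3\theta)=-1$ allows any cube root of $-1$, which is exactly what your roots-of-unity bookkeeping ($\omega^3=\eta^3=-1$, $\omega\eta^2=-1\Rightarrow\omega^2\eta=-1$) handles cleanly. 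So your argument is both self-contained and slightly more careful at the endgame.
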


\begin{proof}
That $\bsymmpsdrank(W_3) = 2$ is proven in \Cref{eq:defAmatrices}.

Now assume that $\symmpsdrank(W_3) = 2$.
Then there exists a symmetric psd-decomposition
$$W_{j_1, j_2, j_3} = \sum_{\alpha, \beta = 1}^{2} \left(A_{j_1}\right)_{\alpha, \beta} \cdot \left(A_{j_2}\right)_{\alpha, \beta} \cdot \left(A_{j_3}\right)_{\alpha, \beta}.$$
This decomposition can be expressed equivalently as
$$ W_{j_1, j_2, j_3} = \bra{M} A_{j_1} \star A_{j_2} \star A_{j_3} \ket{M}$$
where $\ket{M} = (1, \ldots, 1)^t$ and $\star$ is the Hadamard product, i.e.\
$$ \left(X \star Y\right)_{\alpha,\beta} = X_{\alpha, \beta} \cdot Y_{\alpha,\beta}.$$ 
We claim that $A_0$ and $A_1$ in the decomposition have rank $1$. Assume for example that $A_0$ has full rank, it is positive definite, therefore $A_0 \star A_0 \star A_0$ is positive definite by Schur's product theorem (see \cite[Theorem 7.5.3.]{Ho85}). But this implies that
$$0 = W_{0,0,0} = \bra{M} A_{0} \star A_{0} \star A_{0} \ket{M} > 0.$$
The same applies to $A_1$.

Since $A_0, A_1$ have rank $1$, we can parametrize them as
$$ A_{j} = \left(\begin{array}{cc} a_{j,0} & \sqrt{a_{j,0} a_{j,1}} \exp(i 2 \pi \varphi_j) \\  \sqrt{a_{j,0} a_{j,1}} \exp(-i 2 \pi \varphi_j) & a_{j,1} \end{array} \right)$$
where $a_{j,0}, a_{j,1} \geq 0$.
Since $W_{0,0,0} = W_{1,1,1} = 0$, we have that $a_{j,0} = a_{j,1}$ for $j = 0,1$ as well as $\varphi_j = 1/2$ which implies that $W_{j_1,j_2,j_3} = 0$ for all $j_1, j_2, j_3 \in \{0,1\}$.
\end{proof}

\section{Positive Operator Valued Measures and Quantum Channels}
\label{app:POVMQChannel}

We call a collection of matrices $r \times r$ matrices $(A_j)_{j = 1}^d$ a \emph{positive operator-valued measure} (POVM), if $A_j \geqslant 0$ and
$$ \sum_{j=1}^{d} A_j = \one_r.$$
Note that if $(A_{j_1})_{j_1 \in [d_1]}$, $(B_{j_2})_{j_2 \in [d_2]}$ are POVMs, then the tensor product
$$\left(A_{j_1} \otimes B_{j_2}\right)_{j_1 \in [d_1], j_2 \in [d_2]}$$
is also a POVM. The definition of a POVM guarantees that for any quantum state $\ket{\psi} \in \mathbb{C}^{r}$ the $d$-ary array
$$\big(p_j \coloneqq \bra{\psi} A_j \ket{\psi} \big)_{j=1, \ldots, d}$$
forms a probability distribution.

A linear map $\mathcal{E}: \mathcal{M}_{d_1}(\mathbb{C}) \to \mathcal{M}_{d_2}(\mathbb{C})$ is called \emph{completely positive} (cp), if $\id_{d'} \otimes \mathcal{E}$ is a positive map for every $d' \in \mathbb{N}$, i.e.\
$$(\id_{d'} \otimes \mathcal{E})(A) \geqslant 0 \quad \text{ for every psd matrix } A \in \mathcal{M}_{d'}(\mathbb{C}) \otimes \mathcal{M}_{d_1}(\mathbb{C}).$$
It is additionally called \emph{completely positive trace preserving} (cptp) or \emph{quantum channel} if it is cp and preserves the trace, i.e.\ $\tr(\mathcal{E}(A)) = \tr(A)$ for every $A \in \mathcal{M}_{d}(\mathbb{C})$.

Note that every POVM $(A_j)_{j=1}^{d}$ gives rise to the following quantum channel:
\begin{equation} 
\label{eq:DiagCPTP}
\mathcal{E}: \rho \mapsto \sum_{j=1}^{d} \ket{j} \bra{j} \tr(A_j \rho).
\end{equation}
In words, it maps every state to the classical state (i.e.\ a probability distribution) that is obtained when performing the measurement.
Conversely, every quantum channel whose image consists only of diagonal matrices gives rise to a POVM via Equation \eqref{eq:DiagCPTP}.

\section{The Bolzano--Weierstraß Theorem and its Applications}
\label{app:BolzanoWeierstrass}

In this part, we review a version of the Bolzano--Weierstraß Theorem. This allows to show that certain $(\Omega,G)$-decompositions do not have a gap between rank and border rank.

Let $(\mathcal{V}, \Vert \cdot \Vert)$ be a finite-dimensional normed vector space. A set $\mathcal{S} \subseteq \mathcal{V}$ is \emph{bounded} if there exists a constant $C \geq 0$ such that $\Vert x \Vert \leq C$ for every $x \in \mathcal{S}$. $\mathcal{S}$ is \emph{closed} if for every sequence $(s_k)_{k \in \mathbb{N}} \in \mathcal{S}^{\mathbb{N}}$ that converges to a point $s \in \mathcal{V}$ with respect to $\Vert \cdot \Vert$, we have that $s \in \mathcal{S}$. $\mathcal{S}$ is \emph{compact} if it is closed and bounded.

\begin{theorem}[Bolzano--Weierstraß for finite-dimensional vector spaces]
\label{thm:BolzanoWeierstrass}
Let $\mathcal{S} \subseteq \mathcal{V}$ be a compact subset of a finite-dimensional vectorspace $\mathcal{V}$. Then \emph{every} sequence $(s_i)_{i \in \mathbb{N}} \in \mathcal{S}^\mathbb{N}$ has a convergent subsequence, i.e.\ there is strictly increasing sequence $(k_\ell)_{\ell \in \mathbb{N}}$ in $\mathbb{N}$ such that
$$ \lim_{\ell \to \infty} s_{k_\ell} = s \in \mathcal{S}.$$
\end{theorem}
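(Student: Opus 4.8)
The plan is to reduce the statement to the classical one-dimensional Bolzano--Weierstraß theorem by choosing coordinates and then extracting convergent subsequences one coordinate at a time. First I would fix a basis $e_1,\ldots,e_N$ of $\mathcal{V}$ (if $\mathcal{V}$ is a complex space, regard it as a real space of twice the dimension) and identify $\mathcal{V} \cong \mathbb{R}^N$ via the coordinate isomorphism. Since all norms on a finite-dimensional vector space are equivalent, boundedness of $\mathcal{S}$ with respect to $\Vert \cdot \Vert$ is equivalent to boundedness with respect to the $\ell_\infty$-norm on coordinates, and convergence of a sequence in $\Vert \cdot \Vert$ is equivalent to coordinatewise convergence; so it suffices to prove the statement for $\mathcal{S}$ a bounded subset of $\mathbb{R}^N$ with the sup-norm, and then use closedness at the very end.

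Next I would run the iterated extraction. Write $s_i = (s_i^{(1)}, \ldots, s_i^{(N)})$. Boundedness of $\mathcal{S}$ gives a constant $C$ with $|s_i^{(j)}| \leq C$ for all $i$ and all $j$. The scalar sequence $(s_i^{(1)})_{i \in \mathbb{N}}$ is then a bounded sequence in $\mathbb{R}$, so by the one-dimensional Bolzano--Weierstraß theorem it has a convergent subsequence indexed by an infinite set $I_1 \subseteq \mathbb{N}$. Passing to $I_1$, the first coordinate now converges; the second-coordinate sequence along $I_1$ is still bounded, so extract a further infinite index set $I_2 \subseteq I_1$ along which the second coordinate converges (the first still converges, being a subsequence of a convergent sequence). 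Iterating $N$ times yields $I_N \subseteq \cdots \subseteq I_1 \subseteq \mathbb{N}$ along which every coordinate converges; enumerating $I_N$ as a strictly increasing sequence $(k_\ell)_{\ell \in \mathbb{N}}$, the subsequence $(s_{k_\ell})_\ell$ converges coordinatewise, hence in $\Vert\cdot\Vert$, to some $s \in \mathcal{V}$. Finally, since $(s_{k_\ell})_\ell$ lies in $\mathcal{S}$ and $\mathcal{S}$ is closed, $s \in \mathcal{S}$, which is the claim.

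The only non-routine ingredient is the scalar base case, the one-dimensional Bolzano--Weierstraß theorem, which I would either cite as standard or prove quickly by bisection: given $|s_i| \leq C$, the interval $[-C,C]$ contains infinitely many terms; bisect, keep a half containing infinitely many terms, repeat, and the nested intervals shrink to a point, along which one reads off a convergent subsequence. I expect the main (very mild) obstacle to be bookkeeping: making sure the index sets are nested and infinite so that passing to $I_{j+1}$ does not destroy convergence of the earlier coordinates, and being explicit that equivalence of norms is what licenses the passage between $\Vert\cdot\Vert$-convergence and coordinatewise convergence. No genuine difficulty is anticipated, as this is a standard compactness argument; the point of including it here is that all the non-closedness-of-ranks proofs in the paper invoke exactly this extraction on bounded families of local tensors.
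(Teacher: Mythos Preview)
Your argument is correct and is the standard coordinatewise-extraction proof of Bolzano--Weierstra\ss\ in finite dimensions; there is nothing to compare, since the paper does not prove this theorem at all but merely states it as a classical background result and immediately applies it.
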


We now apply it to the space of quantum channels. We denote $\CPTP(d_1,d_2) \subseteq \Lin\left(\mathcal{M}_{d_1}(\mathbb{C}), \mathcal{M}_{d_2}(\mathbb{C})\right)$ the set of quantum channels in $\mathcal{M}_{d_1}(\mathbb{C}) \to \mathcal{M}_{d_2}(\mathbb{C})$.

\begin{lemma}
\label{lem:cptpBounded}
$\CPTP(d_1, d_2)$ is compact in $\Lin\left(\mathcal{M}_{d_1}(\mathbb{C}), \mathcal{M}_{d_2}(\mathbb{C})\right)$.
\end{lemma}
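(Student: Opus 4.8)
The plan is to show that $\CPTP(d_1,d_2)$ is both closed and bounded as a subset of the finite-dimensional vector space $\Lin\left(\mathcal{M}_{d_1}(\mathbb{C}), \mathcal{M}_{d_2}(\mathbb{C})\right)$, since compactness in finite dimensions is equivalent to closed plus bounded (Heine--Borel). The most convenient handle on both conditions is the Choi--Jamio\l kowski isomorphism: a linear map $\mathcal{E}$ corresponds to its Choi matrix $J(\mathcal{E}) \coloneqq (\id_{d_1} \otimes \mathcal{E})(\ket{\Omega}\bra{\Omega}) \in \mathcal{M}_{d_1}(\mathbb{C}) \otimes \mathcal{M}_{d_2}(\mathbb{C})$, where $\ket{\Omega} = \sum_{i=1}^{d_1} \ket{i,i}$ is the unnormalized maximally entangled vector. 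This map $J$ is a linear isomorphism, hence a homeomorphism, so it suffices to prove that $J(\CPTP(d_1,d_2))$ is closed and bounded in $\mathcal{M}_{d_1}(\mathbb{C}) \otimes \mathcal{M}_{d_2}(\mathbb{C})$.

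First I would recall the standard characterization: $\mathcal{E}$ is completely positive if and only if $J(\mathcal{E}) \geqslant 0$, and $\mathcal{E}$ is trace-preserving if and only if $\tr_{2}(J(\mathcal{E})) = \one_{d_1}$, where $\tr_2$ denotes the partial trace over the second factor. Thus $J(\CPTP(d_1,d_2)) = \{ X \in \mathcal{M}_{d_1}(\mathbb{C}) \otimes \mathcal{M}_{d_2}(\mathbb{C}) : X \geqslant 0, \ \tr_2(X) = \one_{d_1} \}$. Closedness is then immediate: the set of psd matrices is closed (it is the intersection of the closed half-spaces $\{X : \bra{v} X \ket{v} \geq 0\}$ over all $v$, and $X \mapsto X^\dagger$ is continuous so the Hermiticity constraint is closed too), and the affine condition $\tr_2(X) = \one_{d_1}$ is the preimage of a point under the continuous linear map $\tr_2$, hence closed; the intersection of two closed sets is closed.

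For boundedness, note that if $X \geqslant 0$ then $\tr(X) = \tr(\tr_2(X)) = \tr(\one_{d_1}) = d_1$, so every element of $J(\CPTP(d_1,d_2))$ is a psd matrix with trace exactly $d_1$. Since for psd matrices $\Vert X \Vert \leq \tr(X)$ for any unitarily invariant norm (in particular the operator norm or trace norm), the set is contained in a ball of radius $d_1$, hence bounded. Pulling back through the homeomorphism $J^{-1}$, $\CPTP(d_1,d_2)$ is closed and bounded in $\Lin\left(\mathcal{M}_{d_1}(\mathbb{C}), \mathcal{M}_{d_2}(\mathbb{C})\right)$, therefore compact by Heine--Borel (equivalently, by \Cref{thm:BolzanoWeierstrass} one checks every sequence has a convergent subsequence with limit in the set).

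The argument is essentially routine; the only mild obstacle is making sure the Choi characterization of trace preservation is stated with the correct partial trace and normalization convention (using the \emph{unnormalized} maximally entangled vector so that $\tr_2(J(\mathcal{E})) = \one_{d_1}$ rather than $\frac{1}{d_1}\one_{d_1}$), and being slightly careful that boundedness is argued for the Choi matrices and then transported back — a linear isomorphism between finite-dimensional normed spaces is automatically bi-Lipschitz, so boundedness and closedness are preserved. Alternatively, one can avoid Choi matrices entirely: boundedness of $\CPTP$ follows since a trace-preserving positive map satisfies $\Vert \mathcal{E} \Vert_{1 \to 1} = 1$ (it is trace-nonincreasing on positive operators and every operator is a combination of four positive ones with controlled norm), and closedness follows since the defining conditions — positivity of $\id_{d'} \otimes \mathcal{E}$ for $d' = d_1$ suffices, plus $\tr \circ \mathcal{E} = \tr$ — are closed conditions; but the Choi route packages both cleanly.
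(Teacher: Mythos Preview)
Your proof is correct. It takes a genuinely different route from the paper's own argument: you pass through the Choi--Jamio\l kowski isomorphism and show that the image $J(\CPTP(d_1,d_2)) = \{X \geqslant 0 : \tr_2(X) = \one_{d_1}\}$ is closed (intersection of the psd cone with an affine slice) and bounded (fixed trace $d_1$), then pull back. The paper instead argues directly on the map level: boundedness comes from $\Vert \mathcal{E} \Vert_{1\to 1} \leq 1$ for trace-preserving positive maps, and closedness from the observation that both the complete-positivity conditions $(\id_n \otimes \mathcal{E})(A) \geqslant 0$ and the trace-preservation condition $\tr \circ \mathcal{E} = \tr$ cut out closed subsets. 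Your alternative paragraph at the end is essentially the paper's proof. The Choi route has the virtue of reducing everything to a single concrete compact set of matrices; the paper's route avoids invoking the Choi characterization and stays closer to the operational definitions. One minor overstatement: ``for any unitarily invariant norm $\Vert X \Vert \leq \tr(X)$'' needs a normalization convention on the norm to be literally true, but since you immediately specialize to the operator or trace norm (and all norms are equivalent in finite dimensions anyway) this does not affect the argument.
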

\begin{proof}
Equipping the space $\Lin(\mathcal{M}_{d_1}(\mathbb{C}), \mathcal{M}_{d_2}(\mathbb{C}))$ with the norm
$$\Vert \mathcal{E} \Vert \coloneqq \max_{\Vert \rho \Vert_{1} \leq 1} \Vert \mathcal{E}(\rho) \Vert_{1}$$
where $\Vert \, \cdot \, \Vert_{1}$ is the trace-norm on $\mathcal{M}_{d_i}(\mathbb{C})$, we obtain that
$\Vert \mathcal{E} \Vert \leq 1$ for every $\mathcal{E} \in \CPTP(d_1, d_2)$ which shows the boundedness.

Moreover, $\CPTP(d_1, d_2)$ is closed since it can be characterized by the closed conditions $\id_{n} \otimes \mathcal{E}(A) \geqslant 0$ for every psd $A \in \mathcal{M}_{d_1 \cdot n}(\mathbb{C})$ and $\tr(\mathcal{E}(\rho)) = \tr(\rho)$ for every $\rho \in \mathcal{M}_{d_1}(\mathbb{C})$. Since intersections of closed sets are closed, the statement follows.
\end{proof}

\begin{corollary}
\label{cor:cptpConvSeq}
Every sequence of quantum channels has a convergent subsequence.
\end{corollary}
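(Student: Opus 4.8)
The statement to prove is \Cref{cor:cptpConvSeq}: every sequence of quantum channels has a convergent subsequence. Given that \Cref{lem:cptpBounded} establishes that $\CPTP(d_1,d_2)$ is compact, and \Cref{thm:BolzanoWeierstrass} states that every sequence in a compact subset of a finite-dimensional normed vector space has a convergent subsequence, this is essentially an immediate corollary.

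\begin{proof}
Fix dimensions $d_1, d_2 \in \mathbb{N}$ and let $(\mathcal{E}_k)_{k \in \mathbb{N}}$ be a sequence of quantum channels in $\CPTP(d_1, d_2)$. The ambient space $\Lin\left(\mathcal{M}_{d_1}(\mathbb{C}), \mathcal{M}_{d_2}(\mathbb{C})\right)$ is a finite-dimensional complex (hence real) vector space, equipped with the norm $\Vert \mathcal{E} \Vert \coloneqq \max_{\Vert \rho \Vert_1 \leq 1} \Vert \mathcal{E}(\rho) \Vert_1$ introduced in the proof of \Cref{lem:cptpBounded}. By \Cref{lem:cptpBounded}, the subset $\CPTP(d_1, d_2)$ is compact in this space. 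Applying \Cref{thm:BolzanoWeierstrass} with $\mathcal{S} = \CPTP(d_1, d_2)$ and $\mathcal{V} = \Lin\left(\mathcal{M}_{d_1}(\mathbb{C}), \mathcal{M}_{d_2}(\mathbb{C})\right)$, there exists a strictly increasing sequence $(k_\ell)_{\ell \in \mathbb{N}}$ in $\mathbb{N}$ and a limit $\mathcal{E} \in \CPTP(d_1, d_2)$ such that $\lim_{\ell \to \infty} \mathcal{E}_{k_\ell} = \mathcal{E}$. In particular, the limit $\mathcal{E}$ is again a quantum channel, which completes the proof.
\end{proof}
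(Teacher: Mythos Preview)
Your proof is correct and follows exactly the same approach as the paper: the paper's proof simply states that the result is an immediate consequence of \Cref{thm:BolzanoWeierstrass} and \Cref{lem:cptpBounded}, and you have merely spelled out this application in detail.
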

\begin{proof}
This is an immediate consequence of \Cref{thm:BolzanoWeierstrass} and \Cref{lem:cptpBounded}.
\end{proof}

\section{Nonnegative $\Sigma_n$-decompositions and causal structures}
\label{app:NonnegativeDec_CausalStructures}

We now show that characterizing the nonnegative $\Sigma_n$-rank of a probability tensor $P$ is equivalent to $P$ arising from a specific Bayesian network with hidden complexity constraints. This statement has been proven in the bipartite case  \cite{Co93b} as well as in the multipartite case in \cite{Li09}.

\begin{theorem}[The nonnegative rank and classical correlations]
\label{thm:nndec_CausalStruct}
 Let $P \in \left(\mathbb{R}^{d}\right)^{\otimes n}$ be a nonnegative vector representing a probability distribution, i.e. 
 $$P_{j_1, \ldots, j_n} = P(X_1 = j_1, \ldots, X_n = j_n).$$
The following are equivalent:
\begin{enumerate}[label=(\roman*)]
	\item\label{nnrankI} $\nnrank(P) \leq r$
	\item\label{nnrankII} There exists a random variable $Z$ taking values in $\{1, \ldots, r\}$ such that
	$$P(X_1 = j_1, \ldots, X_n = j_n) = \sum_{\alpha=1}^r P(X_1 = j_1 \ | \ Z = \alpha) \cdots P(X_n = j_n \ | \ Z = \alpha) \cdot P(Z = \alpha)$$
\end{enumerate}

The same equivalence holds for $\symmnnrank$ with the additional constraint that the conditional probability distributions $P(X_i = - | Z = z)$ are identical for every $i \in \{1, \ldots, n\}$.
\end{theorem}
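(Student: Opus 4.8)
The plan is to prove the equivalence \ref{nnrankI} $\Leftrightarrow$ \ref{nnrankII} by directly translating a nonnegative $\Sigma_n$-decomposition of $P$ into a mixture over a hidden variable $Z$, and back. For the direction \ref{nnrankI} $\Rightarrow$ \ref{nnrankII}, I would start from a nonnegative decomposition $P_{j_1,\ldots,j_n} = \sum_{\alpha=1}^r \Cbraket{j_1}{v^{[1]}_\alpha} \cdots \Cbraket{j_n}{v^{[n]}_\alpha}$ with all entries nonnegative. The key normalization step is to rescale: for each $\alpha$ and each $i$, set $c^{[i]}_\alpha \coloneqq \sum_{j} \Cbraket{j}{v^{[i]}_\alpha}$. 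If some $c^{[i]}_\alpha = 0$ then the whole $\alpha$-term vanishes and can be dropped. Otherwise define $P(X_i = j \mid Z = \alpha) \coloneqq \Cbraket{j}{v^{[i]}_\alpha}/c^{[i]}_\alpha$, which is a genuine conditional probability distribution, and $P(Z = \alpha) \coloneqq \prod_{i=1}^n c^{[i]}_\alpha$. One then checks that $\sum_\alpha P(Z=\alpha) = \sum_{j_1,\ldots,j_n} P_{j_1,\ldots,j_n} = 1$ since $P$ is a probability distribution, so $Z$ is a valid random variable on $\{1,\ldots,r\}$, and the factorization in \ref{nnrankII} holds by construction.

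For the converse \ref{nnrankII} $\Rightarrow$ \ref{nnrankI}, I would simply read the mixture backwards: given the decomposition in \ref{nnrankII}, define the nonnegative vectors $\Cbraket{j}{v^{[i]}_\alpha} \coloneqq P(X_i = j \mid Z = \alpha)$ for $i \geq 2$ and $\Cbraket{j}{v^{[1]}_\alpha} \coloneqq P(Z = \alpha) \cdot P(X_1 = j \mid Z = \alpha)$, absorbing the prior weight into the first local vector. These have nonnegative entries, there are $r$ of them, and their $\Sigma_n$-combination reproduces $P$, so $\nnrank(P) \leq r$.

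For the symmetric statement, the argument is essentially the same but one must be careful about where the prior weight $P(Z=\alpha)$ is absorbed, since a symmetric decomposition requires $\ket{v^{[i]}_\alpha}$ to be independent of $i$. In the direction starting from $\symmnnrank(P) \leq r$, the local vectors already satisfy $\ket{v^{[i]}_\alpha} = \ket{v_\alpha}$, so $c^{[i]}_\alpha = c_\alpha$ is independent of $i$, the conditional distributions $P(X_i = j \mid Z = \alpha) = \Cbraket{j}{v_\alpha}/c_\alpha$ are identical across $i$, and $P(Z=\alpha) = c_\alpha^n$; the verification goes through verbatim. Conversely, given a mixture with identical conditionals, write $P(X_i = j \mid Z = \alpha) = q_\alpha(j)$ and set $\ket{v_\alpha} \coloneqq P(Z=\alpha)^{1/n} \ket{q_\alpha}$, where $\ket{q_\alpha}$ has entries $q_\alpha(j)$; then $\ket{v_\alpha}^{\otimes n}$ summed over $\alpha$ yields $P$, and since $P(Z=\alpha)^{1/n} \geq 0$ the entries stay nonnegative, so $\symmnnrank(P) \leq r$.

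The main obstacle — and it is a mild one — is the handling of degenerate terms in the decomposition: a local vector $\ket{v^{[i]}_\alpha}$ with zero coordinate sum, or a hidden-variable outcome with zero probability. These must be pruned (they contribute nothing) before the rescaling is well-defined, and one should note that pruning only decreases the number of terms, so the rank bound is preserved. In the symmetric $n$-th root construction the only subtlety is that $n$-th roots of nonnegative reals are nonnegative, which keeps the symmetric decomposition entrywise nonnegative; there is no sign ambiguity precisely because we work over nonnegative reals rather than $\mathbb{C}$. Everything else is bookkeeping, and I would relegate the routine verification that $\sum_\alpha P(Z=\alpha)=1$ and that the factorizations match to a short displayed computation.
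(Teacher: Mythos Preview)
Your proposal is correct and follows essentially the same approach as the paper: normalize each nonnegative local vector to a conditional distribution and collect the normalization constants into the prior $P(Z=\alpha)$, with the $n$-th root trick for the symmetric direction. The only cosmetic difference is that in the non-symmetric \ref{nnrankII} $\Rightarrow$ \ref{nnrankI} you absorb the prior into the first factor rather than distributing it, and you make the pruning of zero terms explicit where the paper leaves it implicit.
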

As in the main text, we denote the set of all probability distributions satisfying \ref{nnrankII} by $\CCorr(n,d,r)$.
\begin{proof}
We show the equivalence only for $\symmnnrank$ as the other follows analogously.

\ref{nnrankI} $\Longrightarrow$ \ref{nnrankII}: Since $\symmrank(P) \leq r$ there is a nonnegative decomposition
\begin{equation}
\label{eq:symmProb}
P = \sum_{\alpha = 1}^{r} \ket{v_{\alpha}} \otimes \cdots \otimes \ket{v_{\alpha}}.
\end{equation}
Define
$$ P(X_i = j | Z = \alpha) \coloneqq \frac{\Cbraket{j}{v_{\alpha}}}{\sum_{j = 1}^{d} \Cbraket{j}{v_{\alpha}}}$$
and
$$ P(Z = \alpha) = \left(\sum_{j = 1}^{d} \Cbraket{j}{v_{\alpha}}\right)^n.$$
By definition $P(X_i = - | Z = \alpha)$ is a probability distribution. Moreover, $P(Z = -)$ is a probability distribution since
\begin{align*} \sum_{\alpha = 1}^{r} P(Z = \alpha) &= \sum_{\alpha = 1}^{r} \left(\sum_{j = 1}^{d} \Cbraket{j}{v_{\alpha}}\right)^n = \sum_{\alpha = 1}^{r} \sum_{j_1, \ldots, j_n = 1}^{d} \bra{j_1, \ldots, j_n} \left(\ket{v_{\alpha}}\right)^{\otimes n} \\ &= \sum_{j_1, \ldots, j_n = 1}^{d} \bra{j_1, \ldots, j_n} \left(\sum_{\alpha = 1}^{r} \ket{v_{\alpha}}^{\otimes n} \right) = \sum_{j_1, \ldots, j_n = 1}^{d} P(X_1 = j_1, \ldots, X_n = j_n) = 1
\end{align*}

\ref{nnrankII} $\Longrightarrow$ \ref{nnrankI}: Let
	$$P(X_1 = j_1, \ldots, X_n = j_n) = \sum_{\alpha=1}^r P(X_1 = j_1 | Z = \alpha) \cdots P(X_n = j_n | Z = \alpha) \cdot P(Z = \alpha).$$
Defining
$$ \ket{v_{\alpha}^{[i]}} \coloneqq \sum_{j=1}^{r} P(X_i = j| Z=\alpha) \cdot P(Z= \alpha)^{\frac{1}{n}} \ket{j}$$
gives rise to nonnegative vectors in the computational basis.
Since all conditional distributions $P(X_i = - | Z= \alpha)$ are identical, we have that $\ket{v_{\alpha}^{[i]}} =  \ket{v_{\alpha}^{[j]}} \eqqcolon \ket{v_{\alpha}}$ for every $i,j \in \{1,\ldots, n\}$. It is immediate that Equation \eqref{eq:symmProb} holds.
\end{proof}

\section{The purification-rank corresponds to minimal entanglement in a correlation}
\label{app:CorrelationCorrespondenceAppendix}

We now prove the characterization theorems for the psd-rank and for the purification-rank on the level of arbitrary $(\Omega,G)$-decompositions (see \Cref{thm:quantumCorrMainText} in the main text for a simplified version). We first define the correlation sets $\CQCorr_{(\Omega,G)}$ and $\QQCorr_{(\Omega,G)}$ (\Cref{ssec:corrSets}) and subsequently prove the characterization of $\psdrank$ (\Cref{thm:CQCorr}) in \Cref{ssec:psdrankCorrespondence} and the characterization of $\purirank$ (\Cref{thm:QQCorr}) in \Cref{ssec:purirankCorrespondence}.

\subsection{The correlation sets}
\label{ssec:corrSets}

In the following, we define the set $\CQCorr_{(\Omega,G)}(n,d,r)$ and the set $\QQCorr_{(\Omega,G)}(n,d,r)$. Intuitively, every element in either set is generated by an $n$-partite quantum state as a resource that satisfies $\rank_{(\Omega,G)}(\ket{\psi}) \leq r$. In other words, $\ket{\psi}$ is limited in its entanglement structure. An $n$-partite probability distribution with local dimension $d$ in $\CQCorr_{(\Omega,G)}(n,d,r)$ is then generated via a POVM on each tensor product factor, and an $n$-partite mixed state in $\CQCorr_{(\Omega,G)}(n,d,r)$ is generated via quantum channel on each tensor product factor. Moreover, if $G$ is non-trivial, then the POVMs and the quantum channels satisfy additionally a symmetry constraint.

\begin{definition}[Quantum correlation scenarios for $(\Omega,G)$-structures]\phantom{*}\\ Let $\Omega$ be a wsc and $G$ be a group acting on $\Omega$.

\begin{enumerate}[label=(\roman*)]
	\item The set of \emph{multipartite quantum-correlation probability distributions} $\CQCorr_{(\Omega,G)}(n,d,r)$ is the set of all $n$-fold $d$-dimensional probability distributions $\left(P_{j_1, \ldots, j_n}\right)_{j_1, \ldots, j_n = 1}^{d}$ such that
	$$P_{j_1, \ldots, j_n} = \tr\left(A_{j_1}^{[1]} \otimes \cdots \otimes A_{j_n}^{[n]} \ket{\psi} \bra{\psi} \right)$$
	where $\ket{\psi}$ is a normalized state with $\rank_{(\Omega,G)}(\ket{\psi}) \leq r$ and $\left(A_{j}^{[i]}\right)_{j=1,\ldots,d}$ with $i=1, \ldots, n$ are $G$-invariant collection of POVMs, i.e.\ for $g \in G$ we have that $$A^{[gi]}_j = A^{[i]}_j.$$
	\item The set of \emph{multipartite quantum-correlation quantum states} $\QQCorr_{(\Omega,G)}(n,d,r)$ is the set of all $n$-fold density matrices $\rho \in \mathcal{M}_d(\mathbb{C})^{\otimes n}$ such that
	$$\rho = \left(\mathcal{E}_1 \otimes \cdots \otimes \mathcal{E}_n\right)\big(\ket{\psi} \bra{\psi}\big)$$
	where $\ket{\psi}$ is a normalized state with $\rank_{(\Omega,G)}(\ket{\psi}) \leq r$ and $\mathcal{E}_1, \ldots, \mathcal{E}_n$ are $G$-invariant family of quantum channels, i.e.\ for all $g \in G$ we have $$\mathcal{E}_{gi} = \mathcal{E}_i.$$
\end{enumerate}
\end{definition}

We will prove the correspondence between these sets of correlations and sets of low-rank tensors for arbitrary wsc $\Omega$ and a certain subclass of group actions on $\Omega$, called \emph{external} group actions.

\begin{definition}[External group action]
Let $\Omega$ be a wsc. We call a group action of $G$ on $\Omega$ \emph{external}, if 
$$\forall g \in G \text{ s.t. } gi = i \text{ we have that } gF = F \text{ for every } F \in \facetF_i.$$
\end{definition}

Intuitively, an external group action of $G$ does not give rise to any local constraints on the tensors. It only gives rise to constraints between tensors of different local systems.

\begin{example}[Examples of (non-)external group actions] All group actions used in the main text (see \Cref{fig:borderRankResults}) are external:
\begin{enumerate}[label=(\roman*)]
	\item The group action $C_n$ on the $n$-cycle $\Theta_n$ is external. This follows since the only $g \in C_n$ such that $gi = i$ for some $i \in [n]$ is the neutral element which also keeps the facets fixed.
	\item The group action $S_n$ on the $n$-simplex $\Sigma_n$ is external. In contrast to (i), there are $g \in S_n$ such that $gi = i$ for some $i \in [n]$. However, since $\facetF = \facetF_i$ are singletons the facet keeps trivially fixed.
\end{enumerate}

For a group action that is not external, consider the line $\Lambda_3$ with $3$ vertices (see \Cref{ex:wsc}) 
together with the group action $C_3$ generated from the reflection $\tau: 1 \mapsto 3, 2 \mapsto 2, 3 \mapsto 1$ and its induced action on the facets. $C_3$ is not external on $\Lambda_3$ since $\tau(2) = 2$ while $\tau F_1 = F_2$.
The corresponding $(\Lambda_3, C_3)$-decomposition is given by
$$ T = \sum_{\alpha, \beta = 1}^{r} \ket{v_{\alpha}} \otimes \ket{w_{\alpha, \beta}} \otimes \ket{v_{\alpha}}$$
with the additional constraint $\ket{w_{\alpha,\beta}} = \ket{w_{\beta, \alpha}}$ for every $\alpha,\beta \in \{1, \ldots, r\}$. The appearing ``internal'' symmetry constraint on the local family of tensors $\{\ket{w_{\alpha,\beta}}\}_{\alpha,\beta}$ motivates the name of the definition. \demo
\end{example}

\subsection{Correspondence between $\psdrank$ and $\CQCorr$}
\label{ssec:psdrankCorrespondence}

We now prove the first part of \Cref{thm:quantumCorrMainText}, namely that elements of $\CQCorr_{(\Omega,G)}(n,d,r)$ are precisely these tensors with $\psdrank_{(\Omega,G)}(P) \leq r$. A similar statement in a special case (namely $\Omega = \Sigma_n$ and $G = \{e\}$) is proven by Jain et al \cite[Theorem 13]{Ja17}.

\begin{theorem}[The psd-rank and quantum correlation scenarios]
\label{thm:CQCorr}
Let $\Omega$ be a wsc and $G$ an external group action on $\Omega$. Further, let $P \in \left(\mathbb{R}^d\right)^{\otimes n}$ be an $n$-fold probability distribution. The following are equivalent:
\begin{enumerate}[label=(\roman*)]
	\item \label{thm:QCorrI} $P \in \CQCorr_{(\Omega,G)}(n,d,r)$.
	\item \label{thm:QCorrII} $\psdrank_{(\Omega,G)}(P) \leq r$.
\end{enumerate}
\end{theorem}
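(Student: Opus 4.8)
The plan is to prove the two implications of Theorem~\ref{thm:CQCorr} separately, in both directions routing through the structure-tensor description of $(\Omega,G)$-decompositions in Equation~\eqref{eq:OmegaRDecomposition}; this is what identifies the resource state of a $\CQCorr$-scenario with $\ket{\Omega_r}$ up to local operations on the legs.

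For \ref{thm:QCorrI} $\Rightarrow$ \ref{thm:QCorrII}, suppose $P_{j_1,\dots,j_n}=\bra{\psi}A^{[1]}_{j_1}\otimes\cdots\otimes A^{[n]}_{j_n}\ket{\psi}$ for a normalized $\ket\psi$ with $\rank_{(\Omega,G)}(\ket\psi)\le r$ and a $G$-invariant family of POVMs. Since $\rank_{(\Omega,G)}(\ket\psi)\le r$, Equation~\eqref{eq:OmegaRDecomposition} lets us write $\ket\psi=W^{[1]}\otimes\cdots\otimes W^{[n]}\ket{\Omega_r}$ with $W^{[i]}=\sum_{\beta\in\mathcal I^{\sfacetF_i}}\ket{v^{[i]}_\beta}\bra{\beta}$ and $W^{[gi]}\ket{{}^g\beta}=W^{[i]}\ket{\beta}$. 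Substituting gives
$$ P_{j_1,\dots,j_n}=\bra{\Omega_r}\,\big(W^{[1]}\big)^{\dagger}A^{[1]}_{j_1}W^{[1]}\otimes\cdots\otimes\big(W^{[n]}\big)^{\dagger}A^{[n]}_{j_n}W^{[n]}\,\ket{\Omega_r}. $$
Setting $E^{[i]}_j\coloneqq\big(W^{[i]}\big)^{\dagger}A^{[i]}_j W^{[i]}$ produces psd matrices on $\mathbb{C}^{\mathcal I^{\sfacetF_i}}$ over an index set of size at most $r$, and the identity $\bra{{}^g\beta}E^{[gi]}_j\ket{{}^g\beta'}=\bra{\beta}E^{[i]}_j\ket{\beta'}$ follows by combining $W^{[gi]}\ket{{}^g\beta}=W^{[i]}\ket{\beta}$ with $A^{[gi]}_j=A^{[i]}_j$. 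Thus $(E^{[i]}_j)$ is a psd $(\Omega,G)$-decomposition of $P$, so $\psdrank_{(\Omega,G)}(P)\le r$. This direction uses no hypothesis on $G$.

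For \ref{thm:QCorrII} $\Rightarrow$ \ref{thm:QCorrI}, start from $P_{j_1,\dots,j_n}=\bra{\Omega_r}E^{[1]}_{j_1}\otimes\cdots\otimes E^{[n]}_{j_n}\ket{\Omega_r}$ with psd $E^{[i]}_j$ over an index set of size $r$, and put $S^{[i]}\coloneqq\sum_{j=1}^{d}E^{[i]}_j\geqslant 0$; summing the covariance relation over $j$ shows each $S^{[i]}$ obeys the same $G$-covariance. Define $A^{[i]}_j\coloneqq\big(S^{[i]}\big)^{-1/2}E^{[i]}_j\big(S^{[i]}\big)^{-1/2}$ with the Moore--Penrose pseudoinverse, so that $\sum_j A^{[i]}_j$ is the orthogonal projection onto $\operatorname{supp}(S^{[i]})$; folding the complementary projection into $A^{[i]}_1$ is harmless, since $\big(S^{[i]}\big)^{1/2}\ket{\Omega_r}$ already lies in that support on the $i$-th leg, and turns $(A^{[i]}_j)_{j=1}^{d}$ into a genuine $G$-invariant POVM. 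Put $\ket\psi\coloneqq\big(S^{[1]}\big)^{1/2}\otimes\cdots\otimes\big(S^{[n]}\big)^{1/2}\ket{\Omega_r}$; the $G$-covariance of the $S^{[i]}$, together with the hypothesis that $G$ is external, ensures via Equation~\eqref{eq:OmegaRDecomposition} (with $W^{[i]}=\big(S^{[i]}\big)^{1/2}$) that $\rank_{(\Omega,G)}(\ket\psi)\le r$, while $\braket{\psi|\psi}=\sum_{j_1,\dots,j_n}P_{j_1,\dots,j_n}=1$ because the $A^{[i]}_j$ resolve the identity on the relevant support. A direct computation then yields $P_{j_1,\dots,j_n}=\bra{\psi}A^{[1]}_{j_1}\otimes\cdots\otimes A^{[n]}_{j_n}\ket{\psi}$, i.e.\ $P\in\CQCorr_{(\Omega,G)}(n,d,r)$.

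The main obstacle sits in the second implication: extracting honest $d$-outcome POVMs from possibly rank-deficient marginals $S^{[i]}$ without altering $P$, and — more substantively — verifying that $\ket\psi=\bigotimes_i\big(S^{[i]}\big)^{1/2}\ket{\Omega_r}$ neither raises $\rank_{(\Omega,G)}$ above $r$ nor breaks the symmetry constraint defining it. This is exactly where \emph{externality} is indispensable: for an external action the only constraints a $G$-covariant local operator must respect relate \emph{different} local systems, so the operators $\big(S^{[i]}\big)^{1/2}$ can be applied leg by leg while keeping $\ket\psi$ inside the rank-$r$ $(\Omega,G)$-invariant class; for a non-external action, such as the $(\Lambda_3,C_3)$ example with its intra-leg constraint $\ket{w_{\alpha,\beta}}=\ket{w_{\beta,\alpha}}$, a single-leg symmetry would have to be reproduced by local POVMs, which cannot be arranged in general and the correspondence fails. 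The remaining checks — positivity, the algebra of the covariance identities, and normalization — are routine.
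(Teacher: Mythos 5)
Your proposal is correct and follows essentially the same route as the paper's proof: conjugating the POVMs by the local maps $W^{[i]}$ of the state's $(\Omega,G)$-decomposition for one direction, and for the converse normalizing by the marginals $S^{[i]}=\sum_j E^{[i]}_j$ to build the POVMs and absorbing $(S^{[i]})^{1/2}$ into $\ket{\Omega_r}$ to build the state, with externality entering exactly where you say it does. The only difference is bookkeeping: the paper handles rank-deficient $S^{[i]}$ by choosing a $G$-compatible eigendecomposition (its Lemma on $G$-symmetric diagonalization) and compressing to the support so that $\sum_j A^{[i]}_j=\one$ exactly, whereas you use the Moore--Penrose pseudo-square-root and pad $A^{[i]}_1$ with the complementary projector -- both are sound.
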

We first need a preparatory lemma.
\begin{lemma}[$G$-symmetric matrix diagonalization]
\label{lem:GsymmetricDiagonalization}
Let $\Omega$ be a wsc and $G$ an external group action on $\Omega$. Let $K^{[i]} \in \mathcal{M}_{\mathcal{I}^{\sfacetF_i}}(\mathbb{C})$ for $i = 1, \ldots, n$ be a family of Hermitian matrices such that
$$\bra{{}^g \beta} K^{[gi]} \ket{{}^g \beta'} = \bra{\beta}  K^{[i]} \ket{\beta'} \quad \text{ for all } \beta,\beta' \in \mathcal{I}^{\facetF_i}$$
Then, there exists a compatible eigendecomposition of all matrices $K^{[i]}$ given by
$$K^{[i]} = \sum_{\ell = 1}^{m} \lambda_{\ell}^{[i]} \ket{w_{\ell}^{[i]}} \bra{w_{\ell}^{[i]}} \quad \text{ such that } \ \Cbraket{{}^g \beta}{w_{\ell}^{[gi]}} =  \Cbraket{\beta}{w_{\ell}^{[i]}} \textrm{ and } \lambda_{\ell}^{[gi]} = \lambda_{\ell}^{[i]}$$
\end{lemma}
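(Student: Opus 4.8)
The plan is to exploit the orbit structure of the $G$-action on the vertex set $[n]$: diagonalise each $K^{[i]}$ only once per orbit, at a chosen representative, and then transport the eigendata along the group to the remaining vertices of that orbit. The transport is implemented by the permutation unitaries induced by the reindexing $\beta \mapsto {}^{g}\beta$, and the external-ness hypothesis is exactly what makes this transport well defined.

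Concretely, I would first record that for each $g \in G$ and each vertex $i$ the bijection $\mathcal{I}^{\facetF_i} \to \mathcal{I}^{\facetF_{gi}}$, $\beta \mapsto {}^{g}\beta$, induces a unitary $U_g^{i} \colon \mathbb{C}^{\mathcal{I}^{\facetF_i}} \to \mathbb{C}^{\mathcal{I}^{\facetF_{gi}}}$ characterised by $U_g^{i}\ket{\beta} = \ket{{}^{g}\beta}$; this is legitimate because vertices in the same orbit have equinumerous facet-sets (via $F \mapsto gF$). From the identity ${}^{g}({}^{h}\beta) = {}^{gh}\beta$ one obtains the composition law $U_{gh}^{i} = U_g^{hi}\,U_h^{i}$ together with $U_e^{i} = \mathrm{id}$. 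Rewriting the hypothesis $\bra{{}^{g}\beta} K^{[gi]} \ket{{}^{g}\beta'} = \bra{\beta} K^{[i]} \ket{\beta'}$ in terms of these unitaries then yields the clean covariance relation $K^{[gi]} = U_g^{i}\, K^{[i]}\, (U_g^{i})^{\dagger}$ for all $g$ and $i$.

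Next, choose orbit representatives $i_1, \ldots, i_k$ for the action of $G$ on $[n]$, and fix for each representative a spectral decomposition $K^{[i_j]} = \sum_{\ell} \lambda_{\ell}^{[i_j]} \ket{w_{\ell}^{[i_j]}}\bra{w_{\ell}^{[i_j]}}$ with $\{\ket{w_{\ell}^{[i_j]}}\}$ orthonormal. For a vertex $i = g\, i_j$ I would set $\ket{w_{\ell}^{[i]}} := U_g^{i_j}\ket{w_{\ell}^{[i_j]}}$ and $\lambda_{\ell}^{[i]} := \lambda_{\ell}^{[i_j]}$. The covariance relation above immediately shows that $\{\ket{w_{\ell}^{[i]}}\}_{\ell}$ is an orthonormal eigenbasis of $K^{[i]}$ with eigenvalues $\lambda_{\ell}^{[i]}$, and the composition law gives, for $i = g_0\, i_j$, that $\ket{w_{\ell}^{[gi]}} = U_{gg_0}^{i_j}\ket{w_{\ell}^{[i_j]}} = U_g^{g_0 i_j} U_{g_0}^{i_j}\ket{w_{\ell}^{[i_j]}} = U_g^{i}\ket{w_{\ell}^{[i]}}$, i.e.\ $\Cbraket{{}^{g}\beta}{w_{\ell}^{[gi]}} = \Cbraket{\beta}{w_{\ell}^{[i]}}$ and $\lambda_{\ell}^{[gi]} = \lambda_{\ell}^{[i]}$, as required. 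If one insists on a single index range $\ell = 1, \ldots, m$ across orbits of possibly different size, take $m = \max_i |\mathcal{I}^{\facetF_i}|$ and pad each shorter decomposition with zero vectors and zero eigenvalues; all relations are trivially preserved since $U_g^{i}$ sends $0$ to $0$.

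The one genuine point to verify — and the step I expect to be the crux — is that $\ket{w_{\ell}^{[i]}}$ does not depend on the choice of $g$ with $g\, i_j = i$. If $g\, i_j = g'\, i_j = i$, then $h := g^{-1} g'$ lies in the stabiliser $\mathrm{Stab}(i_j)$; external-ness says every such $h$ fixes each facet in $\facetF_{i_j}$, hence ${}^{h}\beta = \beta$ for all $\beta \in \mathcal{I}^{\facetF_{i_j}}$, hence $U_h^{i_j} = \mathrm{id}$. Using $h\, i_j = i_j$ and the composition law, $U_{g'}^{i_j} = U_{gh}^{i_j} = U_g^{h i_j} U_h^{i_j} = U_g^{i_j}$, so the construction is unambiguous. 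This is the sole place external-ness is used: without it the stabiliser could act nontrivially on $\facetF_{i_j}$, and on a degenerate eigenspace of $K^{[i_j]}$ its representation may contain no trivial sub-representation, so no equivariant eigenbasis need exist. Everything else is routine finite-dimensional linear algebra.
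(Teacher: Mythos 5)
Your proposal is correct and follows essentially the same route as the paper's proof: choose orbit representatives, diagonalize there, and transport the eigendata via the permutation unitaries $\ket{\beta}\mapsto\ket{{}^{g}\beta}$, with external-ness guaranteeing independence of the choice of $g$. You merely make explicit the covariance relation, the cocycle/composition law, and the stabiliser argument that the paper states in one line.
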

\begin{proof}
Choose $i_1, \ldots, i_m \in [n]$ representations of the different orbits of the group actions. Computing the eigenvectors and eigenvalues of $K^{[i_1]}, \ldots, K^{[i_m]}$ we obtain a generating set of eigen-decompositions for every matrix $K^{[i]}$ by setting
$$\lambda^{[i]}_{\ell} = \lambda_{\ell}^{[g i_k]} \text{ and } \ket{w_{\ell}^{[i]}} = \sum_{\beta \in \mathcal{I}^{\sfacetF_{i_k}}} \ket{{}^g \beta} \Cbraket{\beta}{w_{\ell}^{[i_k]}}$$
for $g \in G$ and a representative $i_k$ such that $i = g i_k$.

Since the action is external, this is independent of the choice of $g$.
\end{proof}

\begin{proof}[Proof of \Cref{thm:CQCorr}]
\ref{thm:QCorrI} $\Rightarrow$ \ref{thm:QCorrII}: Let $P \in \CQCorr_{(\Omega,G)}(n,d,r)$. By definition, there exist a state
$$\ket{\psi} = \sum_{\alpha \in \mathcal{I}^{\sfacetF}} \ket{v^{[1]}_{\alpha_{|_1}}} \otimes \cdots \otimes \ket{v^{[n]}_{\alpha_{|_n}}} $$
with $|\mathcal{I}| \leq r$
and $G$-invariant POVMs $\left(A^{[i]}_{j}\right)_{j=1}^{d}$ such that
$$ P_{j_1, \ldots, j_n} = \tr\left(A^{[1]}_{j_1} \otimes \cdots \otimes A^{[n]}_{j_n} \ket{\psi} \bra{\psi}\right).$$
Define
$$B^{[i]}_j \coloneqq \left(X^{[i]}\right)^{\dag} A^{[i]}_j X^{[i]} \quad \text{ with } \quad X^{[i]} = \sum_{\beta \in \mathcal{I}^{\sfacetF_i}} \ket{v_{\beta}^{[i]}} \bra{\beta}. $$
Note that $B^{[i]}_j \in \mathcal{M}^{+}_{\mathcal{I}^{{\sfacetF_i}}}(\mathbb{C})$. Moreover, we have
$$\bra{{}^g \beta} B^{[gi]}_j \ket{{}^g \beta'} = \bra{v^{[gi]}_{{}^{g} \beta}} A^{[gi]}_j \ket{v^{[gi]}_{{}^g \beta'}} = \bra{v^{[i]}_{\beta}} A^{[i]}_j \ket{v^{[i]}_{\beta'}} = \bra{\beta} B^{[i]}_j \ket{\beta'}$$
where we have used that $\ket{v_{\beta}^{[i]}}$ forms a $(\Omega,G)$-decomposition and that $A_j^{[i]}$ are $G$-invariant. Moreover,
\begin{align*} \sum_{\alpha, \alpha' \in \mathcal{I}^{\sfacetF}}  \left(B^{[1]}_{j_1}\right)_{\alpha_{|_1}, \alpha'_{|_1}} \cdots \left(B^{[n]}_{j_n}\right)_{\alpha_{|_n}, \alpha'_{|_n}} =  \bra{\psi} A^{[1]}_{j_1} \otimes \cdots \otimes A^{[n]}_{j_n} \ket{\psi} = P_{j_1, \ldots, j_n}
\end{align*}
which proves that $\textrm{psd-rank}_{(\Omega,G)}(P) \leq r$.

\ref{thm:QCorrII} $\Rightarrow$ \ref{thm:QCorrI}:
Let 
\begin{equation}
\label{eq:decDef}
 P_{j_1, \ldots, j_n} =  \sum_{\alpha, \alpha' \in \mathcal{I}^{\sfacetF}}  \left(B^{[1]}_{j_1}\right)_{\alpha_{|_1}, \alpha'_{|_1}} \cdots \left(B^{[n]}_{j_n}\right)_{\alpha_{|_n}, \alpha'_{|_n}} = \bra{\Omega_r} B_{j_1}^{[1]} \otimes \cdots \otimes B_{j_n}^{[n]} \ket{\Omega_r}
\end{equation}
be a psd $(\Omega,G)$-decomposition of $P$ with $\psdrank_{(\Omega,G)}(P) \leq r = |\mathcal{I}|$.
As the last expression in \eqref{eq:decDef} already suggests, we use $B_j^{[i]}$ to construct a POVM and $\ket{\Omega_r}$ to construct a state whose combination lead to $P$. While the matrices $B^{[i]}_j$ are psd, they do not form a POVM in general since
$$\sum_{j = 1}^{k} B^{[i]}_{j} \neq \one_{r_i}$$
with $r_i = |\mathcal{I}^{\facetF_i}|$.
To this end, define
$$ S^{[i]} \coloneqq \sum_{j=1}^{d} B^{[i]}_{j} = \sum_{\ell=1}^{m_i} \lambda_{\ell}^{[i]} \ket{w_\ell^{[i]}} \bra{w_\ell^{[i]}}$$
with $\lambda_{\ell}^{[i]} > 0$ being only the positive eigenvalues of $S^{[i]}$ and $\ket{w_{\ell}^{[i]}}$ being the $G$-invariant eigenvectors of the family $S^{[1]}, \ldots, S^{[n]}$ according to \Cref{lem:GsymmetricDiagonalization}. Define
$$ T^{[i]} = \sum_{\ell=1}^{m_i} \left(\lambda_{\ell}^{[i]}\right)^{-1/2} \ket{w_{\ell}^{[i]}} \bra{\ell} \quad \textrm{ and } \quad W^{[i]} = \sum_{\ell=1}^{m_i} \left(\lambda_{\ell}^{[i]}\right)^{1/2} \ket{\ell} \bra{w_{_\ell}^{[i]}}.$$
Note that $T^{[i]} \cdot W^{[i]}$ is a projector on the subspace $\textrm{span}(\{\ket{w_1^{[i]}}, \ldots, \ket{w_{m_i}^{[i]}}\})$. Therefore, we have that
\begin{equation}
\label{eq:BiProj} B_j^{[i]} = \left(T^{[i]} \cdot W^{[i]}\right)^{\dag} \cdot B_j^{[i]} \cdot \left(T^{[i]} \cdot W^{[i]}\right). 
\end{equation}
We have that
\begin{equation}
\label{eq:TW_symmetry}
\bra{{}^g \beta} T^{[gi]} = \bra{\beta} T^{[i]} \quad \textrm{ and } \quad W^{[gi]} \ket{{}^g \beta} = W^{[i]} \ket{\beta}
\end{equation}
since the vectors $\ket{w_{\ell}^{[i]}}$ are $G$-invariant. We now define a POVM $(A_j^{[i]})_{j=1}^{d}$ via
$$ A_j^{[i]} = \left(T^{[i]}\right)^{\dag} \cdot B_j^{[i]} \cdot T^{[i]}.$$
We have that $A_j^{[i]}$ is psd and
$$\sum_{j=1}^{d} A_j^{[i]} = \one_{m_i}$$
which shows that $A^{[i]} \coloneqq \left(A_j^{[i]}\right)_{j=1, \ldots, d}$ is indeed a POVM for $i \in \{1, \ldots, n\}$. Moreover, $\left(A^{[i]}\right)_{i=1}^{n}$ is a $G$-invariant family since
\begin{align*}
A_j^{[gi]} &= \left(T^{[gi]}\right)^{\dag} \cdot B_j^{[gi]} \cdot T^{[gi]} = \sum_{\beta, \beta' \in \mathcal{I}^{\sfacetF_i}} \left(\bra{\beta} T^{[gi]} \right)^{\dag} \bra{\beta} B_j^{[gi]} \ket{\beta'} \bra{\beta'} T^{[gi]} \\
&=  \sum_{\beta, \beta' \in \mathcal{I}^{\sfacetF_i}} \left(\bra{{}^g \beta} T^{[gi]} \right)^{\dag} \bra{{}^g \beta} B_j^{[gi]} \ket{{}^g \beta'} \bra{{}^g \beta'} T^{[gi]} \\
&= \sum_{\beta, \beta' \in \mathcal{I}^{\sfacetF_i}} \left(\bra{\beta} T^{[i]} \right)^{\dag} \bra{\beta} B_j^{[i]} \ket{\beta'} \bra{\beta'} T^{[i]} = A_j^{[i]}
\end{align*}
where we have used that $\beta \mapsto {}^g \beta$ is a bijection between $\mathcal{I}^{\facetF_i}$ and $\mathcal{I}^{\facetF_{gi}}$ in the third step, and Equation \eqref{eq:TW_symmetry} in the fourth step. Moreover, defining 
$$ \ket{\psi} = W^{[1]} \otimes \cdots \otimes W^{[n]} \ket{\Omega_r}$$
leads to a normalized state with $\rank_{(\Omega, G)}(\ket{\psi}) \leq r$ since
\begin{align*}
\Cbraket{\psi}{\psi} &= \bra{\Omega_r} \left(W^{[1]}\right)^{\dag} W^{[1]} \otimes \cdots \otimes \left(W^{[n]}\right)^{\dag} W^{[n]}  \ket{\Omega_r} = \bra{\Omega_r} S^{[1]} \otimes \cdots \otimes S^{[n]} \ket{\Omega_r}\\
&= \sum_{j_1, \ldots, j_n = 1}^{d} \sum_{\alpha, \alpha \in \mathcal{I}^{\sfacetF}} \left(B^{[1]}_{j_1}\right)_{\alpha_{|_1}, \alpha'_{|_1}} \cdots \left(B^{[n]}_{j_n}\right)_{\alpha_{|_n}, \alpha'_{|_n}} = \sum_{j_1, \ldots, j_n = 1}^{d} P_{j_1, \ldots, j_n} = 1
\end{align*}
where we have used that $\left(P_{j_1, \ldots, j_n}\right)_{j_1, \ldots, j_n = 1}^{d}$ represents a probability distribution in the last step. Finally, the defined POVMs $\left(A_j^{[i]}\right)_{j=1}^{d}$ for $i \in \{1,\ldots, n\}$ and the state $\ket{\psi}$ generate the probability distribution $P$, since
$$ \bra{\psi} A_{j_1}^{[1]} \otimes \cdots \otimes A_{j_n}^{[n]} \ket{\psi} = \sum_{\alpha, \alpha' \in \mathcal{I}^{\sfacetF}} \left(B_{j_1}^{[1]}\right)_{\alpha_{|_1}, \alpha'_{|_1}} \cdots \left(B_{j_n}^{[n]}\right)_{\alpha_{|_n}, \alpha'_{|_n}} = P_{j_1, \ldots, j_n}$$
where we have used Equation \eqref{eq:BiProj} in the first step and Equation \eqref{eq:decDef} in the second step.
\end{proof}

\subsection{Correspondence between $\purirank$ and $\QQCorr$}
\label{ssec:purirankCorrespondence}

We now prove the quantum version of \Cref{thm:CQCorr}, namely that elements of $\QQCorr_{(\Omega,G)}(n,d,r)$ are precisely psd matrices $\rho$ with $\tr(\rho) = 1$ and $\purirank_{(\Omega,G)}(\rho) \leq r$.
The proof of this statement is similar to that of \Cref{thm:CQCorr}. In particular, it first restricts to density matrices which are diagonal in the computational basis according to \Cref{rem:diagonalDec} and then uses the fact that quantum channels whose image are diagonal states correspond to POVMs (see Equation \eqref{eq:DiagCPTP}).

\begin{theorem}[The puri-rank and quantum correlation scenarios]\label{thm:QQCorr}
Let $\Omega$ be a wsc and $G$ an external group action. Further, let $\rho \in \mathcal{M}_d(\mathbb{C})^{\otimes n}$ be psd and $\tr(\rho) = 1$. The following are equivalent:
\begin{enumerate}[label=(\roman*)]
	\item \label{thm:QQCorrI} $\rho \in \QQCorr_{(\Omega,G)}(n,d,r)$.
	\item \label{thm:QQCorrII} $\purirank_{(\Omega,G)}(\rho) \leq r$.
\end{enumerate}
\end{theorem}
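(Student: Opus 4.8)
The plan is to mirror the proof of \Cref{thm:CQCorr}, exploiting the fact already recorded in \Cref{rem:diagonalDec} that $(\Omega,G)$-purifications of diagonal matrices correspond to psd $(\Omega,G)$-decompositions of the underlying nonnegative tensor, and the correspondence in Equation \eqref{eq:DiagCPTP} between POVMs and quantum channels whose image consists of diagonal states.

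\medskip
\noindent\emph{\ref{thm:QQCorrII} $\Rightarrow$ \ref{thm:QQCorrI}.} Suppose $\purirank_{(\Omega,G)}(\rho) \leq r$, so there is a purification $\rho = L L^{\dag}$ with an $(\Omega,G)$-decomposition $L = \sum_{\alpha \in \mathcal{I}^{\sfacetF}} L^{[1]}_{\alpha_{|_1}} \otimes \cdots \otimes L^{[n]}_{\alpha_{|_n}}$, $|\mathcal{I}| \leq r$, where the local matrices $L^{[i]}_{\beta} \in \mathcal{M}_{d,d'}(\mathbb{C})$ satisfy the $G$-invariance $L^{[gi]}_{{}^g\beta} = L^{[i]}_{\beta}$. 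Writing $L = W^{[1]} \otimes \cdots \otimes W^{[n]} \ket{\Omega_r}$ in the style of Equation \eqref{eq:OmegaRDecomposition} (now with the ``physical'' leg attached), I would set $\ket{\psi} = \widetilde{W}^{[1]} \otimes \cdots \otimes \widetilde{W}^{[n]} \ket{\Omega_r}$ after absorbing the physical index into an auxiliary space, then build channels $\mathcal{E}_i$ from Kraus operators supplied by the $L^{[i]}_{\beta}$, so that $(\mathcal{E}_1 \otimes \cdots \otimes \mathcal{E}_n)(\ket{\psi}\bra{\psi}) = \rho$. The $G$-invariance of the $L^{[i]}_{\beta}$ makes the Kraus families, hence the $\mathcal{E}_i$, satisfy $\mathcal{E}_{gi} = \mathcal{E}_i$; normalization and complete positivity are guaranteed by choosing $\ket{\psi}$ normalized (rescaling, using $\tr(\rho)=1$) and completing the Kraus set to a trace-preserving one, exactly as the POVM $T^{[i]}$ was completed in the proof of \Cref{thm:CQCorr}, invoking \Cref{lem:GsymmetricDiagonalization} for the $G$-symmetric diagonalization of $S^{[i]} = \sum_\beta (L^{[i]}_\beta)^\dag L^{[i]}_\beta$ or the analogous operator.

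\medskip
\noindent\emph{\ref{thm:QQCorrI} $\Rightarrow$ \ref{thm:QQCorrII}.} Conversely, given $\rho = (\mathcal{E}_1 \otimes \cdots \otimes \mathcal{E}_n)(\ket{\psi}\bra{\psi})$ with $\rank_{(\Omega,G)}(\ket{\psi}) \leq r$ and $\mathcal{E}_{gi}=\mathcal{E}_i$, I would pick Kraus decompositions $\mathcal{E}_i(X) = \sum_k K^{[i]}_k X (K^{[i]}_k)^\dag$ and plug in the $(\Omega,G)$-decomposition of $\ket{\psi}$; contracting each $K^{[i]}_k$ into the local tensor of $\ket{\psi}$ at site $i$ and collecting the Kraus index produces a purification $\rho = LL^\dag$ whose $L$ carries exactly the index structure of $\ket{\psi}$, hence $\rank_{(\Omega,G)}(L) \leq \rank_{(\Omega,G)}(\ket{\psi}) \leq r$, i.e.\ $\purirank_{(\Omega,G)}(\rho) \leq r$. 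The only subtlety is matching Kraus indices across sites so that $L$ is a genuine tensor (not just a matrix) with the right facet-indexed summations; choosing all sites in a $G$-orbit to use the \emph{same} Kraus decomposition — legitimate because $\mathcal{E}_{gi}=\mathcal{E}_i$ — and externality of the action (so no intra-site symmetry constraints are violated) takes care of the $G$-invariance of the resulting local tensors $L^{[i]}_\beta$.

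\medskip
\noindent The main obstacle I anticipate is the bookkeeping in the ``hard'' direction \ref{thm:QQCorrII} $\Rightarrow$ \ref{thm:QQCorrI}: one must simultaneously (i) push the purification's auxiliary leg through to become the system leg acted on by $\mathcal{E}_i$, (ii) complete the possibly non-trace-preserving map built from the $L^{[i]}_\beta$ to an honest quantum channel while preserving $G$-invariance, and (iii) verify the state $\ket{\psi}$ one reads off is normalized — all of which parallel, but are slightly more involved than, the POVM-completion argument in \Cref{thm:CQCorr}. For the full details I refer to the proof of \Cref{thm:CQCorr}, of which this is the ``quantum channel'' analogue, together with \Cref{rem:diagonalDec} and Equation \eqref{eq:DiagCPTP}.
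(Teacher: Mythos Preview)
Your proposal is correct and follows essentially the same route as the paper: in \ref{thm:QQCorrI} $\Rightarrow$ \ref{thm:QQCorrII} both you and the paper contract the Kraus operators of $\mathcal{E}_i$ into the local tensors of $\ket{\psi}$ to produce the local blocks $L^{[i]}_\beta$ of a purification, and in \ref{thm:QQCorrII} $\Rightarrow$ \ref{thm:QQCorrI} both rewrite $\rho = LL^\dag$ as completely positive maps applied to $\ket{\Omega_r}\bra{\Omega_r}$ and then repair trace preservation and normalization via the $T^{[i]},W^{[i]}$ trick of \Cref{thm:CQCorr} using \Cref{lem:GsymmetricDiagonalization}. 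One small correction to your hedged formula: the operator to diagonalize is $S^{[i]} = \sum_k (B_k^{[i]})^\dag B_k^{[i]}$ living on the facet-index space $\mathcal{I}^{\sfacetF_i}$ (with $(B_k^{[i]})_{\ell,\beta} \coloneqq (L^{[i]}_\beta)_{\ell,k}$), not $\sum_\beta (L^{[i]}_\beta)^\dag L^{[i]}_\beta$, so that the resulting $W^{[i]}$ act on the facet-index space and $\ket{\psi} = W^{[1]}\otimes\cdots\otimes W^{[n]}\ket{\Omega_r}$ visibly has $\rank_{(\Omega,G)}\leq r$; also, your worry about ``matching Kraus indices across sites'' is unnecessary, since each site's Kraus index simply becomes that site's column index in the tensor-product column space of $L$.
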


The construction of \Cref{thm:QQCorr} \ref{thm:QQCorrII} $\Longrightarrow$ \ref{thm:QQCorrI} is depicted as a tensor network in \Cref{fig:normalization_process} for one-dimensional purification forms, i.e.\ a $\Lambda_n$-purification. 

\begin{proof}
\ref{thm:QQCorrI} $\Longrightarrow$ \ref{thm:QQCorrII}: Let $\rho$ be a density matrix in $\textsf{QQCorr}_{(\Omega,G)}(n,d,r)$. By definition, there exists a state
$$ \ket{\psi} = \sum_{\alpha \in \mathcal{I}^{\sfacetF}} \ket{v^{[1]}_{\alpha_{|_1}}} \otimes \cdots \otimes \ket{v^{[n]}_{\alpha_{|_n}}} $$
such that $\textrm{rank}_{(\Omega,G)}(\ket{\psi}) \leq r = |\mathcal{I}|$ and $G$-invariant quantum channels
\begin{equation}
\label{eq:MiDef}
\mathcal{E}_i(-) \coloneqq \sum_{k = 1}^{d_i} \left(A^{[i]}_k\right) \cdot \ - \ \cdot \left(A^{[i]}_k\right)^{\dag}
\end{equation}
with the condition that $A^{[i]}_k = A^{[gi]}_k$ that generate $\rho$. We now define $L \in \mathcal{M}_{d, d_0}(\mathbb{C}) \otimes \cdots \otimes \mathcal{M}_{d, d_n}(\mathbb{C})$ such that
\begin{enumerate}[label=(\alph*)]
	\item \label{pr:puridec} $\rho = L L^{\dag}$
	\item $\textrm{rank}_{(\Omega,G)}(L) \leq r$
\end{enumerate}
which proves \ref{thm:QQCorrII}. For $i \in \{1, \ldots, n\}$ and $\beta \in \mathcal{I}^{\facetF_i}$ let
\begin{equation}
\label{eq:LiDef}
L^{[i]}_{\beta} \coloneqq \sum_{k=1}^{d_i} A^{[i]}_k \ket{v^{[i]}_{\beta}} \bra{k}.
\end{equation}
Further, set 
$$L = \sum_{\alpha \in \mathcal{I}^{\sfacetF}} L^{[1]}_{\alpha_{|_1}} \otimes \cdots \otimes L^{[n]}_{\alpha_{|_n}}.$$
By definition, we have that $\textrm{rank}_{(\Omega,G)}(L) \leq r$. It only remains to prove \ref{pr:puridec}. This follows from
\begin{align*} L L^{\dag} = \sum_{k_1, \ldots, k_n = 1}^{d} \left(A_{k_1}^{[1]} \otimes \cdots \otimes  A_{k_n}^{[n]}\right) \ket{\psi} \bra{\psi} \left(A_{k_1}^{[1]} \otimes \cdots \otimes  A_{k_n}^{[n]}\right)^{\dag} = (\mathcal{E}_1 \otimes \cdots  \otimes \mathcal{E}_n)(\ket{\psi} \bra{\psi}) = \rho
\end{align*}
where we have used Equation \eqref{eq:LiDef} in the first step and Equation \eqref{eq:MiDef} in the second step.

\ref{thm:QQCorrII} $\Longrightarrow$ \ref{thm:QQCorrI}: Let $\rho = L L^{\dag}$ where
$$ L = \sum_{\alpha \in \mathcal{I}^{\sfacetF}} L_{\alpha_{|_1}}^{[1]} \otimes \cdots \otimes L_{\alpha_{|_n}}^{[n]} $$
be an $(\Omega,G)$-purificiation with $\textrm{puri-rank}_{(\Omega,G)}(\rho) \leq r = |\mathcal{I}|$. 

Defining the completely positive maps
$$\mathcal{N}_i(-) \coloneqq \sum_{k=1}^{d'} \left(B_k^{[i]}\right) \cdot \ - \ \cdot \left(B_k^{[i]}\right)^{\dag} \quad \text{ with } \quad \left(B_k^{[i]}\right)_{\ell, \beta} = \left(L_{\beta}^{[i]}\right)_{\ell, k}$$
we have that
\begin{equation} 
\label{eq:unnormalizedCP}
\rho = (\mathcal{N}_1 \otimes \cdots \otimes \mathcal{N}_n)(\ket{\Omega_r} \bra{\Omega_r}).
\end{equation}
where $\ket{\Omega_r}$ is the structure tensor defined in \Cref{ssec:structureTensor}.
However, $\mathcal{N}_i$ is neither trace-preserving nor $G$-invariant and $\ket{\Omega_r}$ is not normalized. For this reason, define
$$S^{[i]} \coloneqq \sum_{k=1}^{d'} \left(B_k^{[i]}\right)^{\dag} \left(B_k^{[i]}\right) = \sum_{\ell = 1}^{m_i} \lambda_{\ell}^{[i]} \ket{w_{\ell}^{[i]}} \bra{w_{\ell}^{[i]}}$$
where $\ket{w_{\ell}^{[i]}}$ is a $G$-invariant eigendecomposition of the family $S^{[1]}, \ldots, S^{[n]}$ according to \Cref{lem:GsymmetricDiagonalization}. Similarly to the proof of \Cref{thm:CQCorr} we define 
\begin{equation} \label{eq:TiWi}
T^{[i]} = \sum_{\ell=1}^{m_i} \left(\lambda_{\ell}^{[i]}\right)^{-1/2} \ket{w_{\ell}^{[i]}} \bra{\ell} \quad \textrm{ and } \quad W^{[i]} = \sum_{\ell=1}^{m_i} \left(\lambda_{\ell}^{[i]}\right)^{1/2} \ket{\ell} \bra{w_{_\ell}^{[i]}}.
\end{equation}
and completely positive maps
\begin{equation}
\label{eq:cptpMap}
\mathcal{E}_i(-) = \sum_{k=1}^{d'} \left(A_k^{[i]}\right) \cdot \ - \ \cdot \left(A_k^{[i]}\right)^{\dag} \quad \text{ with } \quad A_k^{[i]} \coloneqq B_k^{[i]} \cdot T^{[i]}.
\end{equation}
Note that $(\mathcal{E}_i)_{i=1,\ldots,n}$ is by definition a $G$-invariant family of quantum channels. Moreover, by the reasoning of the proof of \Cref{thm:CQCorr},
\begin{equation} 
\label{eq:normalizedState} 
\ket{\psi} = W^{[1]} \otimes \cdots \otimes W^{[n]} \ket{\Omega_r} \end{equation}
defines a normalized state with $\rank_{(\Omega,G)}(\ket{\psi}) \leq r$. Moreover,
$$\left(\mathcal{E}_1 \otimes \cdots \otimes \mathcal{E}_n\right)(\ket{\psi} \bra{\psi}) = \left(\mathcal{N}_1 \otimes \cdots \otimes \mathcal{N}_n\right)(\ket{\Omega_r} \bra{\Omega_r}) = \rho$$
which proves the statement.
\end{proof}

\begin{figure}[h!]
\centering
\begin{tikzpicture}[scale=0.95]

\draw (-1,5) rectangle (16,9);

\draw[line width=1cm, white] (2.4,3.3) -- (1,5.5);
\draw[thick, -stealth] (2.4,3.3) -- (1,5.5);

\draw[line width=1cm, white] (11.5,3.4) -- (13.7,6.2);
\draw[thick, stealth-] (11.5,3.4) -- (13.7,6.2);

\begin{scope}

\draw[draw=none, fill=gray!10!white] (-0.5,1.4) rectangle (5.5,3.1);

\draw[thick, gray] (0.25, 0.5) -- (0.25,2);
\draw[thick, gray] (1.75, 0.5) -- (1.75,2);
\draw[thick, gray] (3.25, 0.5) -- (3.25,2);
\draw[thick, gray] (4.75, 0.5) -- (4.75,2);

\draw[very thick] (0.5,2.25) -- (1.5,2.25);
\draw[very thick] (2,2.25) -- (3,2.25);
\draw[very thick] (3.5,2.25) -- (4.5,2.25);
\draw[very thick] (0.5,0.25) -- (1.5,0.25);
\draw[very thick] (2,0.25) -- (3,0.25);
\draw[very thick] (3.5,0.25) -- (4.5,0.25);

\draw[thick] (0.25,2.5) -- (0.25, 3);
\draw[thick] (1.75,2.5) -- (1.75, 3);
\draw[thick] (3.25,2.5) -- (3.25, 3);
\draw[thick] (4.75,2.5) -- (4.75, 3);

\draw[thick] (0.25,0) -- (0.25, -0.5);
\draw[thick] (1.75,0) -- (1.75, -0.5);
\draw[thick] (3.25,0) -- (3.25, -0.5);
\draw[thick] (4.75,0) -- (4.75, -0.5);

\draw[thick, fill=red!50!white] (-0.1,1.9) rectangle (0.6,2.6) node[midway] {\small $L^{[1]}$};
\draw[thick, fill=red!50!white] (1.4,1.9) rectangle (2.1,2.6) node[midway] {\small $L^{[2]}$};
\draw[thick, fill=red!50!white] (2.9,1.9) rectangle (3.6,2.6) node[midway] {\small $L^{[3]}$};
\draw[thick, fill=red!50!white] (4.4,1.9) rectangle (5.1,2.6) node[midway] {\small $L^{[4]}$};

\draw[thick, fill=red!50!white] (-0.1,-0.1) rectangle (0.6,0.6) node[midway] {\small $\bar{L}^{[1]}$};
\draw[thick, fill=red!50!white] (1.4,-0.1) rectangle (2.1,0.6) node[midway] {\small $\bar{L}^{[2]}$};
\draw[thick, fill=red!50!white] (2.9,-0.1) rectangle (3.6,0.6) node[midway] {\small $\bar{L}^{[3]}$};
\draw[thick, fill=red!50!white] (4.4,-0.1) rectangle (5.1,0.6) node[midway] {\small $\bar{L}^{[4]}$};

\node at (2.5,-1.3) {(a)};

\end{scope}


\begin{scope}[xshift=8.5cm, yshift=0.4cm]

\node at (-2,0.9) {$\stackrel{?}{=}$};

\draw[draw=none, fill=gray!25!white] (0.4,0.9) rectangle (6.5,1.65);
\node at (6.1,1.25) {$\ket{\psi}$};

\draw[draw=none, fill=gray!10!white] (1.2,1.8) -- (1.2,2.7) -- (-0.8,2.7) -- (-0.8,-0.9) -- (1.2,-0.9) -- (1.2,0) -- (0.3,0) -- (0.3, 1.8) -- cycle;
\node at (-0.3,0.9) {$\mathcal{M}_1$};

\draw[thick, red] (0.75,1) -- (0.75,2);
\draw[thick, red] (2.25,1) -- (2.25,2);
\draw[thick, red] (3.75,1) -- (3.75,2);
\draw[thick, red] (5.25,1) -- (5.25,2);

\draw[thick, gray] (0.25, -0.2) -- (0.25,2);
\draw[thick, gray] (1.75, -0.2) -- (1.75,2);
\draw[thick, gray] (3.25, -0.2) -- (3.25,2);
\draw[thick, gray] (4.75, -0.2) -- (4.75,2);

\draw[thick, fill=blue!50!white] (0,2) rectangle (1,2.5) node[midway] {\small $A^{[1]}$};
\draw[thick, fill=blue!50!white] (1.5,2) rectangle (2.5,2.5) node[midway] {\small $A^{[2]}$};
\draw[thick, fill=blue!50!white] (3,2) rectangle (4,2.5) node[midway] {\small $A^{[3]}$};
\draw[thick, fill=blue!50!white] (4.5,2) rectangle (5.5,2.5) node[midway] {\small $A^{[4]}$};

\draw[thick] (0.25,2.5) -- (0.25, 2.8);
\draw[thick] (1.75,2.5) -- (1.75, 2.8);
\draw[thick] (3.25,2.5) -- (3.25, 2.8);
\draw[thick] (4.75,2.5) -- (4.75, 2.8);

\draw[line width = 0.2cm, gray!25!white] (1,1.25) -- (5,1.25);
\draw[very thick] (1,1.25) -- (5,1.25);

\draw[thick, fill=white] (0.5,1) rectangle (1,1.5)  node[midway] {\scalebox{.6}{$W^{[1]}$}};

\draw[thick, fill=white] (2,1) rectangle (2.5,1.5)  node[midway] {\scalebox{.6}{$W^{[2]}$}};

\draw[thick, fill=white] (3.5,1) rectangle (4,1.5)  node[midway] {\scalebox{.6}{$W^{[3]}$}};

\draw[thick, fill=white] (5,1) rectangle (5.5,1.5)  node[midway] {\scalebox{.6}{$W^{[4]}$}};

\draw[thick, red] (0.75,0.8) -- (0.75,-0.2);
\draw[thick, red] (2.25,0.8) -- (2.25,-0.2);
\draw[thick, red] (3.75,0.8) -- (3.75,-0.2);
\draw[thick, red] (5.25,0.8) -- (5.25,-0.2);

\draw[thick] (0.25,-0.7) -- (0.25, -1);
\draw[thick] (1.75,-0.7) -- (1.75, -1);
\draw[thick] (3.25,-0.7) -- (3.25, -1);
\draw[thick] (4.75,-0.7) -- (4.75, -1);

\draw[thick, fill=blue!50!white] (0,-0.2) rectangle (1,-0.7) node[midway] {\small $\bar{A}^{[1]}$};
\draw[thick, fill=blue!50!white] (1.5,-0.2) rectangle (2.5,-0.7) node[midway] {\small $\bar{A}^{[2]}$};
\draw[thick, fill=blue!50!white] (3,-0.2) rectangle (4,-0.7) node[midway] {\small $\bar{A}^{[3]}$};
\draw[thick, fill=blue!50!white] (4.5,-0.2) rectangle (5.5,-0.7) node[midway] {\small $\bar{A}^{[4]}$};

\draw[line width = 0.2cm, white] (1,0.55) -- (5,0.55);
\draw[very thick] (1,0.55) -- (5,0.55);

\draw[thick, fill=white] (0.5,0.8) rectangle (1,0.3);

\draw[thick, fill=white] (2,0.8) rectangle (2.5,0.3);

\draw[thick, fill=white] (3.5,0.8) rectangle (4,0.3);

\draw[thick, fill=white] (5,0.8) rectangle (5.5,0.3);

\node at (2.5,-1.7) {(e)};

\end{scope}


\begin{scope}[yshift=6cm, xshift=-0.5cm, scale=0.6]

\node at (2.5,4) {(b)};

\draw[draw=none, fill=gray!10!white] (0,-0.5) -- (7,-0.5) -- (7, 0.5) -- (0,0.5) -- cycle;
\node at (6,0) {$\ket{\Omega_r}$};

\draw[thick, fill=red!50!white] (0,2) rectangle (0.5,2.5);
\draw[thick, fill=red!50!white] (1.5,2) rectangle (2,2.5);
\draw[thick, fill=red!50!white] (3,2) rectangle (3.5,2.5);
\draw[thick, fill=red!50!white] (4.5,2) rectangle (5,2.5);

\draw[thick, gray] (0.25, 1.5) -- (0.25,2);
\draw[thick, gray] (1.75, 1.5) -- (1.75,2);
\draw[thick, gray] (3.25, 1.5) -- (3.25,2);
\draw[thick, gray] (4.75, 1.5) -- (4.75,2);

\draw[thick] (0.25,2.5) -- (0.25, 2.8);
\draw[thick] (1.75,2.5) -- (1.75, 2.8);
\draw[thick] (3.25,2.5) -- (3.25, 2.8);
\draw[thick] (4.75,2.5) -- (4.75, 2.8);

\draw[very thick] (0.5,2.25) -- (0.75,2.25) -- (0.75,1) -- (0.35,1) -- (0.35,0);
\draw[very thick] (1.5,2.25) -- (1.25,2.25) -- (1.25,1) -- (1.65,1) -- (1.65,0);
\draw[very thick] (0.35,0) -- (1.65,0);

\draw[very thick] (2,2.25) -- (2.25,2.25) -- (2.25,1) -- (1.85,1) -- (1.85,0);
\draw[very thick] (3,2.25) -- (2.75,2.25) -- (2.75,1) -- (3.15,1) -- (3.15,0);
\draw[very thick] (1.85,0) -- (3.15,0);

\draw[very thick] (3.5,2.25) -- (3.75,2.25) -- (3.75,1) -- (3.35,1) -- (3.35,0);
\draw[very thick] (4.5,2.25) -- (4.25,2.25) -- (4.25,1) -- (4.65,1) -- (4.65,0);
\draw[very thick] (3.35,0) -- (4.65,0);

\node at (8.25,1.5) {$=$};

\end{scope}

\begin{scope}[xshift=5.5cm, yshift=6cm, scale=0.6]

\node at (2.5,4) {(c)};

\draw[dotted, fill=blue!10!white] (-0.25,0.125) -- (1,0.125) -- (1, 3) -- (-0.25,3) -- cycle;

\draw[thick, fill=red!50!white] (0,2) rectangle (0.5,2.5);
\draw[thick, fill=red!50!white] (1.5,2) rectangle (2,2.5);
\draw[thick, fill=red!50!white] (3,2) rectangle (3.5,2.5);
\draw[thick, fill=red!50!white] (4.5,2) rectangle (5,2.5);

\draw[thick, gray] (0.25, 1.5) -- (0.25,2);
\draw[thick, gray] (1.75, 1.5) -- (1.75,2);
\draw[thick, gray] (3.25, 1.5) -- (3.25,2);
\draw[thick, gray] (4.75, 1.5) -- (4.75,2);

\draw[thick] (0.25,2.5) -- (0.25, 2.8);
\draw[thick] (1.75,2.5) -- (1.75, 2.8);
\draw[thick] (3.25,2.5) -- (3.25, 2.8);
\draw[thick] (4.75,2.5) -- (4.75, 2.8);

\draw[very thick] (0.5,2.25) -- (0.75,2.25) -- (0.75,1) -- (0.35,1) -- (0.35,0.75);
\draw[very thick] (1.5,2.25) -- (1.25,2.25) -- (1.25,1) -- (1.65,1) -- (1.65,0.75);
\draw[very thick] (0.35, -0.5) -- (0.35,-0.75) -- (1.65,-0.75) -- (1.65,-0.5);

\draw[very thick] (2,2.25) -- (2.25,2.25) -- (2.25,1) -- (1.85,1) -- (1.85,0.75);
\draw[very thick] (3,2.25) -- (2.75,2.25) -- (2.75,1) -- (3.15,1) -- (3.15,0.75);
\draw[very thick] (1.85, -0.5) -- (1.85,-0.75) -- (3.15,-0.75) -- (3.15,-0.5);

\draw[very thick] (3.5,2.25) -- (3.75,2.25) -- (3.75,1) -- (3.35,1) -- (3.35,0.75);
\draw[very thick] (4.5,2.25) -- (4.25,2.25) -- (4.25,1) -- (4.65,1) -- (4.65,0.75);
\draw[very thick] (3.35,-0.5) -- (3.35,-0.75) -- (4.65,-0.75) -- (4.65,-0.5);

\draw[thick, red] (0.35,0) -- (0.35,0.25);
\draw[thick, red] (1.75,0) -- (1.75,0.25);
\draw[thick, red] (3.25,0) -- (3.25,0.25);
\draw[thick, red] (4.65,0) -- (4.65,0.25);

\draw[thick, fill=white] (0.1,0.25) rectangle (0.6,0.75);
\draw[thick, fill=white] (0.1,-0.5) rectangle (0.6,0);

\draw[thick, fill=white] (1.5,0.25) rectangle (2,0.75);
\draw[thick, fill=white] (1.5,-0.5) rectangle (2,0);

\draw[thick, fill=white] (3,0.25) rectangle (3.5,0.75);
\draw[thick, fill=white] (3,-0.5) rectangle (3.5,0);

\draw[thick, fill=white] (4.4,0.25) rectangle (4.9,0.75);
\draw[thick, fill=white] (4.4,-0.5) rectangle (4.9,0);

\draw [thick, draw=gray!50!black, decorate, decoration = {brace}] (5.25,0.85) -- (5.25,-0.6);
\node[anchor=west] at (5.5, 0.17) {\scriptsize $T^{[i]} \cdot W^{[i]} = P$};

\node at (8.25,1.5) {$=$};

\end{scope}

\begin{scope}[xshift=12cm, yshift=6cm, scale=0.6]

\node at (2.5,4) {(d)};

\draw[thick, red] (0.75,1) -- (0.75,2);
\draw[thick, red] (2.25,1) -- (2.25,2);
\draw[thick, red] (3.75,1) -- (3.75,2);
\draw[thick, red] (5.25,1) -- (5.25,2);

\draw[thick, fill=blue!50!white] (0,2) rectangle (1,2.5);
\draw[thick, fill=blue!50!white] (1.5,2) rectangle (2.5,2.5);
\draw[thick, fill=blue!50!white] (3,2) rectangle (4,2.5);
\draw[thick, fill=blue!50!white] (4.5,2) rectangle (5.5,2.5);

\draw[thick, gray] (0.25, 1.5) -- (0.25,2);
\draw[thick, gray] (1.75, 1.5) -- (1.75,2);
\draw[thick, gray] (3.25, 1.5) -- (3.25,2);
\draw[thick, gray] (4.75, 1.5) -- (4.75,2);

\draw[thick] (0.25,2.5) -- (0.25, 2.8);
\draw[thick] (1.75,2.5) -- (1.75, 2.8);
\draw[thick] (3.25,2.5) -- (3.25, 2.8);
\draw[thick] (4.75,2.5) -- (4.75, 2.8);

\draw[line width = 0.2cm, white] (1,1.25) -- (5,1.25);
\draw[very thick] (1,1.25) -- (5,1.25);

\draw[thick, fill=white] (0.5,1) rectangle (1,1.5);

\draw[thick, fill=white] (2,1) rectangle (2.5,1.5);

\draw[thick, fill=white] (3.5,1) rectangle (4,1.5);

\draw[thick, fill=white] (5,1) rectangle (5.5,1.5);

\end{scope}

\end{tikzpicture}
\caption{Proof of \Cref{thm:QQCorr} $(ii) \Longrightarrow (i)$ on a 1d chain, i.e. proving the equality of the expressions (a) and (e). (a) is the local purification form with $\puriosr(\rho) \leq r$. (b) When rearranging the wires we obtain the definition of a $\Omega$-decomposition using the structure-tensor $\ket{\Omega_r}$ according to Equation \eqref{eq:OmegaRDecomposition} (in this setting $\ket{\Omega_r}$ is a MaMu tensor). This decomposition can also be understood as applying a cp map to $\ket{\Omega_r}$ according to Equation \eqref{eq:unnormalizedCP}. In (c) we insert a projector $P^{[i]}$ of the space where the tensor $L^{[i]}$ acts non-trivially and factorize it into a product $T^{[i]} \cdot W^{[i]}$ according to Equation \eqref{eq:TiWi}. To obtain (d) we merge the upper box ($T^{[i]}$) with the red box (Equation \eqref{eq:cptpMap}). This gives rise to a normalized state (Equation \eqref{eq:normalizedState}) together with local quantum channels (e).}
\label{fig:normalization_process}
\end{figure}

\section{Nonnegative and separable decompositions and the proof of \Cref{thm:nnDecNoBorderRank}}
\label{app:nnDecNoBorderRank}

We now prove that the nonnegative and the separable $\Sigma_n$-decompositions do not exhibit a gap between border rank and rank. This generalizes the result that the nonnegative matrix rank and the nonnegative tensor rank are lower semicontinuous \cite{Bo11, Qi16}.

\begin{theorem}
\label{thm:seprankNoBorderRank}
Let $(\rho_n)_{n \in \mathbb{N}}$ be a sequence of separable matrices with $\rho_n \to \rho$ and $\seprank(\rho_n) \leq r$. Then
$$\seprank(\rho) \leq r $$
The same statement applies to $\symmseprank$, $\nnrank$ and $\symmnnrank$.
\end{theorem}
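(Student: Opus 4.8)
The plan is to prove the statement for separable $\Sigma_n$-decompositions of matrices first, and then obtain the remaining three cases by standard reductions. For the reduction: by Remark~\ref{rem:diagonalDec} (equivalently, via $\nnrank(T)=\seprank(\rho_T)$ and $\symmnnrank(T)=\symmseprank(\rho_T)$ with $\rho_T=\Diag(T)$), together with the fact that $T_k\to T$ if and only if $\Diag(T_k)\to\Diag(T)$, the nonnegative-tensor cases follow once the separable-matrix cases are established. The symmetric rank $\symmseprank$ is treated in parallel with $\seprank$, merely carrying along the extra constraint that all local factors of a term coincide.

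So fix a sequence of $n$-partite separable matrices $\rho_k\to\rho$ with $\seprank(\rho_k)\le r$, and write
$$\rho_k=\sum_{\alpha=1}^{r}\rho^{[1]}_{\alpha,k}\otimes\cdots\otimes\rho^{[n]}_{\alpha,k},\qquad \rho^{[i]}_{\alpha,k}\geqslant 0.$$
The first step is normalization. For each $\alpha$ with all $\rho^{[i]}_{\alpha,k}\neq 0$, I would rescale $\rho^{[i]}_{\alpha,k}\mapsto\lambda_{i,\alpha,k}\,\rho^{[i]}_{\alpha,k}$ by positive scalars with $\prod_i\lambda_{i,\alpha,k}=1$, chosen so that $\|\rho^{[i]}_{\alpha,k}\|_1=1$ for $i=1,\dots,n-1$; terms with a vanishing factor are dropped, and the count is padded back up to $r$ with zero terms (which are PSD and harmless in the limit). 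Since the local factors are PSD, trace norm equals trace, so $\mathrm{tr}$ is additive over the PSD summands and multiplicative over tensor factors; hence the trace norm of the $\alpha$-th summand equals $\prod_i\|\rho^{[i]}_{\alpha,k}\|_1=\|\rho^{[n]}_{\alpha,k}\|_1$ and is bounded by $\|\rho_k\|_1$, which converges and is therefore bounded by a constant $C$ independent of $k$. Thus, after normalization, every local factor lies in the compact set of PSD matrices in $\mathcal M_d(\mathbb C)$ of trace norm at most $C$. In the symmetric case one instead has $\rho_k=\sum_\alpha\rho_{\alpha,k}^{\otimes n}$, where no rescaling is available (it would alter the term), but then $\|\rho_{\alpha,k}\|_1^n\le\|\rho_k\|_1$ already forces boundedness.

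The second step applies the Bolzano--Weierstraß Theorem. The tuple $\big(\rho^{[i]}_{\alpha,k}\big)_{i\in[n],\,\alpha\in[r]}$ ranges over the compact set $\big(\{A\geqslant 0:\|A\|_1\le C\}\big)^{nr}$, so by Theorem~\ref{thm:BolzanoWeierstrass} some subsequence converges, $\rho^{[i]}_{\alpha,k_\ell}\to\rho^{[i]}_{\alpha,\infty}$, with each limit PSD (a closed condition) and, in the symmetric case, still satisfying the coincidence constraint (also closed). Since finite sums and tensor products are continuous, $\sum_\alpha\bigotimes_i\rho^{[i]}_{\alpha,k_\ell}\to\sum_\alpha\bigotimes_i\rho^{[i]}_{\alpha,\infty}$; but this left-hand side also converges to $\rho$, so $\rho=\sum_\alpha\bigotimes_i\rho^{[i]}_{\alpha,\infty}$ is a separable (resp.\ symmetric separable) decomposition with at most $r$ terms, giving $\seprank(\rho)\le r$ (resp.\ $\symmseprank(\rho)\le r$). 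Passing back through $\rho_T=\Diag(T)$ then yields the statements for $\nnrank$ and $\symmnnrank$.

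The routine part is the trace bookkeeping in the normalization step. The one genuinely essential idea is that the gauge freedom of the decomposition must be fixed before compactness can be invoked, since individual local factors can diverge while the global matrix stays bounded; I therefore expect the main point requiring care to be precisely this normalization, in particular reconciling it with the symmetry constraint (where a uniform rescaling across the $n$ slots is impossible) --- but as noted, in that case the bound $\|\rho_{\alpha,k}\|_1^n\le\|\rho_k\|_1$ already does the job, so no rescaling is needed there.
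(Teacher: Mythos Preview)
Your proposal is correct and follows essentially the same two-step strategy as the paper: (1) bound the local factors using positivity, then (2) apply Bolzano--Weierstra{\ss} to extract a convergent subsequence and pass to the limit. The only notable difference is the norm used to establish boundedness: the paper works with the operator norm and invokes the inequality $\lambda_{\max}(A)\le\lambda_{\max}(A+B)$ for psd $A,B$ (Lemma~\ref{lem:psdInequality}) to get $\|\rho_{\alpha,k}\|_\infty^n\le\|\rho_k\|_\infty$, whereas you use the trace norm and additivity of the trace over psd summands to get $\|\rho_{\alpha,k}\|_1^n\le\|\rho_k\|_1$; both arguments are equally elementary and lead to the same conclusion. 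Your treatment of the non-symmetric case via explicit rescaling is also what the paper intends when it says the proof for $\seprank$ is ``analogous'', and your reduction of the nonnegative-tensor cases via $\Diag$ matches the paper exactly.
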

This entails the following.
\begin{corollary}
\label{cor:noNNBorderRank}
There is no gap between border rank and rank for $\nnrank$, $\symmnnrank$, $\seprank$ and $\symmseprank$.
\end{corollary}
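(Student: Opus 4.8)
The plan is to prove \Cref{thm:seprankNoBorderRank}; the corollary is then immediate, since a rank bound that is stable under limits is precisely the statement that border rank equals rank. By \Cref{rem:diagonalDec}, a nonnegative tensor $T$ corresponds to the diagonal separable matrix $\rho_T = \Diag(T)$, with $\nnrank$ matching $\seprank(\rho_T)$ and $\symmnnrank$ matching $\symmseprank(\rho_T)$, so it suffices to treat separable matrices; the nonnegative and symmetric-nonnegative cases then follow verbatim. Fix a sequence of separable $n$-partite matrices $\rho_k \to \rho$ with $\seprank(\rho_k) \leq r$. Padding decompositions with zero terms if necessary, write
\[ \rho_k = \sum_{\alpha=1}^{r} \rho^{[1]}_{\alpha,k} \otimes \cdots \otimes \rho^{[n]}_{\alpha,k}, \qquad \rho^{[i]}_{\alpha,k} \geqslant 0 . \]

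\emph{First step: normalization.} Whenever a term has all factors nonzero, rescale the local matrices by positive scalars — this preserves positive semidefiniteness and leaves the term unchanged, exploiting the $n-1$ degrees of scaling freedom — so that $\tr(\rho^{[i]}_{\alpha,k}) = 1$ for $i = 2, \ldots, n$, thereby pushing all the weight of the term into its first factor; a term with a vanishing factor is the zero tensor and may be taken to be $0^{\otimes n}$. Because each factor is positive semidefinite its trace norm equals its trace, and — crucially — there is no cancellation among the $r$ terms:
\[ \Vert \rho_k \Vert_1 \;=\; \tr(\rho_k) \;=\; \sum_{\alpha=1}^{r} \prod_{i=1}^{n} \tr\!\left(\rho^{[i]}_{\alpha,k}\right) \;=\; \sum_{\alpha=1}^{r} \tr\!\left(\rho^{[1]}_{\alpha,k}\right) . \]
Since $\rho_k \to \rho$, the norms $\Vert \rho_k \Vert_1$ are bounded by some constant $C$, so $\Vert \rho^{[1]}_{\alpha,k} \Vert_1 \leq C$ for all $\alpha, k$, while $\Vert \rho^{[i]}_{\alpha,k} \Vert_1 = 1$ for $i \geq 2$. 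Hence every sequence $\bigl(\rho^{[i]}_{\alpha,k}\bigr)_k$ lies in the compact set of positive semidefinite matrices of trace norm at most $\max(1,C)$.

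\emph{Second step: compactness.} Applying \Cref{thm:BolzanoWeierstrass} successively to the $rn$ bounded sequences $\bigl(\rho^{[i]}_{\alpha,k}\bigr)_k$ (nesting the extracted subsequences) produces an increasing $(k_\ell)_\ell$ along which $\rho^{[i]}_{\alpha,k_\ell} \to \rho^{[i]}_{\alpha} \geqslant 0$ for every $i, \alpha$. Continuity of finite sums and tensor products gives
\[ \rho \;=\; \lim_{\ell\to\infty} \rho_{k_\ell} \;=\; \sum_{\alpha=1}^{r} \rho^{[1]}_{\alpha} \otimes \cdots \otimes \rho^{[n]}_{\alpha} , \]
a separable decomposition of rank at most $r$, i.e.\ $\seprank(\rho) \leq r$. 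For $\symmseprank$ one has $\rho^{[1]}_{\alpha,k} = \cdots = \rho^{[n]}_{\alpha,k} =: \sigma_{\alpha,k}$ and the scaling freedom disappears; but it is not needed, since positivity alone forces $\tr(\sigma_{\alpha,k})^n \leq \sum_\beta \tr(\sigma_{\beta,k})^n = \tr(\rho_k) \leq C$, hence $\tr(\sigma_{\alpha,k}) \leq C^{1/n}$ — the same estimate that appears in the proof of \Cref{thm:nndec_CausalStruct} — and the compactness argument yields a symmetric limiting decomposition.

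The main obstacle — and the reason an analogous statement fails without positivity, as well as the real content of the tree case (\Cref{lem:boundSepDecTree}) — is the normalization step: the equality $\Vert \rho_k \Vert_1 = \sum_\alpha \tr(\rho^{[1]}_{\alpha,k})$ relies on all local matrices being positive semidefinite, so that a large local factor cannot hide inside a nearly-cancelling pair of terms. For unconstrained decompositions such cancellations are exactly the mechanism behind border-rank gaps; here positivity excludes them, and because $\Sigma_n$ carries a single summation index there is only one weight per term to redistribute, which keeps the bookkeeping elementary (in contrast to the loopy geometries of \Cref{ssec:cyclic}).
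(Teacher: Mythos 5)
Your proof is correct, and it follows the paper's announced two-step template exactly — first normalize the local factors of a rank-$r$ decomposition of each $\rho_k$, then extract convergent subsequences via the Bolzano--Weierstra\ss{} Theorem (\Cref{thm:BolzanoWeierstrass}) and pass to the limit. The one place where you genuinely diverge is the boundedness argument. The paper establishes $\Vert \rho_{\alpha,k}\Vert_\infty^n = \Vert\rho_{\alpha,k}^{\otimes n}\Vert_\infty \le \Vert\rho_k\Vert_\infty$ via \Cref{lem:psdInequality}, i.e.\ the fact that $\lambda_{\max}$ of a sum of psd matrices dominates $\lambda_{\max}$ of each summand; you instead use linearity of the trace, $\tr(\rho_k)=\sum_\alpha\prod_i\tr(\rho^{[i]}_{\alpha,k})$ with all terms nonnegative, which makes the ``no cancellation'' mechanism completely explicit and dispenses with the Rayleigh-quotient lemma altogether. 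Both arguments are sound; yours is slightly more elementary, and it has the additional merit of spelling out the rescaling needed in the non-symmetric case (pushing the weight of each term into the first factor), which the paper dismisses with ``the proof for $\seprank$ is analogous'' even though the symmetric argument does not literally transfer — a single factor of an elementary tensor is only bounded after fixing the scaling gauge. Your separate treatment of the symmetric case via $\tr(\sigma_{\alpha,k})^n\le\tr(\rho_k)$ is also correct and mirrors the paper's logic in trace-norm language. In short: same skeleton, a different (and marginally cleaner) key inequality for boundedness, with a gap in the paper's exposition filled in.
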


To prove \Cref{thm:seprankNoBorderRank} we need the following preparatory lemma.
\begin{lemma}
\label{lem:psdInequality}
Let $A, B \in \mathcal{M}_{d}^{+}(\mathbb{C})$. Then,
$$\max \big\{\lambda_{\max}(A), \lambda_{\max}(B) \big\} \leq \lambda_{\max}(A + B)$$
\end{lemma}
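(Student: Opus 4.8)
The plan is to use the Rayleigh--Ritz variational characterization of the largest eigenvalue of a Hermitian matrix: for $M \in \mathcal{M}_d(\mathbb{C})$ Hermitian, $\lambda_{\max}(M) = \max_{\Vert x \Vert = 1} \bra{x} M \ket{x}$, where the maximum is attained on some unit vector.

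First I would pick a unit vector $\ket{x}$ achieving $\lambda_{\max}(A) = \bra{x} A \ket{x}$. Since $B \geqslant 0$, we have $\bra{x} B \ket{x} \geq 0$, and hence
$$\lambda_{\max}(A+B) \geq \bra{x}(A+B)\ket{x} = \bra{x} A \ket{x} + \bra{x} B \ket{x} \geq \lambda_{\max}(A).$$
By the symmetric argument, choosing instead a unit vector attaining $\lambda_{\max}(B)$ and using $A \geqslant 0$, we get $\lambda_{\max}(A+B) \geq \lambda_{\max}(B)$. Taking the maximum of the two lower bounds yields the claim.

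There is essentially no obstacle here; the only thing to be mildly careful about is that $A+B$ is again Hermitian (indeed positive semidefinite), so that $\lambda_{\max}(A+B)$ is well defined and admits the variational characterization used above.
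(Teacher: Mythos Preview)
Your proof is correct and follows essentially the same approach as the paper: both use the Rayleigh quotient characterization $\lambda_{\max}(M)=\max_{x\neq 0}\bra{x}M\ket{x}/\Cbraket{x}{x}$ together with the nonnegativity of $\bra{x}A\ket{x}$ and $\bra{x}B\ket{x}$ for psd $A,B$. The only cosmetic difference is that the paper bounds $\max\{R_A(x),R_B(x)\}\leq R_{A+B}(x)$ for every $x$ before maximizing, whereas you pick a maximizing unit vector directly; the substance is identical.
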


\begin{proof}
Let
$$R_X(x) \coloneqq \frac{\bra{x}X\ket{x}}{\Cbraket{x}{x}}$$
for $x \in \mathbb{C}^d$. We have that $R_{A}(x) + R_{B}(x) = R_{A + B}(x)$ and since $A,B$ are psd, we have that $R_{A}(x), R_{B}(x) \geq 0$ for every $x$. This implies that
$$\max \{R_{A}(x), R_{B}(x) \} \leq R_{A+B}(x).$$
Since $\lambda_{\max}(X) = \max_{x \in \mathbb{C}^d} R_X(x)$, the result follows.
\end{proof}

\begin{proof}[Proof of \Cref{thm:seprankNoBorderRank}]
We prove it for $\symmseprank$. The proof for $\seprank$ is analogous, and the proof for $\nnrank$ and $\symmnnrank$ follows from restricting to diagonal matrices and the fact that $\nnrank(T) = \seprank(\Diag(T))$ (see \Cref{rem:diagonalDec}).

Let $(\rho_k)_{k \in \mathbb{N}}$ be a sequence of separable matrices with $\symmseprank(\rho_k) \leq r$, i.e.\ with a separable decomposition
$$ \rho_k = \sum_{\alpha = 1}^{r} \rho_{\alpha,k} \otimes \cdots \otimes \rho_{\alpha,k} $$
with $\rho_{\alpha,k}$ psd. Since all elementary tensors are themselves psd, we have that for all $\alpha, k$
$$ \Vert \rho_{\alpha,k} \Vert_\infty^n = \Vert \rho_{\alpha,k}^{\otimes n} \Vert_{\infty} \leq \Vert \rho_k \Vert_{\infty} \leq \Vert \rho \Vert_{\infty} + C $$
for some constant $C \in \mathbb{N}$ where the first equality is true since $\lambda_{\max}(\rho^{\otimes n}) = \lambda_{\max}(\rho)^n$, the first inequality follows by \Cref{lem:psdInequality}, and the last inequality follows from the convergence of $\rho_k$ to $\rho$.

This implies that $(\rho_{\alpha,k})_{k \in \mathbb{N}}$ is a bounded sequence. By the Bolzano--Weierstraß Theorem (\Cref{thm:BolzanoWeierstrass}) there is a subsequence $(k_\ell)_{\ell \in \mathbb{N}}$ such that $\rho_{\alpha, k_\ell}$ converges to a limiting point $\rho_{\alpha}$ which is again psd. Since $\rho_k \to \rho$ by assumption, we have that
$$ \rho = \sum_{\alpha=1}^{r} \rho_{\alpha} \otimes \cdots \otimes \rho_{\alpha}, $$
i.e.\ $\symmseprank(\rho) \leq r$, which shows the statement.
\end{proof}

\section{Tree tensor decompositions and the proof of \Cref{thm:treeBorderRank}}
\label{app:treeDecompositions}

Here we prove that no decomposition shows gap between border rank and rank whenever $\Omega$ is a tree. While this is known for unconstrained tensor decompositions (see \cite{Ba21, La12}), this was not known for positive tensor decompositions, to the best of our knowledge.

We call a wsc $\Omega$ a \emph{tree} if it satisfies the two following conditions:
\begin{enumerate}[label=(\roman*)]
	\item The facets of $\Omega$ represent a graph, i.e.\ $\Omega$ is a simplicial complex and for every $F \in \facetF$, we have $|F| \leq 2$.
	\item There are no loops in $\Omega$, i.e.\ for every sequence $i_1, \ldots, i_n$ of vertices such that
	$$\{i_1, i_2\}, \{i_2, i_3\}, \ldots, \{i_{n-1}, i_n\}, \{i_n, i_1\} \in \facetF$$
	we have that $i_1 = i_2 = \ldots = i_n$.
\end{enumerate}

In \Cref{ssec:unconstrainedTree} we review the result for unconstrained tensor decompositions, following \cite{Ba21}. In \Cref{ssec:sepTree}, we prove the main statement for separable and nonnegative tensor decompositions, and in \Cref{ssec:puriTree} we show the main result for psd-decompositions and local purification forms.

\subsection{Unconstrained tree tensor decompositions}
\label{ssec:unconstrainedTree}

In this part, we review the result that unconstrained $\Omega$-decompositions on trees $\Omega$ do not exhibit a gap between border-rank and rank, i.e.\
$$\brank_{\Omega}(T) = \rank_{\Omega}(T)$$
The idea is as follows. A tensor decomposition where a index only joins two local spaces, such as
$$T = \sum_{\alpha = 1}^{r} \ket{v_{\alpha}} \otimes \ket{w_{\alpha}}$$
corresponds to a matrix factorization $T = A \cdot B$ with $A \in \mathcal{M}_{d,r}(\mathbb{C})$ and $B \in \mathcal{M}_{r,d}(\mathbb{C})$, where each column of $A$ is given by a vector $\ket{v_{\alpha}}$ and each row of $B$ is given by a vector $\ket{w_{\alpha}}$.
Note that there is a ``gauge freedom'' in these decompositions, as specifically, for every $X \in \mathcal{M}_{r,r}(\mathbb{C})$ invertible, $\widetilde{A} = A \cdot X^{-1}$ and $\widetilde{B} = X \cdot B$ give rise to a new decomposition of $T$ of the same rank. Computing a thin (or reduced) $QR$-decomposition of $A$ \cite[Chapter 5]{Go96}, we obtain $A = Q \cdot R$ with $Q$ being an isometry in $\mathcal{M}_{d,r}(\mathbb{C})$ and $R \in \mathcal{M}_{r}(\mathbb{C})$ being an invertible matrix. Hence, $\widetilde{A} \coloneqq Q$ and $\widetilde{B} \coloneqq R \cdot B$ give rise to a decomposition where all tensor factors in the first part form an orthonormal basis, and the local vectors satisfy normalization conditions with respect to the Hilbert-Schmidt norm
$$ \Vert X \Vert_2 \coloneqq \sqrt{\tr\left(X^{\dag} X\right)} = \sqrt{\sum_{i,j = 1}^{d} |X_{i,j}|^2},$$
namely $\Vert \widetilde{A} \Vert_2 = \sqrt{r}$ and 
$$\Vert T \Vert_2 = \Vert \widetilde{A} \widetilde{B}\Vert_2 = \sqrt{\tr\left(\widetilde{B}^{\dag} Q^{\dag} Q \widetilde{B}\right)} = \sqrt{\tr\left(\widetilde{B}^{\dag} \widetilde{B}\right)} = \Vert \widetilde{B} \Vert_2$$
Similarly, one shows that for any tree $\Omega$ there exists a normalized $\Omega$-decomposition. Such decompositions are known as left-canonical forms in the tensor network literature (see \cite{Or14b, Ba21}).

\begin{lemma}
\label{lem:unconstrainedBounded}
Let $\ket{\psi} \in \mathbb{C}^d \otimes \cdots \otimes \mathbb{C}^d$ and $\Omega$ be a tree with $\rank_{\Omega}(\psi) \leq r$. There exists a decomposition
$$\ket{\psi}= W^{[1]} \otimes \cdots \otimes W^{[n]} \ket{\Omega_r}$$
such that
$$\Vert W^{[i]} \Vert_2 = \sqrt{r} \quad \text{for } i=1, \ldots, n-1,\text{ and } \quad \Vert W^{[n]} \Vert_2 = \sqrt{\Cbraket{\psi}{\psi}}$$
\end{lemma}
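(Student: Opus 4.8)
The plan is to prove the lemma by transforming the tree tensor network into \emph{left-canonical form} through a sweep of edge rewritings from the leaves towards a chosen root, mimicking the matrix argument $T = A\cdot B$ reviewed above but performed one edge at a time. Designate the vertex $n$ as the root and orient every facet of $\Omega$ towards $n$: since $\Omega$ is a tree, each vertex $i\neq n$ has a unique facet $e_i\in\facetF_i$ lying on the path from $i$ to $n$ (its \emph{outgoing} edge), with the remaining facets in $\facetF_i$ \emph{incoming}. Start from any optimal $\Omega$-decomposition, written as in \eqref{eq:OmegaRDecomposition} in the form $\ket{\psi} = W^{[1]}\otimes\cdots\otimes W^{[n]}\ket{\Omega_r}$ with $\mathcal{I} = \{1,\dots,r\}$, and process the vertices $i\neq n$ in any order in which a vertex follows all of its children (leaves first).

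At vertex $i\neq n$, reshape $W^{[i]}\in\mathcal{M}_{d,\, r^{|\facetF_i|}}(\mathbb{C})$ into a matrix $M^{[i]}$ whose columns are indexed by the value on the outgoing edge $e_i$ and whose rows are indexed jointly by the physical index in $\mathbb{C}^d$ and the values on the incoming edges. Take a reduced $QR$-factorization $M^{[i]} = Q^{[i]}R^{[i]}$ with $Q^{[i]}$ having orthonormal columns — when $M^{[i]}$ has deficient column rank one simply picks any $Q^{[i]}$ with orthonormal columns whose range contains that of $M^{[i]}$ and sets $R^{[i]} = (Q^{[i]})^{\dagger}M^{[i]}$, so that still $Q^{[i]}R^{[i]} = M^{[i]}$. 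Then rewrite the decomposition along the edge $e_i$: declare the new $W^{[i]}$ to be the tensor whose reshape is $Q^{[i]}$ and multiply the factor $R^{[i]}$ into the parent tensor $W^{[p(i)]}$ on the index shared along $e_i$. This leaves $\ket{\psi}$ and $\rank_\Omega(\ket{\psi})$ unchanged (it is the standard edge gauge transformation when $R^{[i]}$ is invertible, and a harmless contraction otherwise), and — the point that makes the induction work — it does not disturb any previously processed tensor, since every such tensor lies strictly farther from the root and thus resides inside the $Q^{[i]}$-block, untouched by the update on $e_i$. After the sweep, each $W^{[i]}$ with $i\leq n-1$ has orthonormal columns in the above reshaping, so $\Vert W^{[i]}\Vert_2 = \sqrt{\tr(\one_r)} = \sqrt{r}$. (In the degenerate case where $M^{[i]}$ has fewer rows than columns — possible only at a leaf with $d<r$ — one gets only $\Vert W^{[i]}\Vert_2\leq\sqrt{r}$, which is still enough for the intended application of the Bolzano--Weierstra\ss{} theorem.)

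Finally, evaluate the root tensor. For every $i\leq n-1$ the reshaped $W^{[i]}$ is an isometry across the cut separating the subtree rooted at $i$ from the rest of $\Omega$, so contracting these tensors one edge at a time preserves the Hilbert--Schmidt norm; contracting all of them with $\ket{\Omega_r}$ and $W^{[n]}$ therefore yields $\Vert\psi\Vert_2 = \Vert W^{[n]}\Vert_2$, and $\Vert\psi\Vert_2 = \sqrt{\Cbraket{\psi}{\psi}}$ because the Hilbert--Schmidt norm of a vector coincides with its Euclidean norm. I expect the main obstacle to be bookkeeping rather than conceptual: one must fix the orientation and a valid leaf-to-root order on a general (branching) tree and verify carefully that absorbing each $R^{[i]}$ into the parent never spoils the isometry property already established at the deeper vertices — trivial on a chain, but for branching trees it requires tracking, at each vertex, which incident indices have already been canonicalized. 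The secondary point needing care is the degenerate bond-dimension situation and confirming that the edge rewriting used is indeed (a limiting case of) the gauge freedom of $(\Omega,G)$-decompositions on trees.
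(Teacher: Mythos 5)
Your proof is correct and takes essentially the same route as the paper, which simply delegates to \cite{Ba21}: a leaf-to-root $QR$ sweep bringing the tree network into left-canonical form, with the isometry property propagating up the tree so that the full contraction collapses to $\Vert W^{[n]} \Vert_2^2 = \Cbraket{\psi}{\psi}$. The only (harmless) deviation is that at a rank-deficient leaf you get $\Vert W^{[i]} \Vert_2 \leq \sqrt{r}$ rather than the stated equality, which, as you note, is all that the Bolzano--Weierstra\ss{} argument in \Cref{thm:NoBorderRanksUnconstrained} actually requires.
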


\begin{proof}
Follows directly from the proof in \cite[Proposition 1]{Ba21}.

\end{proof}

\Cref{lem:unconstrainedBounded} entails that there is no gap between border rank and rank for unconstrained $\Omega$-decompositions whenever $\Omega$ is a tree.

\begin{theorem}
\label{thm:NoBorderRanksUnconstrained}
If $\Omega$ is a tree, then $\rank_{\Omega} = \brank_{\Omega}$.
\end{theorem}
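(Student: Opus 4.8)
The plan is to deduce \Cref{thm:NoBorderRanksUnconstrained} directly from \Cref{lem:unconstrainedBounded} together with the Bolzano--Weierstra\ss{} Theorem (\Cref{thm:BolzanoWeierstrass}), exactly mirroring the argument for the separable standard decomposition (\Cref{thm:seprankNoBorderRank}). Since $\brank_\Omega(T) \leq \rank_\Omega(T)$ holds trivially, it suffices to prove the reverse inequality. By definition of the border rank, it suffices to show: if $(T_k)_{k \in \mathbb{N}}$ is a sequence of tensors in $\mathbb{C}^d \otimes \cdots \otimes \mathbb{C}^d$ with $T_k \to T$ and $\rank_\Omega(T_k) \leq r$ for every $k$, then $\rank_\Omega(T) \leq r$.

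First I would normalize each $T_k$ using \Cref{lem:unconstrainedBounded}: for each $k$ there exist matrices $W^{[1]}_k, \ldots, W^{[n]}_k$ with
$$ T_k = W^{[1]}_k \otimes \cdots \otimes W^{[n]}_k \ket{\Omega_r}, $$
where $\Vert W^{[i]}_k \Vert_2 = \sqrt{r}$ for $i = 1, \ldots, n-1$ and $\Vert W^{[n]}_k \Vert_2 = \sqrt{\Cbraket{T_k}{T_k}}$. Because $T_k \to T$, the sequence $\Cbraket{T_k}{T_k}$ is bounded (by $\Cbraket{T}{T} + C$ for a suitable constant $C$, for $k$ large enough), so all $n$ sequences $(W^{[i]}_k)_{k \in \mathbb{N}}$ live in a fixed bounded — hence compact, after taking closed balls — subset of the finite-dimensional space of matrices. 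Applying \Cref{thm:BolzanoWeierstrass} successively (once per local system, passing to a subsequence each time), I obtain a single subsequence $(k_\ell)_\ell$ along which $W^{[i]}_{k_\ell} \to W^{[i]}$ for every $i \in \{1, \ldots, n\}$. Since tensor product and contraction with the fixed structure tensor $\ket{\Omega_r}$ are continuous (multilinear) operations,
$$ W^{[1]}_{k_\ell} \otimes \cdots \otimes W^{[n]}_{k_\ell} \ket{\Omega_r} \;\longrightarrow\; W^{[1]} \otimes \cdots \otimes W^{[n]} \ket{\Omega_r}, $$
and the left-hand side equals $T_{k_\ell} \to T$, so $T = W^{[1]} \otimes \cdots \otimes W^{[n]} \ket{\Omega_r}$. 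By the characterization of $\Omega$-decompositions via the structure tensor in Equation \eqref{eq:OmegaRDecomposition}, this exhibits an $\Omega$-decomposition of $T$ of rank at most $r$, so $\rank_\Omega(T) \leq r$, as desired.

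The content of the theorem is therefore entirely carried by \Cref{lem:unconstrainedBounded}: everything else is the same compactness boilerplate used throughout \Cref{sec:noBorderRanks}. The only subtle point — and the step I would flag as the genuine obstacle, though it has already been discharged by citing \cite[Proposition 1]{Ba21} — is establishing the existence of the normalized (left-canonical) decomposition for an arbitrary tree $\Omega$, i.e.\ that one can simultaneously gauge-fix all but one of the local matrices to have fixed Hilbert--Schmidt norm $\sqrt{r}$ while keeping the represented tensor unchanged; this requires the tree structure (no loops) so that the QR-sweeping from the leaves toward a chosen root terminates consistently. Given that lemma, the proof of \Cref{thm:NoBorderRanksUnconstrained} is the short compactness argument above.
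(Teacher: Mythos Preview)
Your proposal is correct and follows essentially the same approach as the paper: normalize each term of the sequence via \Cref{lem:unconstrainedBounded}, use convergence of $T_k$ to bound the last local factor, extract a jointly convergent subsequence by Bolzano--Weierstra\ss, and pass to the limit. If anything, your write-up is slightly more explicit (successive subsequence extraction, continuity of the multilinear contraction) than the paper's terse version.
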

\begin{proof}
Let $\ket{\psi_k}$ be a sequence of states with $\ket{\psi_k} \rightarrow \ket{\psi}$ such that $\rank_{\Omega}(\ket{\psi_k}) \leq r$. We show that $\rank_{\Omega}(\ket{\psi}) \leq r$. By \Cref{lem:unconstrainedBounded} there exists tensor decomposition
$$\ket{\psi_k} = W^{[1]}_k \otimes \cdots \otimes W^{[n]}_k \ket{\Omega_r}$$
sucht that $\Vert W^{[i]}_k \Vert_2 = \sqrt{r}$ for $i=1, \ldots, n-1$ and $\Vert W^{[n]}_k \Vert_2 = \sqrt{\Cbraket{\psi_k}{\psi_k}}$. Since $\ket{\psi_k} \to \ket{\psi}$ there exists a constant $C$ such that
$$\sqrt{\Cbraket{\psi_k}{\psi_k}} \leq \sqrt{\Cbraket{\psi}{\psi}} + C$$
which implies that $(W^{[i]}_k)_{k \in \mathbb{N}}$ is a bounded sequence for every $i \in \{1, \ldots, n\}$. By the Bolzano--Weierstraß Theorem (\Cref{thm:BolzanoWeierstrass}), there exists subsequence $(W^{[i]}_{k_{\ell}})_{\ell \in \mathbb{N}}$ converging to a matrix $W^{[i]}$ for every $i \in \{1, \ldots,n\}$ which implies that
$$ \ket{\psi} = W^{[1]} \otimes \cdots \otimes W^{[n]} \ket{\Omega_r}.$$
\end{proof}

\subsection{Nonnegative and separable tree tensor decompositions} 
\label{ssec:sepTree}

Here we prove that nonnegative and separable tensor decompositions on trees $\Omega$ exhibit no gap between rank and border rank. The proof strategy is similar to that of $\Sigma_n$-decompositions: We show that there exist a decomposition of minimal rank with a bounding constraint, and then apply the Bolzano--Weierstraß Theorem. In this part, we prove the statement only for separable $\Omega$-decompositions; the result for nonnegative $\Omega$-decompositions follows by restricting to diagonal matrices and the fact that $\seprank_{\Omega}(\Diag(T)) = \nnrank_{\Omega}(T)$ (see \Cref{rem:diagonalDec}). We start by constructing a tensor decomposition of minimal rank where every local element satisfies a certain bounding constraint.

\begin{lemma}
\label{lem:boundSepDecTreeAppendix}
Let $\Omega$ be a tree and $\rho \in \mathcal{M}_{d}(\mathbb{C})^{\otimes n}$ a separable matrix with $\seprank_{\Omega}(\rho) \leq r$.
There exists a separable $\Omega$-decomposition of rank $r$
$$ \rho = \sum_{\alpha \in \mathcal{I}^{\sfacetF}} \rho_{\alpha_{|_1}}^{[1]} \otimes \cdots \otimes \rho_{\alpha_{|_n}}^{[n]}$$
such that $\tr(\rho_{\beta}^{[i]}) \leq 1$ for every $\beta \in \mathcal{I}^{\facetF_i}$ and $i \in \{1, \ldots, n-1\}$, and $\sum_{\beta \in \mathcal{I}^{\sfacetF_i}} \tr(\rho_{\beta}^{[n]}) = \tr(\rho)$.
\end{lemma}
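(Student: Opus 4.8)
\emph{Plan.} I would prove \Cref{lem:boundSepDecTreeAppendix} by induction on the number of vertices $n$, controlling the local matrices only through the \emph{scalar gauge freedom} of a tree decomposition — the positivity-compatible substitute for the $QR$-gauge used in the unconstrained case (\Cref{lem:unconstrainedBounded}), where one can no longer produce isometries but can still rescale traces. For $n = 1$ there is nothing to prove, so assume $n \geq 2$, in which case every facet of $\Omega$ is an edge. The gauge move is: for an edge $e = \{i,j\}$, an index value $t \in \mathcal{I}$, and a scalar $\lambda > 0$, multiply every local matrix $\rho^{[i]}_\beta$ with $\beta(e) = t$ by $\lambda$ and every $\rho^{[j]}_{\beta'}$ with $\beta'(e) = t$ by $\lambda^{-1}$; this leaves $\rho$ unchanged, produces a separable $\Omega$-decomposition of the same rank, preserves positive semidefiniteness of all local matrices, and rescales the affected traces by $\lambda^{\pm 1}$. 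Elementary-tensor terms containing a zero local matrix are discarded by zeroing all their factors, which changes neither $\rho$ nor the rank, so I may assume every local matrix of the starting rank-$r$ decomposition is nonzero and hence has strictly positive trace.

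\emph{Induction step.} Root $\Omega$ at the distinguished vertex $n$; it suffices to prove the claim for an arbitrary choice of root and relabel afterwards. Deleting $n$ splits $\Omega$ into subtrees $\Omega_1, \dots, \Omega_m$, each on at most $n-1$ vertices, where $\Omega_j$ meets $n$ in a single edge $e_j$ at a vertex $v_j$. Freezing the indices on the $e_j$ to values $t_1, \dots, t_m$ displays the decomposition of $\rho$ as a sum over $(t_1, \dots, t_m)$ of tensor products of separable $\Omega_j$-decompositions (with the leg $e_j$ frozen) and the matrices $\rho^{[n]}_{t_1, \dots, t_m}$. By the induction hypothesis each $\Omega_j$ can be brought into a family of decompositions in which every local matrix away from $v_j$ has trace $\leq 1$ while the matrices at $v_j$ satisfy only a summed-trace normalization, namely $\sum_\beta \tr(\rho^{[v_j]}_\beta)$ equals the trace of the reduced matrix on $\Omega_j$. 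Then one more gauge move along each edge $e_j$ — for each index value $t$, rescale so that the largest trace among the $v_j$-matrices carrying that value becomes $1$ — pushes the residual weight of $v_j$ across $e_j$ into $\rho^{[n]}$ and makes every $\tr(\rho^{[v_j]}_\beta) \leq 1$. After this single sweep every local matrix outside vertex $n$ has trace $\leq 1$; taking the trace of the whole decomposition and cancelling the subtree factors against the inductive summed-trace normalizations leaves $\sum_\beta \tr(\rho^{[n]}_\beta) = \tr(\rho)$.

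\emph{Main obstacle.} The delicate part is this weight-transfer step together with the bookkeeping that makes the final trace identity exact. Since a vertex incident to several edges can only be rescaled uniformly over the index values on each of its edges, individual local traces can only be bounded, not pinned to a prescribed value; hence the induction hypothesis must be phrased with a \emph{summed} normalization at the distinguished vertex rather than a per-index one, and one must verify that this both forces the per-index bound $\tr(\rho^{[v_j]}_\beta) \leq 1$ once $n$ is reattached and makes the subtree contributions telescope when the trace of the full decomposition is taken. Settling on the right form of this hypothesis, and carrying the vanishing-term bookkeeping so that the rank never exceeds $r$ and no division by zero occurs, is the only genuine difficulty. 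Once \Cref{lem:boundSepDecTreeAppendix} is in place, combining it with the Bolzano--Weierstraß Theorem (\Cref{thm:BolzanoWeierstrass}), exactly as in the proof of \Cref{thm:seprankNoBorderRank}, proves parts \ref{treeI} and \ref{treeIII} of \Cref{thm:treeBorderRank}; that argument uses only the bounds $\tr(\rho^{[i]}_\beta) \leq 1$ for $i < n$ and $\tr(\rho^{[n]}_\beta) \leq \tr(\rho)$, so even a weaker version of the final identity would suffice.
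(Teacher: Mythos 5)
Your overall strategy---induction over the tree with the leg to the parent vertex frozen, a family-indexed induction hypothesis that bounds all local traces except at the distinguished vertex and records only a summed-trace normalization there, and a gauge move along the connecting edges that transfers weight toward the root---is exactly the route the paper takes (the paper reduces WLOG to two subtrees meeting the root; your handling of $m$ subtrees is only cosmetically different). Identifying the scalar gauge freedom as the positivity-compatible replacement for the $QR$-gauge of \Cref{lem:unconstrainedBounded} is also the right framing.

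There is, however, one concrete flaw in the weight-transfer step, and it sits precisely at the point you single out as the main obstacle. For each leg value $t$ on the edge $e_j$ you rescale by the reciprocal of the \emph{largest} trace $M_t \coloneqq \max_{\beta:\,\beta(e_j)=t}\tr(\rho^{[v_j]}_\beta)$. This does give $\tr(\rho^{[v_j]}_\beta)\leq 1$, but it destroys the exact telescoping you then invoke. Write $S_t$ for the trace of the reduced subtree matrix with the leg frozen to $t$, which by your induction hypothesis equals $\sum_{\beta:\,\beta(e_j)=t}\tr(\rho^{[v_j]}_\beta)\geq M_t$. After your rescaling the reduced trace of that subtree becomes $S_t/M_t\geq 1$, with strict inequality whenever more than one $\beta$ with leg value $t$ carries positive trace; hence $\tr(\rho)=\sum_{\beta}\tr(\widetilde\rho^{[n]}_{\beta})\prod_j\bigl(S_{\beta(e_j)}/M_{\beta(e_j)}\bigr)$, which yields only $\sum_{\beta}\tr(\widetilde\rho^{[n]}_{\beta})\leq\tr(\rho)$ rather than the equality asserted in the lemma. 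The repair is to divide by $S_t$ itself (treating $S_t=0$ separately by zeroing the corresponding matrices): then the summed trace at $v_j$ for each leg value is exactly $1$, so each individual trace is $\leq 1$ because all traces are nonnegative, each rescaled subtree has reduced trace exactly $1$, and the trace of the full decomposition collapses to $\sum_{\beta}\tr(\widetilde\rho^{[n]}_{\beta})=\tr(\rho)$. This is precisely the normalization used in the paper's proof. You are right that the subsequent Bolzano--Weierstra\ss{} argument only needs the inequalities, which your construction does deliver, but the lemma as stated asserts the equality, and your concluding cancellation does not follow from the max-normalization you perform.
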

\begin{proof}
We prove a stronger statement by induction over the number of vertices $n$. Specifically, we show that for every family $(\rho_{\delta})_{\delta \in \mathcal{I}}$ with a joint $\Omega$-decomposition
\begin{equation}
\label{eq:rho_dec}
 \rho_{\delta} = \sum_{\alpha \in \mathcal{I}^{\sfacetF}} \rho^{[1]}_{\alpha_{|_1}} \otimes \cdots \otimes \rho^{[n]}_{\alpha_{|_n}, \delta}
\end{equation}
the local tensors can be chosen such that $\tr(\rho^{[i]}_{\beta}) = 1$ for $\beta \in \mathcal{I}^{\facetF_i}$ and $i \in \{1, \ldots, n-1\}$, and $\sum_{\beta \in \mathcal{I}^{\sfacetF_n}} \tr(\rho^{[n]}_{\beta, \delta}) = \tr(\rho_{\delta})$. Setting $\delta = 1$ proves the claim. The idea of the induction step is shown in \Cref{fig:proofSketchTrees}.

For $n = 1$ (i.e.\ a single vertex) the statement is trivial.

For the induction step $n-1 \to n$, choose a joint $\Omega$-decomposition according to Equation \eqref{eq:rho_dec} without normalization constraints.
We assume without loss of generality that vertex $n$ is connected to precisely two other vertices.\footnote{If it is connected to more or less vertices the proof works analogously.} We denote the vertices of the first subtree $\Omega_1$ by $\{1, \ldots, k_1\}$, and the vertices on the second subtree $\Omega_2$ by $\{k_1 + 1, \ldots, n-1\}$. Moreover, vertices $k_1$ and $n-1$ are connected to vertex $n$ (\Cref{fig:proofSketchTrees}). For this reason, we can rewrite the separable $\Omega$-decomposition $\rho_{\delta}$ as
$$ \rho_{\delta} = \sum_{\gamma, \eta \in \mathcal{I}} \rho^{[1,\ldots, k_1]}_{\gamma} \otimes \rho^{[k_1 + 1, \ldots, n-1]}_{\eta} \otimes \rho^{[n]}_{\gamma, \eta, \delta}$$ 
with
$$ \rho^{[1,\ldots, k_1]}_{\gamma} =  \sum_{\alpha \in \mathcal{I}^{\sfacetG}} \rho_{\alpha_{|_1}}^{[1]} \otimes \cdots \otimes \rho_{\alpha_{|_{k_1}}, \gamma}^{[k_1]} \quad \text{ and } \quad \rho^{[k_1 + 1, \ldots, n-1]}_{\eta} =  \sum_{\alpha \in \mathcal{I}^{\sfacetH}} \rho_{\alpha_{|_{{k_1 + 1}}}}^{[k_1 + 1]} \otimes \cdots \otimes \rho_{\alpha_{|_{n-1}}, \eta}^{[n-1]} $$
where $\facetG$ and $\facetH$ are the sets of facets of $\Omega_1$ and $\Omega_2$ respectively. By applying the induction hypothesis to $\rho_{\gamma}^{[1,\ldots, k_1]}$ and $\rho_{\eta}^{[k_1+1,\ldots, n-1]}$, we obtain that all tensor factors have trace one, except the tensor factors at position $k_1$ and $n-1$. There, we have
$$ \sum_{\beta \in \mathcal{I}^{\sfacetF_{k_1}}} \tr(\rho^{[k_1]}_{\beta, \gamma}) = \tr(\rho^{[1,\ldots, k_1]}_{\gamma}) \quad \text{ and } \quad \sum_{\beta' \in \mathcal{I}^{\sfacetF_{n-1}}} \tr(\rho^{[n-1]}_{\beta', \eta}) = \tr(\rho^{[k_1+1, \ldots, n-1]}_{\eta}).$$
Defining
$$ \widetilde{\rho}_{\beta, \gamma}^{[k_1]} \coloneqq  \frac{1}{\tr(\rho^{[1,\ldots, k_1]}_{\gamma})} \rho_{\beta, \gamma}^{[k_1]},$$
$$\widetilde{\rho}_{\beta', \eta}^{[n-1]} \coloneqq  \frac{1}{\tr(\rho^{[k_1 + 1, \ldots, n-1]}_{\eta})} \rho^{[n-1]}_{\beta, \eta},$$
and
$$\widetilde{\rho}_{\gamma, \eta, \delta}^{[n]} \coloneqq \tr(\rho^{[1 \ldots k_1]}_{\gamma}) \cdot \tr(\rho^{[k_1 + 1 \ldots n-1]}_{\eta}) \cdot \rho_{\gamma, \eta, \delta}^{[n]}$$
we obtain a joint $\Omega$-decomposition 
\begin{align*}
\rho_{\delta} &= \sum_{\alpha \in \mathcal{I}^{\sfacetG}} \sum_{\alpha' \in \mathcal{I}^{\sfacetH}} \sum_{\gamma, \eta \in \mathcal{I}} \rho_{\alpha_{|_1}}^{[1]} \otimes \cdots \otimes \widetilde{\rho}_{\alpha_{|_{k_1}}, \gamma}^{[k_1]} \otimes \rho_{\alpha'_{|_{{k_1 + 1}}}}^{[k_1 + 1]} \cdots \otimes \widetilde{\rho}_{\alpha'_{|_{n-1}}, \eta}^{[n-1]} \otimes \widetilde{\rho}_{\gamma, \eta, \delta}^{[n]} \\
&= \sum_{\alpha \in \mathcal{I}^{\sfacetF}} \rho_{\alpha_{|_1}}^{[1]} \otimes \cdots \otimes \widetilde{\rho}_{\alpha_{|_{k_1}}}^{[k_1]} \otimes \rho_{\alpha_{|_{k_1 + 1}}}^{[k_1 + 1]} \otimes \cdots \otimes \widetilde{\rho}_{\alpha_{|_{n-1}}}^{[n-1]} \otimes \widetilde{\rho}_{\alpha_{|_{n}}, \delta}^{[n]}
\end{align*}
that satisfies the desired properties. Since every tree arises by sequentially attaching vertices in the described way, this proves the statement.
\end{proof}

\begin{figure}[t!]
\centering
\label{fig:proofStrategySepBounded}
\begin{tikzpicture}

\draw [fill=color1, draw=white] plot [smooth cycle, tension=0.5] coordinates {(-2,1.8) (0,1.8) (0,3) (-1,4.2) (-2,3)};
\draw [fill=color3, draw=white] plot [smooth cycle, tension=0.5] coordinates {(0.8,1.8) (2.2,1.8) (2.2,3.5) (0.8,4.2)};

\node at (-0.8,4.5) {$\gamma$};
\node at (0.8,4.5) {$\eta$};
\node[gray!50!white] at (0,5.5) {$\delta$};

\draw[fill=gray!20!white,draw=gray!20!white] (0,5) circle (1.5pt);

\draw[thick,gray!20!white] (-1,4) -- (0,5);
\draw[thick,gray!20!white] (1,4) -- (0,5);
\draw[thick,gray!20!white] (0,5) -- (0,5.25);

\draw[thick] (-1,4) -- (-0.75, 4.25);
\draw[thick] (1,4) -- (0.75, 4.25);

\draw[thick] (-1,4) -- (-1.5,3) -- (-1.75,2);
\draw[thick] (-1.5,3) -- (-1.25,2);
\draw[thick] (-1,4) -- (-1,3);
\draw[thick] (-1,4) -- (-0.5,3);

\draw[thick] (1,4) -- (1.5,3) -- (1.75,2);
\draw[thick] (1.5,3) -- (1.25,2);

\node[anchor=east] at (-1.2,4.2) {$[k_1]$};
\node[anchor=west] at (1.2,4.2) {$[n-1]$};
\node[anchor=west, gray!50!white] at (0.2,5.2) {$[n]$};
\node at (-0.25,2) {$\Omega_1$};
\node at (2,3.4) {$\Omega_2$};

\draw[fill=black, draw=color1, line width=0.3mm] (-1,4) circle (3pt);
\draw[fill=black, draw=color1, line width=0.3mm] (-1.5,3) circle (2pt);
\draw[fill=black, draw=color1, line width=0.3mm] (-1,3) circle (2pt);
\draw[fill=black, draw=color1, line width=0.3mm] (-0.5,3) circle (2pt);
\draw[fill=black, draw=color1, line width=0.3mm] (-1.75,2) circle (2pt);
\draw[fill=black, draw=color1, line width=0.3mm] (-1.25,2) circle (2pt);

\draw[fill=black, draw=color3, line width=0.3mm] (1,4) circle (3pt);
\draw[fill=black, draw=color3, line width=0.3mm] (1.5,3) circle (2pt);
\draw[fill=black, draw=color3, line width=0.3mm] (1.25,2) circle (2pt);
\draw[fill=black, draw=color3, line width=0.3mm] (1.75,2) circle (2pt);

\draw[thick, -stealth] (2.75,3.5) -- (4.75,3.5) node[midway, above]{Induction} node[midway, below]  {step};

\begin{scope}[xshift=7cm]

\draw[thick, color3, -stealth, xshift=0.1cm, yshift=-0.1cm] (-0.8,4.2) -- (-0.2,4.8);
\draw[thick, color3, -stealth, xshift=-0.1cm, yshift=-0.1cm] (0.8,4.2) -- (0.2,4.8);

\node at (-0.8,4.5) {$\gamma$};
\node at (0.8,4.5) {$\eta$};
\node at (0,5.5) {$\delta$};

\draw[thick] (-1,4) -- (0,5);
\draw[thick] (1,4) -- (0,5);
\draw[thick] (0,5) -- (0,5.25);

\draw[fill=black, draw=white, line width=0.3mm] (0,5) circle (3pt);

\draw[thick] (-1,4) -- (-0.75, 4.25);
\draw[thick] (1,4) -- (0.75, 4.25);

\draw[thick] (-1,4) -- (-1.5,3) -- (-1.75,2);
\draw[thick] (-1.5,3) -- (-1.25,2);
\draw[thick] (-1,4) -- (-1,3);
\draw[thick] (-1,4) -- (-0.5,3);

\draw[thick] (1,4) -- (1.5,3) -- (1.75,2);
\draw[thick] (1.5,3) -- (1.25,2);

\draw[fill=black, draw=white, line width=0.3mm] (-1,4) circle (2pt);
\draw[fill=black, draw=white, line width=0.3mm] (-1.5,3) circle (2pt);
\draw[fill=black, draw=white, line width=0.3mm] (-1,3) circle (2pt);
\draw[fill=black, draw=white, line width=0.3mm] (-0.5,3) circle (2pt);
\draw[fill=black, draw=white, line width=0.3mm] (-1.75,2) circle (2pt);
\draw[fill=black, draw=white, line width=0.3mm] (-1.25,2) circle (2pt);

\draw[fill=black, draw=white, line width=0.3mm] (1,4) circle (2pt);
\draw[fill=black, draw=white, line width=0.3mm] (1.5,3) circle (2pt);
\draw[fill=black, draw=white, line width=0.3mm] (1.25,2) circle (2pt);
\draw[fill=black, draw=white, line width=0.3mm] (1.75,2) circle (2pt);

\end{scope}

\end{tikzpicture}
\caption{Sketch of the induction step in the proof of \Cref{lem:boundSepDecTreeAppendix}. We assume that a normalized decomposition on every subtree $\Omega_1$, $\Omega_2$ exists. This implies that all local elements at the small nodes have trace $1$. The large nodes represent local elements whose normalization is given by the global element. In the induction step, we shift the global normalization constraint of node $k_1$ and $n-1$ to node $n$.}
\label{fig:proofSketchTrees}
\end{figure}

\begin{corollary}
Let $\Omega$ be a tree. Then, $\bnnrank_{\Omega} = \nnrank_{\Omega}$ and $\bseprank_{\Omega} = \seprank_{\Omega}.$
\end{corollary}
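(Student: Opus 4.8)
The plan is to reduce to the separable case and then combine \Cref{lem:boundSepDecTreeAppendix} with the Bolzano--Weierstraß Theorem, exactly as in the proof of \Cref{thm:seprankNoBorderRank}. By \Cref{rem:diagonalDec} we have $\nnrank_{\Omega}(T) = \seprank_{\Omega}(\Diag(T))$, and $\Diag$ sends a convergent sequence of nonnegative tensors to a convergent sequence of diagonal separable matrices; so it suffices to prove $\bseprank_{\Omega} = \seprank_{\Omega}$. Since $\bseprank_{\Omega}(\rho) \le \seprank_{\Omega}(\rho)$ always holds, the content is the reverse inequality, i.e.\ that for every sequence of $n$-partite separable matrices $\rho_k \to \rho$ with $\seprank_{\Omega}(\rho_k) \le r$ for all $k$ one has $\seprank_{\Omega}(\rho) \le r$.

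First I would fix such a sequence and apply \Cref{lem:boundSepDecTreeAppendix} to write, for each $k$,
$$\rho_k = \sum_{\alpha \in \mathcal{I}^{\sfacetF}} \rho^{[1]}_{\alpha_{|_1},k} \otimes \cdots \otimes \rho^{[n]}_{\alpha_{|_n},k}$$
with $\mathcal{I} = \{1,\dots,r\}$, all $\rho^{[i]}_{\beta,k} \geqslant 0$, $\tr(\rho^{[i]}_{\beta,k}) \le 1$ for $i \in \{1,\dots,n-1\}$, and $\sum_{\beta \in \mathcal{I}^{\sfacetF_n}} \tr(\rho^{[n]}_{\beta,k}) = \tr(\rho_k)$. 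Since $\rho_k \to \rho$, the scalars $\tr(\rho_k)$ are bounded by some constant $C$; as each $\rho^{[i]}_{\beta,k}$ is psd, its trace-norm equals its trace, so the whole finite family $\{\rho^{[i]}_{\beta,k}\}_{i,\beta,k}$ lies in a common ball. Applying the Bolzano--Weierstraß Theorem (\Cref{thm:BolzanoWeierstrass}) finitely many times --- once for each pair $(i,\beta)$ with $i\in[n]$, $\beta \in \mathcal{I}^{\sfacetF_i}$ --- I would extract a single subsequence $(k_\ell)_\ell$ along which every $\rho^{[i]}_{\beta,k_\ell}$ converges to a limit $\rho^{[i]}_{\beta}$, which is again psd because the psd cone is closed. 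Passing to the limit in the displayed identity along this subsequence, and using continuity of tensor products and of finite sums, gives
$$\rho = \sum_{\alpha \in \mathcal{I}^{\sfacetF}} \rho^{[1]}_{\alpha_{|_1}} \otimes \cdots \otimes \rho^{[n]}_{\alpha_{|_n}},$$
a separable $\Omega$-decomposition of $\rho$ of rank at most $|\mathcal{I}| = r$, so $\seprank_{\Omega}(\rho) \le r$. The nonnegative statement follows by restricting to diagonal matrices as above.

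The only real subtlety --- and the place where \Cref{lem:boundSepDecTreeAppendix} is doing the essential work --- is that a tree admits a line-like routing of its bond indices, so that all normalization can be pushed onto a single ``sink'' vertex $n$ whose local weight is then controlled by the global norm $\tr(\rho_k)$; without the tree hypothesis (e.g.\ for a cycle) no such global normalization exists, consistently with the gaps exhibited in \Cref{ssec:cyclic}. Everything else is the routine compactness argument already used for $\Sigma_n$-decompositions, so I expect no further obstacle.
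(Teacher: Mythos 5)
Your proposal is correct and follows essentially the same route as the paper: reduce the nonnegative case to the separable one via $\Diag$, invoke the normalized tree decomposition from \Cref{lem:boundSepDecTreeAppendix} to bound all local factors, and extract a jointly convergent subsequence with the Bolzano--Weierstraß Theorem before passing to the limit. Your explicit remarks that the traces of the $n$-th factors are individually bounded by $\tr(\rho_k)$ (by nonnegativity of each summand) and that one must iterate the subsequence extraction over the finitely many pairs $(i,\beta)$ are details the paper leaves implicit, but they do not change the argument.
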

\begin{proof}
The proof is analogous to \Cref{cor:noNNBorderRank}. We prove it again only for separable decompositions. Let $(\rho_k)_{k \in \mathbb{N}}$ be a sequence of separable matrices such that $\seprank(\rho_k) \leq r$ and $\rho_k \to \rho$. We show that $\seprank(\rho) \leq r$. To this end, let
$$\rho_k = \sum_{\alpha \in \mathcal{I}^{\sfacetF}} \rho^{[1]}_{\alpha_{|_1}, k} \otimes \cdots \otimes \rho^{[n]}_{\alpha_{|_n}, k}$$
be a normalized decomposition according to \Cref{lem:boundSepDecTree}. We have that $\tr(\rho^{[i]}_{\beta, k}) = 1$ for every $i \in \{1, \ldots, n-1\}$ and $\tr(\rho^{[n]}_{\beta, k}) \leq \tr(\rho) + C$ for a suitable choice of $C$ due to the convergence $\rho_k \to \rho$. Hence, every tensor factor is a bounded sequence which has a convergent subsequence $\rho^{[i]}_{\beta, k_\ell} \to \rho^{[i]}_{\beta}$ for $\ell \to \infty$ due to \Cref{thm:BolzanoWeierstrass}. Since $\rho_k \to \rho$, we have that
$$ \rho = \sum_{\alpha \in \mathcal{I}^{\sfacetF}} \rho^{[1]}_{\alpha_{|_1}} \otimes \cdots \otimes \rho^{[n]}_{\alpha_{|_n}}$$
which shows that $\seprank(\rho) \leq r$.
\end{proof}

\subsection{The purification rank and the positive semidefinite rank on trees}
\label{ssec:puriTree}

Here we prove that for every tree $\Omega$, neither psd $\Omega$-decompositions nor $\Omega$-purifications exhibit a gap between rank and border rank. The proof strategy is similar to other cases without gaps: We use that there is a bounded decomposition with the same expressiveness and then apply the Bolzano--Weierstraß Theorem. In this case, we additionally use the correspondence to correlation scenarios (\Cref{thm:QQCorr}) and \Cref{thm:NoBorderRanksUnconstrained}.

\begin{theorem}
\label{thm:puriBorderRank}
Let $\Omega$ be a tree. Then, $\bpsdrank_{\Omega} = \psdrank_{\Omega}$ and $\bpurirank_{\Omega} = \purirank_{\Omega}$.
\end{theorem}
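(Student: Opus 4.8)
The plan is to reduce both equalities to facts already in hand: the correspondence $\purirank_\Omega(\rho)\le r \Leftrightarrow \rho\in\QQCorr_\Omega(n,d,r)$ from \Cref{thm:QQCorr} (used with the trivial group action, which is external), the absence of a border-rank gap for unconstrained tree decompositions (\Cref{thm:NoBorderRanksUnconstrained}), and the compactness of the set of quantum channels (\Cref{cor:cptpConvSeq} together with \Cref{lem:cptpBounded}). The reason this machinery applies at all is that going through $\QQCorr_\Omega$ replaces an arbitrary purification $\rho_k = L_k L_k^\dagger$ — whose ancilla dimension could grow with $k$ — by data (a resource state $\ket{\psi_k}$ and local channels $\mathcal E_i^{(k)}$) living in spaces whose dimensions are bounded by $r$ and $d$ uniformly in $k$.

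First I would prove $\bpurirank_\Omega = \purirank_\Omega$. Take a sequence of $n$-partite psd matrices $\rho_k \to \rho$ with $\purirank_\Omega(\rho_k)\le r$; the case $\rho = 0$ is trivial, so assume $\rho\neq 0$. Since the trace is continuous and $\purirank_\Omega(c\sigma) = \purirank_\Omega(\sigma)$ for $c>0$ (rescale $L$ by $\sqrt c$ in $\sigma = L L^\dagger$), I may pass to $\rho_k/\tr(\rho_k)$ and assume $\tr(\rho_k) = \tr(\rho) = 1$. By \Cref{thm:QQCorr}, $\rho_k\in\QQCorr_\Omega(n,d,r)$, i.e.\ there are normalized states $\ket{\psi_k}$ with $\rank_\Omega(\ket{\psi_k})\le r$ and quantum channels $\mathcal E_1^{(k)},\dots,\mathcal E_n^{(k)}$ with $\rho_k = (\mathcal E_1^{(k)}\otimes\cdots\otimes\mathcal E_n^{(k)})(\ket{\psi_k}\bra{\psi_k})$. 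The set of normalized states with $\rank_\Omega\le r$ is bounded (it sits in the unit sphere) and, by \Cref{thm:NoBorderRanksUnconstrained}, closed, hence compact; so after passing to a subsequence $\ket{\psi_k}\to\ket{\psi}$ with $\ket{\psi}$ normalized and $\rank_\Omega(\ket{\psi})\le r$. Passing to a further subsequence and applying \Cref{cor:cptpConvSeq} to each of the $n$ families, I may assume $\mathcal E_i^{(k)}\to\mathcal E_i$, with $\mathcal E_i$ again a quantum channel because $\CPTP(d,d)$ is closed (\Cref{lem:cptpBounded}). Since the channels have operator norm $\le 1$, multilinearity and this uniform bound give joint continuity of $(\mathcal F_1,\dots,\mathcal F_n,X)\mapsto(\mathcal F_1\otimes\cdots\otimes\mathcal F_n)(X)$ on bounded sets, so $\rho_k = (\mathcal E_1^{(k)}\otimes\cdots\otimes\mathcal E_n^{(k)})(\ket{\psi_k}\bra{\psi_k})\to(\mathcal E_1\otimes\cdots\otimes\mathcal E_n)(\ket{\psi}\bra{\psi})$; as the left-hand side also converges to $\rho$, the limit equals $\rho$. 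Hence $\rho\in\QQCorr_\Omega(n,d,r)$ and \Cref{thm:QQCorr} yields $\purirank_\Omega(\rho)\le r$, which is part \ref{treeIV} of \Cref{thm:treeBorderRank}.

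Then $\bpsdrank_\Omega = \psdrank_\Omega$ follows by the diagonal embedding: given nonnegative tensors $T_k\to T$ with $\psdrank_\Omega(T_k)\le r$, the matrices $\rho_{T_k} = \Diag(T_k)$ are psd, converge to $\rho_T = \Diag(T)$, and satisfy $\purirank_\Omega(\rho_{T_k}) = \psdrank_\Omega(T_k)\le r$ by \Cref{rem:diagonalDec}; applying the purification case gives $\psdrank_\Omega(T) = \purirank_\Omega(\rho_T)\le r$. I expect the genuinely delicate point to be the uniform dimension bound mentioned above: without the $\QQCorr_\Omega$ reformulation one cannot control the ancilla dimension in the purifications, so the Bolzano–Weierstraß step cannot even be posed; a secondary, routine point is the joint continuity of the multilinear channel-application map, which needs precisely the uniform operator-norm bound supplied by \Cref{lem:cptpBounded}.
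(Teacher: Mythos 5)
Your proof is correct and follows essentially the same route as the paper: reduce to membership in $\QQCorr_{\Omega}(n,d,r)$ via \Cref{thm:QQCorr}, then extract convergent subsequences of the resource states (using \Cref{thm:NoBorderRanksUnconstrained}) and of the local channels (using \Cref{lem:cptpBounded}). The only deviations are cosmetic and in your favor: you make the normalization $\tr(\rho_k)=1$ explicit (which \Cref{thm:QQCorr} tacitly requires), and you derive the $\psdrank_{\Omega}$ case from the $\purirank_{\Omega}$ case via the diagonal embedding of \Cref{rem:diagonalDec} rather than rerunning the argument with POVMs and \Cref{thm:CQCorr} as the paper does.
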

\begin{proof}
We prove the statement only for $\purirank_{\Omega}$ as the case of $\psdrank_{\Omega}$ works similarly. Let $(\rho_k)_{k \in \mathbb{N}}$ be a sequence of psd matrices such that $\purirank_{\Omega}(\rho_k) \leq r$ and $\rho_k \to \rho$. To see that $\purirank_{\Omega}(\rho) \leq r$, by \Cref{thm:QQCorr} there exists a sequence of states $\ket{\psi_k}$ with $\rank_{\Omega}(\ket{\psi_k}) \leq r$ and a sequence of quantum channels $\mathcal{E}^{(k)}_i$ for every $i \in [n]$ such that
$$\rho_{k} = \left(\mathcal{E}_1^{(k)} \otimes \cdots \otimes \mathcal{E}_n^{(k)}\right)(\ket{\psi_k} \bra{\psi_k}).$$
Since the space of quantum states is compact (we have that $\Cbraket{\psi}{\psi} = 1$ for every $\ket{\psi}$), and since the space of cptp maps is compact by \Cref{lem:cptpBounded}, there exists a joint subsequence $k_\ell$ such that $\mathcal{E}_i \coloneqq \lim_{\ell \to \infty} \mathcal{E}_i^{(k_{\ell})}$ and $\ket{\psi} \coloneqq \lim_{\ell \to \infty} \ket{\psi_{k_{\ell}}}$, which implies that
$$\rho = (\mathcal{E}_1 \otimes \cdots \otimes \mathcal{E}_n)(\ket{\psi} \bra{\psi}).$$
Since $\rank_{\Omega} = \brank_{\Omega}$, we have that $\rank_{\Omega}(\ket{\psi}) \leq r$, which proves that $\purirank_{\Omega}(\rho) \leq r$.

The proof for the psd-rank similarly uses \Cref{thm:CQCorr} and the fact that every sequence of a POVM has a convergent subsequence that converges to a POVM by the Bolzano--Weierstraß Theorem. \end{proof}

\end{document}